\newcommand{\etal}{\emph{et al.} }
\def\1{\bm{1}}
\newcommand{\ve}{\@ifnextchar\bgroup{\velong}{{\bm{e}}}}
\newcommand{\velong}[1]{{\bm{#1}}}
\def\vp{{\bm{p}}}
\def\vq{{\bm{q}}}
\DeclareMathAlphabet{\mathsfit}{\encodingdefault}{\sfdefault}{m}{sl}
\SetMathAlphabet{\mathsfit}{bold}{\encodingdefault}{\sfdefault}{bx}{n}
\def\gM{{\mathcal{M}}}
\newcommand{\eat}[1]{}
\def\d{\mathrm{d}}
\newtheorem{theorem}{Theorem}[section]
\newtheorem{lemma}{Lemma}[section]
\newtheorem{claim}{Claim}[section]
\newtheorem*{question*}{Question}
\newtheorem{definition}{Definition}
\definecolor{WildStrawberry}{RGB}{255,67,164}
\definecolor{Dousha}{RGB}{47, 136, 67}
\definecolor{BlueBerry}{RGB}{44, 96, 189}
\title{On Robustness to $k$-wise Independence of Optimal Bayesian Mechanisms}
\author{Nick Gravin \and Zhiqi Wang}
\date{}
\newcommand{\be}{\begin{equation}}
\newcommand{\ee}{\end{equation}}
\newcommand{\beq}{\begin{equation*}}
\newcommand{\eeq}{\end{equation*}}
\newcommand{\R}{\mathbb{R}}
\newcommand{\Rev}{\textup{\textsc{Rev}}}
\newcommand{\eps}{\varepsilon}
\newcommand{\ind}[1]{\mathbb{I}\hspace{-0.1em}\left[\vphantom{\sum}#1\right]}
\newcommand{\AutoAdjust}[3]{\mathchoice{ \left #1 #2  \right #3}{#1 #2 #3}{#1 #2 #3}{#1 #2 #3} }
\newcommand{\Xcomment}[1]{{}}
\newcommand{\InParentheses}[1]{\AutoAdjust{(}{#1}{)}}
\newcommand{\InBrackets}[1]{\AutoAdjust{[}{#1}{]}}
\newcommand{\Ex}[2][]{\operatorname{\mathbf E}_{#1}\InBrackets{#2}}
\newcommand{\Exlong}[2][]{\operatornamewithlimits{\mathbf E}\limits_{#1}\InBrackets{#2}}
\newcommand{\Prx}[2][]{\operatorname{\mathbf{Pr}}_{#1}\InBrackets{#2}}
\newcommand{\eqdef}{\overset{\mathrm{def}}{=\mathrel{\mkern-3mu}=}}
\newcommand{\vect}[1]{\ensuremath{\mathbf{#1}}}
\newcommand\restr[2]{{
  \left.\kern-\nulldelimiterspace 
  #1 
  \vphantom{\big|} 
  \right|_{#2} 
  }}
\def\prob{\Prx}
\newcommand{\AR}[1][]{\textsf{AR}_{#1}} 
\newcommand{\opt}{\textsf{OPT}}
\newcommand{\dist}{\mathcal{F}}
\newcommand{\dists}{\bm{F}}
\newcommand{\disti}[1][i]{{\dist_{#1}}}
\newcommand{\pdf}{f}
\newcommand{\pdfi}[1][i]{{\pdf_{#1}}}
\newcommand{\uniform}{\texttt{Uni}}
\newcommand{\vecX}{\vec{X}}
\newcommand{\supp}{\mathop{\textsf{supp}}}
\newcommand{\mech}{\mathcal{M}}
\newcommand{\val}{v}
\newcommand{\vals}{\vect{\val}}
\newcommand{\vali}[1][i]{{\val_{#1}}}
\newcommand{\util}{u}
\newcommand{\utili}[1][i]{\util_{#1}}
\newcommand{\pay}{p}
\newcommand{\payi}[1][i]{{\pay_{#1}}}
\newcommand{\alloc}{x}
\newcommand{\alloci}[1][i]{{\alloc_{#1}}}
\newcommand{\bid}{b}
\newcommand{\bids}{\vect{\bid}}
\newcommand{\bidi}[1][i]{{\bid_{#1}}}
\newcommand{\virt}{\varphi}
\newcommand{\virti}[1][i]{\virt_{#1}}
\newcommand{\virtir}{\overline{\virt}}
\newcommand{\virtiri}[1][i]{\virtir_{#1}}
\newcommand{\myerson}{\textsf{Myer}}
\renewcommand{\d}{\mathrm{d}}
\renewcommand{\vec}{\vect}
\newcommand{\Lowerboundone}{\textsc{LB}_1}
\newcommand{\Lowerboundtwo}{\textsc{LB}_2}
\newcommand{\Lowerboundthree}{\textsc{QR}_{_{\textsc{LB}}}}
\newcommand{\ER}{ER}
\newcommand{\Qoneind}{Q_1^{ind}}
\newcommand{\Qtwoind}{Q_2^{ind}}
\newcommand{\Qone}{Q_1}
\newcommand{\hQone}{\widehat{Q}_1}
\newcommand{\Qtwo}{Q_2}
\newcommand{\Quan}{q}
\newcommand{\Quani}[1][i]{\Quan_{#1}}
\newcommand{\ubqi}[1][i]{\overline{\Quan}_{#1}}
\newcommand{\vubq}{\overline{\vect{\Quan}}}
\newcommand{\distind}{\dists_{ind}}
\newcommand{\distfam}{\mathfrak{F}}
\newcommand{\distowi}{\mathfrak{F}_{1w}}
\newcommand{\distpwi}{\mathfrak{F}_{2w}}
\newcommand{\disttwi}{\mathfrak{F}_{3w}}
\newcommand{\distkwi}[1][k]{\mathfrak{F}_{#1\text{-}w}}
\newcommand{\reserve}{r}
\newcommand{\ww}{{r_{ex}}}
\newcommand{\ttau}{{\tau_{ex}}}
\newcommand{\valo}{\val_{0}}
\newcommand{\mr}{\reserve^{*}}
\newcommand{\mri}[1][i]{\reserve^{*}_{#1}}
\newcommand{\kk}{\gamma}
\newcommand{\pp}{\overline{p}}
\newcommand{\Payment}{\mathcal{P}}
\newcommand{\probi}[1][i]{p_{#1}}
\newcommand{\momone}{s}
\newcommand{\moww}{s_{0}}
\newcommand{\hmoww}{\widehat{s}}
\newcommand{\sumtau}{s(\tau)}
\newcommand{\tailrev}{\texttt{Tail}}
\newcommand{\corerev}{\texttt{Core}}
\newcommand{\event}{\mathcal{E}}
\newcommand{\exvi}[1][i]{w_{#1}}
\newcommand{\qi}[1][i]{q_{#1}}
\newcommand{\revi}[1][i]{\textsc{Rev}_{#1}}
\newcommand{\probone}{p_1}
\newcommand{\probtwo}{p_2}
\newcommand{\alfa}{\alpha}
\newcommand{\ttt}{t}
\newcommand{\mm}{m}
\begin{document}

\maketitle
\begin{abstract}
This paper reexamines the classic problem of revenue maximization in single-item auctions with $n$ buyers under the lens of the robust optimization framework. The celebrated Myerson's mechanism is the format that maximizes the seller's revenue under the prior distribution, which is mutually independent across all $n$ buyers. 
As argued in a recent line of work (Caragiannis \etal 22), (Dughmi \etal 24), mutual independence is a strong assumption that is extremely hard to verify statistically, thus it is important to relax the assumption. 

While optimal under mutual independent prior, we find that Myerson's mechanism may lose almost all of its revenue when the independence assumption is relaxed to pairwise independence, i.e., Myerson's mechanism is not pairwise-robust.
The mechanism regains robustness when the prior is assumed to be 3-wise independent. 
In contrast, we show that second-price auctions with anonymous reserve, including optimal auctions under i.i.d. priors, lose at most a constant fraction of their revenues on any regular pairwise independent prior.
Our findings draw a comprehensive picture of robustness to $k$-wise independence in single-item auction settings.    
\end{abstract}

\section{Introduction}
\label{sec: intro}

The notion of approximation is central in the algorithmic research on Bayesian mechanism design. Since the seminal paper by Hartline and Roughgarden~\cite{HartlineR09}, it is common (e.g., in~\cite{ChawlaHMS10, RoughgardenTQ12, AlaeiFHH13, FeldmanGL15, BrustleCWZ17, FengHL19, jin2020tight, FeldmanGGS22}) to study how well a certain mechanism approximates optimum for a given prior distribution. Perhaps one of the most compelling motivations (see the philosophy of approximation in~\cite{HartlineMD}) for employing a simple solution in place of a highly customized optimum is \emph{robustness}. Many approximately optimal mechanisms are known to be robust to out-of-model considerations, such as risk aversion, after-market effects, timing, collusion, etc. E.g., sequential posted pricing is robust to potential collusion among bidders, while second price auction is not. 
Even more basic robustness considerations concern the statistical modeling of the Bayesian prior. The existing literature mostly
assumes that the prior distribution is independent across agents and/or items and does not investigate if the results are robust to this mutual independence assumption.

On the other hand, the area of robust mechanism design offers a counterpoint to the Bayesian mechanisms, which have perfect knowledge about the prior distribution $\dists$. Specifically, robust mechanisms have ambiguity set $\distfam$ for the possible prior distribution and their goal is to identify the optimal solution in the worst-case over all priors $\dists\in\distfam$. One such example is the correlation-robust framework~\cite{carroll2017robustness, GravinL18}, in which only marginal distributions $(F_i)_{i\in[m]}$ are known, but no other assumption is made about the joint prior $\dists$. 
The ambiguity sets considered in the literature are typically quite large\footnote{E.g., for $n=10$ marginal distributions with support size $2$, there are only $20$ linear constraints in the vector space of dimension $2^{10}$ to describe the ambiguity set in the correlation-robust framework.}. These modeling choices lead to extreme worst-case prior distribution $\dist^*\in\distfam$, such as perfect positively correlated prior $\dist^*$ in~\cite{BeiGLT19}, and generally to rather pessimistic results. This makes most of the existing robust mechanism design models difficult to use in scenarios with no or only weak correlations and thus not applicable to most of the existing Bayesian settings.

A notable exception is a recent work by Caragiannis et al.~\cite{caragiannis2022relaxing}. They observed that the mutual independence can be relaxed to much weaker pairwise-independence assumption in some existing Bayesian mechanism design results, such as Bayesian monopolist problem~\cite{BabaioffILW20}, $O(1)$-prophet inequality, and $O(1)$-approximate sequential posted pricing.
This observation inspired us to ask: 

\begin{question*}
To what degree can the mutual independence assumption be relaxed in a single-item auction setting, and how robust are optimal and approximately optimal mechanisms to this assumption?
\end{question*}
To illustrate the importance of relaxing the mutual independence assumption with a concrete example, imagine that a platform observes a large amount of data from past ad auctions. 
There could be tens to thousands of different bidders with only a small fraction participating in each auction. Let us assume that the platform has enough data to learn the value distribution of each individual bidder with high confidence. The next common step is to look at the covariance matrix.
For the sake of example, let us further assume that the covariance matrix perfectly fits our hypothesis that bidders have independent values. Even in this hypothetical situation the platform can only confirm that the joint prior distribution is \emph{pairwise independent} rather than \emph{mutually independent}.

\subsection{Model: $k$-wise Independence and Robustness} 
\label{sec: model}
We wish to understand how robust are various Bayesian mechanisms to precise modeling of the prior distributions and specifically to the strong probabilistic assumption of \emph{mutual independence}. We consider the classic and well-studied setting of the single-item auction with $n$ bidders whose values are distributed according to known marginal distributions $(\disti)_{i\in[n]}$.
While the joint prior $\dists$ may appear to be mutually independent distribution $\distind=\prod_{i\in[n]}\disti$ from a limited number of observations, it is impossible to confirm by doing pairwise statistical analysis such as computing correlation coefficients. We assume instead that the actual prior distribution $\dists$ belongs to the family of pairwise independent distributions $\dists\in\distpwi$ with known single-dimensional marginals $\disti$ for each $i\in[n]$. However, we do not know which $\dists\in\distpwi$ it is. The mutually independent distribution $\distind\in\distpwi$ plays a special role in $\distpwi$ as an obvious comparison point for any $\dists\in\distpwi$. Ideally, we would like to have a truthful mechanism $\mech$ to 
perform well on any $\dists\in\distpwi$. That is, the mechanism's expected revenue $\Ex[\vals\sim\dists]{\Rev(\vals)}$ should be within a constant factor from $\Ex[\vals\sim\distind]{\Rev(\vals)}$ for any number of agents $n$, any set of marginals $(\disti)_{i\in[n]}$ and any pairwise independent prior $\dists\in\distpwi$. 
Formally, 
\begin{definition}
\label{def: pwi}
A mechanism $\gM$ is \emph{pairwise-robust}, if there exists a constant $c$ such that for all pairwise independent joint distribution $\dists\in\distpwi$ the expected revenues 
\begin{equation*}
    c\cdot \gM\left(\dists\right) \ge 
    \gM\left(\distind\right ).
\end{equation*}
\end{definition}
One can naturally extend Definition~\ref{def: pwi} to a more refined family of $3$-wise independent prior distributions $\disttwi$ and even consider the whole hierarchy of the families of $k$-wise independent priors for $2\le k\le n$: $\distpwi\supset\disttwi\supset\ldots\supset\distkwi\supset\ldots\supset\distkwi[n]=\{\distind\}$. 
\begin{definition}[$k$-wise-robustness]
\label{def: 3wi}
A mechanism $\gM$ is \emph{$k$-wise-robust}, if there exists a constant $c(k)$ such that for all $k$-wise independent joint distribution $\dists\in\distkwi$ the expected revenues 
\begin{equation*}
    c(k)\cdot \gM\left(\dists\right) \ge 
    \gM\left(\distind\right ).
\end{equation*}
\end{definition}
We say that a mechanism $\gM$ is \emph{over-optimizing} for the mutually independent prior, if $\gM$ is not pairwise-robust. If $\gM$ is not $(k-1)$-wise-robust for $k\ge 3$, then we say that $\gM$ is \emph{over-optimizing up to degree $k$}. If in addition (to being not $(k-1)$-wise-robust) $\gM$ is $k$-wise-robust, we say that the mechanism's \emph{degree of independence-robustness} is $k$. 

We include in the hierarchy of $k$-wise independent families the family $\distowi$ of distributions with given marginals $(\disti)_{i\in[n]}$, but with no restrictions on the correlation structure of the joint prior $\dists\in\distowi$:
$\distowi\supset\distpwi\supset\ldots\supset\distind$. The family $\distowi$ is central in the correlation-robust framework~\cite{carroll2017robustness, GravinL18} that aims to identify optimal mechanisms for the worst-case prior $\dists$ over the ambiguity set $\dists\in\distowi$, or approximately optimal mechanisms~\cite{BeiGLT19, BabaioffFGLT20}. 
One can also consider $1$-wise-robustness of a mechanism $\gM$ instead of approximate correlation-robustness (in the former case the benchmark is $\gM(\distind)$ and in the latter case the benchmark is $\max_{\gM'}\min_{\dists\in\distowi}\gM'(\dists)$).  

\paragraph{Robustness to $k$-wise independence in other contexts.}
Markov, Chebyshev, and Chernoff upper bounds on the tail probability of the sum 
of $n$ bounded random variables $\Prx[\vecX\sim\dists]{\sum_{i\in[n]}X_i-
\Ex{\sum_{i\in[n]}X_i}\le\delta}$ are fundamental tools in probability theory.
The Markov inequality only needs information about the marginal distributions of 
the random variables $\vecX=(X_i)_{i\in[n]}$ and, hence, can be viewed as $1$-wise robust and hold for priors $\dists\in\distowi$. The Chebyshev inequality, in addition to the information about marginal distributions, also requires that $\vecX\sim\dists$ is pairwise independent $\dists\in\distpwi$, i.e., the Chebyshev bound is pairwise-robust and holds for priors $\dists\in\distpwi$. The Chernoff bound requires mutual independence of $\dists$ and fails to hold for pairwise-independent priors $\dists\in\distpwi$. Thus, in our language, it over-optimizes for the mutually independent distribution $\distind$. On the other hand, there are versions of the Chernoff bound, e.g.,~\cite{SchmidtSS93}, that relax the independence assumption to $k$-wise independence where the tail bound rapidly decreases with $k$.

The $k$-wise independence plays an important role in pseudorandom generators as a much easier property to achieve than mutual independence. Furthermore, 
$k$-wise independence for a constant $k$ can be tested with only polynomially many samples~\cite{AlonAKMRX07, RubinfeldX10}. Specifically, the testing algorithm can distinguishing the case $\dists\in\distkwi$ from the 
case when the total variation distance $d_{\texttt{TV}}\left(\dists,\distkwi\right)\ge\eps$ is at least $\eps$ with polynomial in the maximum support size $\max_i|\supp(\disti)|$ and $\frac{1}{\eps}$ number of samples. Although, the polynomial degree grows linearly with $k$.

In the mechanism design context, a recent work~\cite{caragiannis2022relaxing} shows that in certain Bayesian mechanism design settings such as monopolist problem~\cite{BabaioffILW20}, sequential posted pricing and single item prophet inequality the mutual independence assumption can be relaxed to pairwise independence. I.e., those mechanisms are pairwise-robust. The follow-up work~\cite{dughmi2023limitations} shows pairwise non-robustness of the multi-choice matroid prophet inequality and contention resolution schemes. These negative results are caused by the complex combinatorial structure of general matroid set systems. 
On the other hand, the sequential posted pricing is known to be robust to many different out-of-model phenomena. E.g., it is robust to collusion among bidders, timing effects, and according to~\cite{BeiGLT19} is correlation-robust (not to be confused with $1$-wise robustness).

\subsection{Our results}
\label{sec: results}
\begin{enumerate}
    \item Given the successful stories of relaxing mutual to pairwise independence in~\cite{BabaioffILW20} for the monopolist problem and in~\cite{caragiannis2022relaxing} for sequential posted pricing in single-item auction, it is natural to conjecture that Myerson's mechanism is as well pairwise-robust. We were rather surprised to find a counter-example to this conjecture. Specifically, we construct an instance of the pairwise independent joint prior $\dists\in\distpwi$ with nonidentical regular marginal distributions $(\disti)_{i\in[n]}$ such that $\Ex[\vals\sim\distind]{\myerson(\vals)}=\Omega(n)\cdot\Ex[\vals\sim\dists]{\myerson(\vals)}.$ 
    \item We show that Myerson's mechanism is $3$-wise robust for any (even irregular) set of marginals $(\disti)_{i\in[n]}$. Hence, we establish that the independence-robustness degree of Myerson's mechanism is exactly $3$.
    \item We consider Myerson's auction, i.e., a Myerson's mechanism in the symmetric environment with identical marginals $\disti=\dist$ for each $i\in[n]$, which is equivalent to second-price auction with monopoly reserve $\AR(\mr)$. We prove that Myerson's auction for regular marginal distribution $\dist$ is pairwise-robust with a constant $2.63$. I.e., we establish a separation in independence-robustness degree between symmetric and nonsymmetric environments. 
    \item Furthermore, we consider the family of commonly used second-price auctions with anonymous reserve $\AR(\reserve)$ in the nonsymmetric environment. We prove that $\AR(\reserve)$ with any reserve price $\reserve$ is pairwise-robust for any set of regular marginals $(\disti)_{i\in[n]}$. This result highlights the advantage of auctions\footnote{The textbook by Krishna~\cite{Krishna02} differentiates between auctions-like institutions (must be anonymous) and general mechanism-like institutions.}, which are pairwise-robust, over optimal mechanisms, which are only $3$-wise robust, in the single-item environment. 
\end{enumerate}
Our results are summarized in the following table. AR is short for the second price with anonymous reserve, SPP is short for the sequential posted price, and AP is short for the anonymous price.

\begin{table}[h]
\centering
\begin{tabular}{|>{\centering\arraybackslash}m{3cm}|>{\centering\arraybackslash}m{2cm}|>{\centering\arraybackslash}m{2cm}|>{\centering\arraybackslash}m{2cm}|>{\centering\arraybackslash}m{2cm}|>{\centering\arraybackslash}m{2cm}|}
\hline
\textbf{Mechanism} & Myerson &  Myerson $(\disti = \dist)$ & AR & SPP & AP\\ \hline
\textbf{Degree of Independence - Robustness} &  3,~$*$ & 2,~$*$& 2,~$*$ & 2,~\cite{caragiannis2022relaxing}& 2,~\cite{caragiannis2022relaxing} \\ \hline
\end{tabular}
\caption{Degree of independence-robustness for various mechanisms. $\disti = \dist$ denotes that all marginals are identical. The sign ``$*$" denotes that the result is obtained in this work.}
\end{table}

%
%
%
%
%
In his book~\cite{HartlineMD}, Hartline comments on the philosophy of approximation in mechanism design:``... optimal mechanisms may be overly influenced by insignificant-seeming modeling choices. ... a market analysis is certainly going to be noisy and then exactly optimizing a mechanism to it may “overfit” to this noise''. We focus on the basic setting of revenue maximization for single-item auctions and demonstrate that the optimal mechanism may be highly over-fitting even with perfectly inferred marginal distributions to the statistically unverifiable assumption of mutual independence, while ``simple'' approximate solutions, such as $\AR(\reserve)$ and sequential posted pricing, are pairwise-robust. I.e., our notion of $k$-wise robustness brings a new quantifiable measure of how optimal mechanisms can be inferior to robust approximate solutions without adding any out-of-model elements such as, e.g., timing effects or possible collusion among bidders.
%
%

\paragraph{Techniques.} The important part of our results is an example of pairwise-independent distribution $\dists$ with $\Omega(n)$ separation between $\myerson(\dists)$ and $\myerson(\distind)$. While verifying its correctness is rather straightforward, it was not easy to construct. We are not aware of any work in mechanism design that relies specifically on $3$-wise independence but fails to work for pairwise independent priors. Thus, we had to develop new approach to proving $3$-wise robustness of the Myerson's mechanism.

Our positive results on pairwise-robustness for $\AR(\reserve)$ use similar integral representations to Caragiannis et al.~\cite{caragiannis2022relaxing}. However, they only consider simple events $\event_1(\tau)$ that at least one bidder exceeds a threshold $\tau$ in the sequential posted pricing, while $\AR(\reserve)$ has events $\event_2(\tau)$ that at least two bidders are above the threshold $\tau$. Handling $\event_2$ poses a significantly bigger challenge than $\event_1$. Indeed, Caragiannis et al. could prove a constant factor gap between probabilities in mutually and pairwise independent cases for any event $\event_1(\tau)$, which is no longer true for $\event_2(\tau)$. Thus, we have to carefully compare different parts of respective integral representations for mutually independent and pairwise independent priors. 

\subsection{Related Work}
\label{sec: related}
The body of work on revenue maximization in single-item auctions is too large to describe within the scope of our paper. We mention a few papers in the Bayesian framework that are most closely related to our setting.
On the computational side, Papadimitriou and Pierrakos~\cite{papadimitriou2011optimal} proved NP-hardness of finding revenue optimal mechanism under ex-post individual rationality constraint, while Dobzinski et al.~\cite{dobzinski2011optimal} showed that optimal randomized mechanism could be found polynomial in the prior's support size, and a simple and approximately optimal lookahead auction by Ronen and Saberi~\cite{RonenS02} is easy to compute. When the prior $\distind$ is independent, a few papers give multiple approximation guarantees on the revenue of simple auction formats such as sequential posted-pricing, second-price auction with an anonymous reserve, anonymous pricing, and second-price auction with monopoly reserves to the optimal revenue of Myerson's auction~\cite{CorreaSZ19,DuttingFK16,HartlineR09,jin2020tight}.
Finally, there is a significant body of work on optimization from samples initiated in~\cite{ColeR14} that aims to relax dependency of the mechanism on the precisely given Bayesian prior. They still assume that the the prior is a product distributions $\dists=\prod_{i\in[n]}\disti$,
but the marginals $(\disti)_{i\in[n]}$ are not explicitly given, but can be accessed via a limited number of independent samples.


Our model falls under the umbrella of robust mechanism design, which has received significant attention in recent years (see e.g.,~\cite{BabaioffFGLT20,BandiB14,BeiGLT19,bergemann2011robust,caragiannis2022relaxing,carroll2015robustness,carroll2016robust,DuttingRT19,garrett2014robustness,GravinL18}).
This line of work generally relies on a robust optimization framework, tracing back to the classic work in Operation Research by Scarf~\cite{Scarf1958minmax}. 
I.e., the objective is to optimize the mechanism's expected performance in the worst case across all prior distributions within a given ambiguity set $\distfam$ of possible priors $\dists\in\distfam$.
It is typical in this work to consider large ambiguity sets $\distfam$ 
with minimal information about marginal distributions described, e.g., by 
the distribution's support or a constant number of moments. The most 
similar to our setting is the correlation-robust 
model~\cite{carroll2017robustness, GravinL18} that assumes complete knowledge of the marginals, but no restrictions on the correlation 
structure. The papers~\cite{carroll2017robustness, GravinL18, BabaioffFGLT20} study a different setting of multi-item monopoly problem, or in case of~\cite{DuttingRT19} contract design, while \cite{BeiGLT19,HeL22} look at similar to us revenue maximization single-
item auction. The paper~\cite{BeiGLT19} shows that sequential posted pricing is approximately optimal in the correlation-robust framework (not to be confused with $1$-wise robust), while~\cite{HeL22} examine the symmetric case where the number of bidders $n$ goes to infinity.

\section{Preliminaries}
\label{sec: prelim}
We consider the following standard Bayesian single-item auction. The auctioneer sells one item to $n$ bidders, where each bidder $i\in[n]$ has a private value $\vali$ for the item drawn from a known marginal distribution $\vali\sim\disti$. 
We also use $\disti(\tau)=\Prx{\vali\le \tau}$ to denote cumulative density function of $\disti$ and define quantile function
$\Quani(\tau)\eqdef \Prx{\vali\ge \tau}= 1-\disti(\tau)$.
We denote the probability density function of $\disti$ as
$\pdfi(x)=\frac{\mathrm{d}}{\mathrm{d}x}\disti(x)$. 
The seller's goal is to maximize her expected revenue over truthful auctions. By the revelation principle, we assume that the auctioneer solicits bids  
$\bids=(\bidi)_{i\in[n]}$ and then decides on the allocation 
$(\alloci(\bids))_{i\in[n]}$ and payments 
$(\payi(\bids))_{i\in[n]}$. The auction is truthful, if each bidder $i\in[n]$ maximizes her utility 
$\utili(\vali,\bidi)=\vali\cdot\alloci(\bidi)-\payi(\bidi)$ by 
bidding truthfully $\bidi=\vali$ and receives non-negative utility 
$\utili(\vali,\vali)\ge 0$. The auctioneer's revenue is $\Ex[\vals]{\sum_{i\in[n]}\payi(\vals)}$.

The seminal Myerson's mechanism~\cite{myerson1981optimal} maximizes the auction's expected revenue $\myerson(\distind)\eqdef
\Ex[\vals\sim\distind]{\sum_{i\in[n]}\payi(\vals)}$ in the case when bidders' values are \emph{mutually independent} where $\distind=\prod_{i\in[n]}\disti$. 
The mechanism allocates the item to the bidder with the highest non-negative \emph{virtual value} $\virti(\vali)$ and the winner has to pay the threshold bid that would still win the auction. The virtual value is defined as
$\virti(\vali)=\vali-\frac{1-\disti(\vali)}{\pdfi(\vali)}$ for \emph{regular} distributions, i.e., $\virti(\vali)$ is monotone increasing in $\vali$. 
For irregular distribution, an ironing procedure is applied and we have the ironed virtual value $\virtiri(\vali)$ so that $\virtiri(\vali)$ is monotone increasing in $\vali$.
In the I.I.D. case, when agents' values are drawn from the same regular distribution $\disti=\dist$, Myerson's mechanism is the same as the second-price auction with anonymous reserve $\reserve\in\R_{\ge 0}$ (denoted as $\AR(\reserve)$), where usually\footnote{Certain prior distributions $\dist$ may have support $[\underline{\val},\overline{\val}]$ with $\virt(\underline{\val})>0$, in which case the reserve $\reserve=\underline{\val}$.} $\virt(\reserve)=0$. 
The $\AR(\reserve)$ auction allocates the item to the highest bidder $i$ with the bid above the reserve $\bidi\ge \reserve$ and charges her either the second highest bid or the 
reserve price, whichever is larger. We use $\AR^{ind}(\reserve)$ to denote the auction's expected revenue on the mutually independent prior $\distind$.
It is also known that
$\AR^{ind}(\reserve)$ is a constant approximation to $\myerson$ for mutually independent prior $\distind$
for any nonidentical marginal regular distributions $(\disti)_{i\in[n]}$ \cite{jin2020tight}. 

\paragraph{Regularity \& Ironing.}
For the most part of this paper we will require that all marginal distributions  
$(\disti)_{i\in[n]}$ are regular. This assumption makes sense, as the revenue gap between Myerson's mechanism and $\AR(r)$ may be arbitrarily large for irregular distributions $(\disti)_{i\in[n]}$. 

Another complication with Myerson's mechanism and ironed distributions is that the mechanism in this case is often not unique. E.g., consider the case of irregular identical marginals $\disti=\dist$ such that $\dist$ has to be ironed on an interval $[a,b]$ above the monopoly reserve $r=\virt^{-1}(0)$. Then all bidders with values $\vali\in[a,b]$ have the same ironed virtual value and the mechanism might need to break ties among several of them with non-zero probability. 
It can break ties in many ways, e.g., uniformly at random or lexicographically. 
In fact, the optimal mechanism may even use
bidders' values to break ties. E.g., when the marginals are the same equal revenue distribution $\dist$ supported on $[1,10]$, the $\AR(\mr)$ with monopoly reserve $\mr=\virti^{-1}(0)=1$ is an optimal mechanism (here all values $\vali\in[1,10)$ give the same virtual value $\virt(\vali)=0$, but the tie-breaking is such that the bidder $i$ with the highest value $\vali$ wins the auction).  
When the prior distribution $\distind$ is mutually independent, all versions of Myerson's mechanism give the same expected revenue. However, this is no longer the case for every joint pairwise-independent prior $\dists\in\distpwi$.
The same issue arises for extremal regular marginals, such as equal revenue distributions and/or marginal distributions with atoms. Thus, when discussing 
$k$-wise-robustness of Myerson's mechanism, we shall allow the auctioneer to pick a specific most convenient tie-breaking rule in Myerson's mechanism. On the other hand, the auction format is completely fixed for the second-price auction $\AR(\reserve)$ with anonymous reserve, i.e., tie-breaking does not matter in this case.
\section{Independence-Robustness Degree of Myerson's Mechanism}
Following the textbook~\cite{Krishna02} by Krishna, we distinguish between \emph{Myerson's mechanism} for nonidentical agents and \emph{Myerson's auction} for i.i.d. bidders. According to Krishna, the former cannot be called an auction as it does not satisfy the anonymity requirement and should instead be placed in a broader class of mechanism-like institutions. The latter is equivalent to the broadly used auction format of the second price auction with anonymous reserve $\AR$. This section shows that the robustness degree of independence for Myerson's mechanism is exactly $3$. 
We first show that Myerson's mechanism is not pairwise-robust even for regular marginal distributions in the next Section~\ref{sec: nega}. Then we prove in Section~\ref{sec: 3wi} that Myerson's mechanism is $3$-wise-robust for any (even irregular) marginals $(\disti)_{i\in[n]}$. 
\subsection{Myerson's Mechanism is not Pairwise-Robust}
\label{sec: nega}
We show here that Myerson's mechanism is over-optimizing for the independent prior $\distind$.
\begin{theorem}
Myerson's mechanism is not pairwise-robust for regular marginals $(\disti)_{i\in[n]}$.
\end{theorem}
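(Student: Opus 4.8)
The plan is to construct an explicit pairwise-independent prior $\dists\in\distpwi$ on $n$ bidders with regular (non-identical) marginals, such that $\myerson$ extracts $\Omega(n)$ times more revenue on the mutually independent product $\distind$ than on $\dists$. The governing intuition is that Myerson's mechanism, being obsessed with selling to the single highest virtual-value bidder, is vulnerable to a correlation structure that is invisible to second moments but conspires so that whenever one bidder has a high value, the virtual values of all bidders are simultaneously negative — except in a tiny-probability event. I would first fix the marginals: let bidder $i$ have a regular distribution $\disti$ (something like an equal-revenue-type or two-point-ish distribution scaled so bidder $i$'s "interesting" value level is around $2^i$, giving geometrically separated value scales). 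The point is that on $\distind$, Myerson earns $\Theta(1)$ from \emph{each} scale independently (each bidder contributes a constant to the revenue from its own range), so $\myerson(\distind)=\Omega(n)$.

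Next I would design the joint distribution $\dists$. The key device is a single shared "coordinator" random variable — e.g.\ draw $j\in[n]$ with probabilities $\propto$ something, and have bidder $j$ take its high value while \emph{everyone else is pushed to a low value}. To make this pairwise independent one must be careful: the naive "only one bidder is ever high" coupling is highly negatively correlated, not pairwise independent. So the construction needs extra randomization layered on top — a standard trick is to use a $\pm1$ pairwise-independent family (e.g.\ built from a random linear functional over $\mathbb{F}_2$, or Rademacher XOR codes) to decide, for each bidder, whether it is "on" or "off", in a way where each bidder individually still has the correct marginal and each \emph{pair} is independent, yet the global event "two bidders simultaneously in their high range" has probability $O(1/n)$ or smaller. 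One then checks: (i) each bidder's marginal is exactly $\disti$ — straightforward by construction; (ii) every pair $(\vali,\vali[j])$ is independent — this is where the algebraic pairwise-independent gadget does the work; (iii) under $\dists$, with probability $1-O(1/n)$ at most one bidder has a value in a range where its virtual value is positive, so Myerson sells only to that one bidder, earning $O(1)$ per realization, whence $\myerson(\dists)=O(1)$, while on the low-probability "collision" event the revenue is at most the max value $\poly(n)$, contributing only $O(1/n)\cdot\poly(n)=o(\myerson(\distind))$ if scales are chosen so the collision probability beats the value growth.

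The main obstacle — and the part the authors themselves flag as "not easy to construct" — is simultaneously satisfying pairwise independence, regularity of every marginal, and the near-deterministic "at most one high bidder" behavior of the joint law, while keeping the collision-event revenue under control relative to the $\Omega(n)$ benchmark. Pairwise independence is a genuine constraint: it forces $\Ex{\vali\vali[j]}=\Ex{\vali}\Ex{\vali[j]}$ for all $i\neq j$, so one cannot make the high values of different bidders \emph{perfectly} mutually exclusive; the construction must allow collisions but suppress their probability. Balancing the value scales $2^i$ (needed so $\myerson(\distind)$ accumulates linearly) against the collision probability (needed so $\myerson(\dists)$ stays constant) is the delicate point; I would calibrate the marginals so that bidder $i$'s value, when "high", is around $c^i$ for a constant $c$, the collision probability is $O(c^{-n})$ or at least $o(1/n)$ after the probability-weighted max-value estimate, and then verify the virtual-value sign analysis bidder-by-bidder. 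Once the instance is in hand, verifying $\Ex[\vals\sim\distind]{\myerson(\vals)}=\Omega(n)\cdot\Ex[\vals\sim\dists]{\myerson(\vals)}$ reduces to two routine revenue computations, so I would present the construction first and relegate these to direct calculation.
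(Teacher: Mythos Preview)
Your proposal has a genuine gap: the ``accumulation across geometric scales'' construction cannot produce the desired separation. Take the natural instantiation, two-point marginals where bidder $i$ has value $c^{i}$ (virtual value $c^{i}$) with probability $p_i=c^{-i}$ and value $0$ otherwise. On \emph{any} joint law, Myerson's payment is exactly $c^{i^{*}}$ where $i^{*}=\max\{i:\vali=c^{i}\}$: the winner's monopoly reserve is already $c^{i^{*}}$, and if a second bidder is high the threshold is again $c^{i^{*}}$. But pairwise independence alone forces
\[
\Prx{i^{*}=j}\;=\;p_j-\Prx{\vali[j]=c^{j}\text{ and }\exists\,k>j:\vali[k]=c^{k}}\;\ge\;p_j-\sum_{k>j}p_j\, p_k\;=\;c^{-j}\Big(1-\tfrac{c^{-j}}{c-1}\Big),
\]
whence $\Ex{c^{i^{*}}}\ge\sum_{j=1}^{n}\big(1-\tfrac{c^{-j}}{c-1}\big)=n-O(1)$ for \emph{every} $\dists\in\distpwi$. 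So no pairwise-independent coupling of these marginals can make $\myerson(\dists)=o(n)$, regardless of which XOR/Rademacher gadget you invoke (and those gadgets natively produce $\ber(1/2)$ variables, not $\ber(c^{-i})$). Your step (iii) --- ``Myerson sells only to that one bidder, earning $O(1)$ per realization'' --- is precisely where this fails: when one bidder is high it still pays its reserve $c^{i^{*}}$, not $O(1)$. The same obstruction bites whenever the $\Omega(n)$ benchmark comes from summing $\Theta(1)$ contributions across many bidders; indeed this union-bound-plus-pairwise calculation is essentially the engine behind the paper's later $3$-wise robustness proof.

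The paper's construction has a different architecture, and that architecture is the missing idea. There is a single dominant bidder whose value is always at least $n$, so $\myerson(\distind)\ge n$ comes from that one bidder alone; the remaining $n$ identical equal-revenue bidders exist only to determine the dominant bidder's \emph{payment} in Myerson's mechanism (or to steal the item and pay~$1$). The pairwise-independent joint is hand-built --- not drawn from a generic combinatorial family --- so that ``some small bidder is at its top value'' almost always coincides with ``the dominant bidder is below its top''. Each small bidder is individually independent of the dominant bidder, yet the higher-order event ``dominant bidder at top and at least one small bidder at top'' has probability only $1/n^{2}$, which is what collapses the expected payment on $\dists$ to $O(1)$.
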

\begin{proof} 
We construct an example with $n+1$ bidders as follows. 
Values of each bidder $i\in[n+1]$ are distributed according to slightly perturbed \emph{Equal Revenue} distributions $\ER[\underline{\val},\overline{\val}]$ supported on $[\underline{\val},\overline{\val}]$. 
Recall that cumulative density function $\dist(x)$ of $\ER[\underline{\val},\overline{\val}]$ 
satisfies $s\cdot \Prx[\val]{\val\ge s}=s\cdot(1-\dist(s))=\underline{\val}\cdot 1$ for any $s\in[\underline{\val},\overline{\val}]$;
it means that $\ER$ has a point mass of weight $\underline{\val}/\overline{\val}$ at $\val = \overline{\val}$ and continuous distribution with the probability density function $f(s)=\underline{v}/s^2$ for $s\in[\underline{\val},\overline{\val})$. It is a regular distribution with virtual value $\virt(\val)=\val-\frac{\underline{\val}/\val}{\underline{\val}/\val^2}=0$ for
$\val\in[\underline{\val},\overline{\val})$ and $\virt(\overline{\val})=\overline{\val}$.

In our example the first $n$ bidders have the same marginal distribution $\ER[\frac{1}{n},1]$. The last bidder's marginal distribution is a perturbed $\ER[n,n^2]$. Namely, $\vali[n+1]=\eps+
\val$, where $\val\sim\ER[n,n^2]$ and $\eps>0$ is a very small number. The virtual value of the last bidder is $\virti[n+1](\vali[n+1])=\eps>0$ for $\vali[n+1]\in[n+\eps,n^2+\eps)$, and 
$\virti[n+1](n^2+\eps)=n^2+\eps$, i.e., $\disti[n+1]$ is also a regular distribution.

Myerson's mechanism allocates to one of the first $n$ bidders if 
and only if at least one of them has value $\vali=1$ and the last bidder's value is strictly smaller than $\vali[n+1]< n^2+\eps$. Otherwise, the last bidder wins the auction. The seller gets a payment smaller than or equal to $1$ in the former case; and in the latter case, she either collects the minimal price $\payi[n+1]=n+\eps$ from the $(n+1)$-th bidder when the first $n$ bidders have values all strictly smaller than $1$ or gets $\payi[n+1]=n^2+\eps$ when at least one of the first $n$ bidders has value $\vali=1$. The Myerson auction is optimal for the mutually independent prior distribution $\distind$. Hence, it extracts at least as much revenue as always selling the item to the last bidder at $n+\eps$ price. I.e., $\myerson(\distind)\ge n$.

Next, we describe a pairwise-independent distribution $\dists\in\distpwi$ such that the Myerson auction achieves at most a constant revenue $O(1)$. The distribution $\dists$ is as follows:
\begin{description}
\item[with probability $\frac{1}{n^2}$:] $\vali[n+1] = n^2+\eps, \vali = 1,$ for all $i\in[n]$. The total payment is $\sum_{i\in[n+1]}\payi=n^2+\eps$.
\item[with probability $\frac{1}{n} - \frac{1}{n^2}$:] $\vali[n+1] = n^2+\eps$,  draw independently $\vali\sim (\disti ~|~\vali<1)$ for $i\in[n]$. The total payment is $\sum_{i\in[n+1]}\payi=n+\eps$.
\item[with probability $1 - \frac{1}{n}$:] draw  
$\vali[n+1]\sim(\disti[n+1] ~|~\vali[n+1]<n^2+\eps)$; choose one bidder $i^*\in[n]$ uniformly at random and let $\vali[i^*]=1$; draw independently $\vali\sim (\disti ~|~\vali<1)$
for $i\in[n], i\ne i^*$. The total payment is $\sum_{i\in[n+1]}\payi=1$.
\end{description}
We verify that $\dists$ is a pairwise independent distribution with the given set of marginals. Indeed, the last bidder has value $\vali[n+1]=n^2+\eps$ in the first two cases with the total probability of $1/n^2+(1/n-1/n^2)=1/n$; when this happens ($\vali[n+1]=n^2+\eps$) the distribution of any bidder $i\in[n]$ is exactly $\ER[\frac{1}{n},1]$ ($\vali=1$ with probability $\frac{1/n^2}{1/n^2+(1/n-1/n^2)}=\frac{1}{n}$, and otherwise is distributed as $\vali\sim (\disti ~|~\vali<1)$). Furthermore, 
the last bidder has value $\vali[n+1]< n^2+\eps$ only in the third case and conditioned on that $\vali[n+1]\sim\disti[n+1]$; $\vali=1$ with probability $\Prx{i=i^*}=\frac{1}{n}$ and otherwise $\vali\sim (\disti ~|~\vali<1)$ for each fixed $\vali[n+1]<n^2+\eps$ and $i\in[n]$. Finally, 
each of the two bidders $i,j\in[n]$ has $\ER[\frac{1}{n},1]$ marginal distribution. 
They both have $\vali=\vali[j]=1$ only in the first case with probability $1/n^2=1/n\cdot 1/n$; also $\vali\sim (\disti | \vali<1, \vali[j]=1)$, as the event $\{\vali<1,\vali[j]=1\}$ only happens in the third case with probability 
$(1-1/n)\cdot\prob{i^*=j}=\frac{1-1/n}{n}$.
The total revenue $\myerson(\dists)=\frac{1}{n^2}\cdot n^2 + \InParentheses{\frac{1}{n}-\frac{1}{n^2}}\cdot n+\InParentheses{1-\frac{1}{n}}\cdot 1 < 3$, which concludes the proof.  
\end{proof}

In contrast, if we implement a second price auction with an anonymous reserve price $n^2 + \eps$. For both pairwise and mutually independent priors, the revenue is $n + \frac{\eps}{n}$ which means that AR is pairwise-robust for these marginal distributions.

\subsection{$3$-Wise-Robustness}
\label{sec: 3wi}
The example from Section~\ref{sec: nega} shows that Myerson's mechanism is not pairwise-robust for regular marginal distributions. However, as we show in this section, it is $3$-wise-robust for any (even irregular) marginals $(\disti)_{i\in[n]}$. We still need to be careful when defining Myerson's mechanism, since it might not be unique for distributions with point masses and/or irregular marginals, because of the different tie-breaking rules. In the theorem below, we make certain assumptions on the tie-breaking rule in these cases. 
\begin{theorem}
\label{thm:3wise_robust}
Myerson's mechanism is $3$-wise-robust.
\end{theorem}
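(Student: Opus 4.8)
We need to show: for any marginals $(\disti)_{i\in[n]}$ and any $3$-wise independent joint prior $\dists \in \disttwi$, the Myerson revenue $\myerson(\dists) \geq \frac{1}{c}\,\myerson(\distind)$ for a universal constant $c$. Let me think about what structure we can exploit.

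First, recall the Myerson revenue formula via virtual values: for the *mutually independent* prior, $\myerson(\distind) = \Ex[\vals\sim\distind]{\max_i \virtiri(\vali)^+}$ where $\virtiri$ is the ironed virtual value and $x^+ = \max(x,0)$. Under a $3$-wise (hence pairwise, hence $1$-wise) independent prior $\dists$, each bidder $i$ still has the *same marginal* $\disti$. The key point is that the Myerson *allocation rule* $x_i$ — allocate to the bidder with the highest nonnegative ironed virtual value — depends on the profile of virtual values, and by Myerson's payment identity, for *any* joint distribution with these marginals, the revenue of running this fixed allocation rule equals $\Ex[\vals\sim\dists]{\sum_i \virtiri(\vali) x_i(\vals)} = \Ex[\vals\sim\dists]{\max_i \virtiri(\vali)^+}$, PROVIDED the allocation rule is monotone (which it is, per bidder) so that truthfulness and the payment identity hold regardless of the joint law. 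So the problem reduces to the purely probabilistic statement:
\[
\Ex[\vals\sim\dists]{\max_i Y_i^+} \;\geq\; \frac{1}{c}\,\Ex[\vals\sim\distind]{\max_i Y_i^+},
\]
where $Y_i := \virtiri(\vali)$ are random variables with fixed marginals, mutually independent on the right, $3$-wise independent on the left. (One must be a little careful with ties/tie-breaking, as flagged in the theorem statement; I would fix the tie-breaking rule to be, say, lexicographic or value-based so that the allocation is a deterministic monotone function of the profile, and argue the identity still holds.)

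Now the heart of the matter: bounding $\Ex{\max_i Y_i^+}$ from below under $3$-wise independence. The standard approach (as in Caragiannis et al.\ and the "core–tail" decomposition tradition) is to split by a threshold $\tau$. Write $\Ex{\max_i Y_i^+} = \int_0^\infty \Prx{\max_i Y_i \geq \tau}\,d\tau$. We want to compare $\Prx[\dists]{\exists i: Y_i \geq \tau}$ with $\Prx[\distind]{\exists i: Y_i \geq \tau}$. By inclusion–exclusion / Bonferroni, using only the *marginals* (which agree), $\Prx{\exists i: Y_i\geq\tau} \leq \sum_i p_i(\tau)$ where $p_i(\tau) = \Prx{Y_i\geq\tau}$, regardless of correlation. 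For the lower bound under mutual independence, $\Prx[\distind]{\exists i: Y_i\geq\tau} = 1 - \prod_i(1-p_i(\tau)) \geq \min(1/2, \frac12\sum_i p_i(\tau))$ roughly. Under pairwise independence we would only get a Chebyshev-type bound, which is exactly where Myerson *fails* to be pairwise robust — so $3$-wise independence must be used to get a third-order Bonferroni correction: $\Prx[\dists]{\exists i: Y_i\geq\tau} \geq \sum_i p_i - \sum_{i<j} \Prx{Y_i\geq\tau, Y_j\geq\tau}$, and under $3$-wise independence the pair-event probability $\Prx{Y_i\geq\tau,Y_j\geq\tau}=p_i p_j$ exactly (since two events each determined by one coordinate are a function of $\le 2$ coordinates — wait, that only needs pairwise independence). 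The real use of $3$-wise independence is more subtle: we need a *lower* bound on $\Prx[\dists]{\exists i: Y_i\geq\tau}$ that is robust to correlation, and the pairwise-only lower bound $\sum p_i - \sum_{i<j}p_ip_j$ can be badly negative when $\sum p_i$ is large; the counterexample in Section~\ref{sec: nega} exploits exactly a regime where one heavy bidder contributes almost all the revenue but under pairwise independence its "winning event" gets cancelled by correlation with the pack. With $3$-wise independence we can control third moments / triple-coincidence probabilities and run a sharper Bonferroni or a "split the bidders into a constant number of groups" argument: separate the (at most one) bidder whose $\sum$-contribution is $\geq \frac12$ of the total from the rest, handle it directly via its marginal, and handle the "light" remainder via a second-moment-plus-correction argument that $3$-wise independence makes valid.

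So the plan, in order: (1) Reduce to the probabilistic inequality on $\max_i Y_i^+$ via Myerson's payment identity, carefully handling tie-breaking. (2) Use the integral (layer-cake) representation and reduce to comparing $\Prx{\exists i: Y_i\geq\tau}$ under the two priors, layer by layer in $\tau$. (3) Classify each layer by whether some single bidder is "dominant" ($p_i(\tau) \geq $ some constant, or its contribution to $\sum_j p_j(\tau)$ is $\geq$ half): in the dominant case, $\Prx[\dists]{\exists i: Y_i\geq\tau} \geq p_i(\tau)$ which already matches $\Prx[\distind]{\cdot}$ up to a constant. (4) In the non-dominant (light) case, all $p_i(\tau)$ are small and $\sum_j p_j(\tau)$ is the right scale; here invoke a $3$-wise-independent tail bound — a Bonferroni lower bound $\Prx{\exists i} \geq \sum p_i - \sum_{i<j} p_ip_j \geq \frac12\sum p_i$ when $\sum p_i \leq 1$, which uses only pairwise independence, OR if $\sum_j p_j(\tau) > 1$ use a Paley–Zygmund / third-moment argument that genuinely needs $\Ex{(\sum_i \mathbb 1[Y_i\geq\tau])^3}$ control, i.e.\ $3$-wise independence, to conclude $\Prx{\exists i:Y_i\geq\tau}\geq$ constant. (5) Integrate over $\tau$ and assemble the constant $c$.

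The main obstacle I anticipate is step (4) in the regime $\sum_j p_j(\tau) \geq 1$ with no dominant bidder: this is precisely the regime that is fragile under pairwise independence, and we must show that third-order independence is enough to force $\Prx[\dists]{\exists i: Y_i\geq\tau}$ to be bounded below by a constant. A clean route is the Paley–Zygmund inequality applied to $N_\tau = \sum_i \mathbb 1[Y_i\geq\tau]$: $\Prx{N_\tau \geq 1} \geq \Prx{N_\tau \geq \frac12\Ex{N_\tau}} \geq \frac{(\Ex{N_\tau}/2)^2}{\Ex{N_\tau^2}}$ would only need pairwise independence but fails when the variance is large relative to the mean squared — which, under *mere* pairwise independence, it can be (the $\ER$ construction). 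Under $3$-wise independence one can instead bound the probability that $N_\tau$ is both $\geq 1$ and "not too large" and combine with a truncation argument, or use the sharper Bonferroni $\Prx{N_\tau\geq1} \geq \sum p_i - \sum_{i<j}p_ip_j + \sum_{i<j<k}p_ip_jp_k$ — wait, that inequality goes the wrong direction; the correct statement is that odd-truncated Bonferroni gives an upper bound and even-truncated a lower bound, so $\Prx{N_\tau\geq1} \geq \sum p_i - \sum_{i<j}p_ip_j$, which is the pairwise bound again. So the genuinely new ingredient must come from partitioning the light bidders into $O(1)$ blocks each with block-sum of $p$'s bounded by a constant, applying the pairwise Bonferroni bound within blocks, and then — and here is where $3$-wise independence is essential — using independence *across* a constant number of blocks (three-way independence lets us treat block-indicator events as jointly independent enough) to show at least one block "fires" with constant probability. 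Making this block decomposition work uniformly over $\tau$, and verifying that $3$-wise independence of the *coordinates* $\vali$ translates into the needed independence of the *block events* (which depend on disjoint sets of more than one coordinate each — so this actually needs independence of order equal to the number of blocks, forcing the number of blocks to be $\leq 3$ and hence constants to be calibrated accordingly), is the delicate part I would spend the most care on.
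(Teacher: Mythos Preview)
Your proposal has a fatal gap at step~(1). You claim that for \emph{any} joint prior $\dists$ with the given marginals,
\[
\myerson(\dists) \;=\; \Exlong[\vals\sim\dists]{\max_i \virtiri(\vali)^+}.
\]
This is false. You are conflating two different facts. The pointwise payment identity $p_i(\vals)=v_i x_i(\vals)-\int_0^{v_i}x_i(t,v_{-i})\,dt$ does hold for every profile, independent of the joint law. But the further identity $\Ex{p_i(\vals)}=\Ex{\virti(\vali)x_i(\vals)}$ is obtained by integrating by parts \emph{in the $v_i$ variable using the density $f_i$}, which is only valid when the conditional law of $v_i$ given $v_{-i}$ is $\disti$ --- i.e., under mutual independence. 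Under merely $k$-wise independence with $k<n$, the conditional law of $v_i$ given the full $v_{-i}$ need not equal $\disti$, and the virtual-surplus identity breaks.

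The paper's own counterexample in Section~\ref{sec: nega} immediately falsifies your reduction. For that pairwise-independent $\dists$ one computes $\Ex[\vals\sim\dists]{\max_i\virti(\vali)^+}\approx n$ (since with probability $1/n$ the last bidder has virtual value $n^2+\eps$, and otherwise the max virtual value is $1$), while the actual payment-based revenue $\myerson(\dists)<3$. So the two quantities are not equal, and in fact differ by a factor $\Theta(n)$. If your reduction were valid, Myerson would already be pairwise-robust (because the layer-cake event $\{\exists i:Y_i\ge\tau\}$ \emph{is} pairwise-robust by Lemma~\ref{lem: wine}), contradicting Section~\ref{sec: nega}. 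This is why your subsequent search for where $3$-wise independence enters feels circular: within your framing, it would never be needed.

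The paper's proof works directly with payments, not virtual surplus. It fixes a single ex-ante level $\ttau$ with $\sum_i\Pr[\virti(\vali)\ge\ttau]=\tfrac12$, and lower-bounds revenue via the disjoint events ``exactly bidders $i,j$ have virtual value $\ge\ttau$.'' The genuinely $3$-wise step is this: conditioning on the values $v_i,v_j$ of \emph{two} specific bidders, $3$-wise independence guarantees $\Pr[\event_k\mid v_i,v_j]=\Pr[\event_k]$ for every third bidder $k$, so a union bound yields $\Pr[\bigwedge_{k\ne i,j}\overline{\event_k}\mid v_i,v_j]\ge\tfrac12$. From there a case analysis (dominant bidder vs.\ not) ties the resulting lower bound to the ex-ante $\tfrac12$-relaxation, which upper-bounds $\myerson(\distind)$. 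No layer-cake over $\tau$ is used; only the single level $\ttau$.
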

\begin{proof} We give the proof for regular marginals to simplify the presentation (the same proof also works for the irregular cases, but all virtual values $\virti(\vali)$ would have to be changed to ironed virtual values $\virtiri(\vali)$).
We first consider an ex-ante $1/2$-relaxation of Myerson's mechanism, in which at most $1/2$ of the item is sold in expectation. This relaxation is at least a $1/2$-approximation to Myerson's revenue $\myerson(\distind)$ on the mutually independent prior. 
The optimal ex-ante mechanism sells the item separately to each bidder $i\in[n]$ at the prices $\payi$ that give the same virtual values $\virti(\payi)=\nu$ across all agents.   
We define\footnote{Normally, $\ttau$ is a unique threshold such that $\sum_{i\in[n]}\Prx{\virti(\vali)\ge\ttau}=\frac{1}{2}$, but some distributions $\disti$ may have atoms at $\ttau$. In this case
$\sum_{i\in[n]}\Prx{\virti(\vali)\ge\ttau}>\frac{1}{2}$. 
Then we use a tie-breaking rule that with certain probability gives 
an agent $i$ with atom at $\ttau$ 
and $\virti(\vali)=\ttau$ the lowest priority in the tie-breaking as if 
their virtual value 
$\virti(\vali)=\ttau-\varepsilon$. In this way we ensure that $\sum_{i\in[n]}\Prx{\virti(\vali)\ge\ttau}=\frac{1}{2}$.} the virtual value level of ex-ante $1/2$-relaxation as
\[
\ttau\eqdef \inf\left\{\nu~\middle|~\sum\limits_{i\in[n]}\Quani\left(\virti^{-1}(\nu)\right)\le \frac{1}{2}\right\}\quad\quad
\text{with tie-breaking }
\sum\limits_{i\in[n]}\Prx{\virti(\vali)\ge\ttau}=\frac{1}{2}.
\]
Note that the level $\ttau$ might be less than $0$, when the ex-ante probability $\sum_{i\in[n]}\Quani\left(\mri\right)$ of selling item at monopoly reserves $\mri$ (corresponds to $0$ level of virtual values $\virti(\mri)=0$) is less than $\frac{1}{2}$.
This is an easy case, which we consider next.

\textbf{Case 1:} $\ttau \leq 0$. In this case,
if $\vali\ge\mri$ for a bidder $i\in[n]$, then
with probability at least half $i$ is the only bidder beating her monopoly reserve for any prior $\dists\in\distpwi$. 
Indeed, consider two events $\event_i\eqdef \{\vali\ge \mri\}$ and $\bigwedge_{j\ne i}\overline{\event_{j}} = \{\forall j\neq i, \vali[j] < \mri[j]\}$ for each $i\in[n]$. Then we have the following bound for any pairwise independent prior $\dists\in\distpwi$.
\begin{multline*}
    \Exlong[\vals\sim\dists]{\myerson(\vals)} \ge
    \sum_{i\in[n]}\Ex[\vals]{\myerson(\vals)\cdot
    \ind{\event_i\wedge\bigwedge\nolimits_{j\ne i}\overline{\event_{j}}}}
    =\sum_{i\in[n]}\mri\cdot\Prx{\event_i}\cdot
    \Prx{\bigwedge\nolimits_{j\ne i}\overline{\event_j}~\left|\vphantom{\bigcup\nolimits_{j\ne i}}\right.~\event_i}\\
    =\sum_{i\in[n]}\mri\cdot\Quani(\mri)\cdot
    \Prx{\overline{\bigvee\nolimits_{j\ne i}\event_j}~\left|\vphantom{\bigcup\nolimits_{j\ne i}}\right.~\event_i}
    \ge
    \sum_{i\in[n]}\mri\cdot\Quani(\mri)\cdot\left(1-\sum_{j\ne i}\Prx{\event_j~|~\event_i}\right)
    \\
    =
    \sum_{i\in[n]}\mri\cdot\Quani(\mri)\cdot\left(1-\sum_{j\ne i}\Prx{\event_j}\right)
    \ge\frac{1}{2}\sum_{i\in[n]}\mri\cdot\Quani(\mri)\ge\frac{1}{2} \Ex[\vals\sim\distind]{\myerson(\vals)},
\end{multline*}
where the first inequality holds, since the events $\event_i\wedge\bigwedge\nolimits_{j\ne i}\overline{\event_{j}}$ are disjoint for all $i\in[n]$; the first equality holds, since when the event $\event_i\wedge\bigwedge\nolimits_{j\ne i}\overline{\event_{j}}$ occurs with probability $\Prx{\event_i}\cdot
    \Prx{\bigwedge\nolimits_{j\ne i}\overline{\event_j}~\left|\vphantom{\bigwedge\nolimits_{j\ne i}}\right.~\event_i}$, agent $i$ pays her monopoly reserve $\mri$; in the second equality we simply used that $\Prx{\event_i}=\Quani(\mri)$ and that $\bigwedge\nolimits_{j\ne i}\overline{\event_{j}}=\overline{\bigvee\nolimits_{j\ne i}\event_j}$; 
    the second inequality is simply the union bound for events $\{\event_j\}_{j\ne i}$ and $\overline{\bigvee\nolimits_{j\ne i}\event_j}$ under condition of $\event_i$; the third equality follows from pairwise independence of the distribution $\dists$; the third inequality holds as $\sum_{i\in[n]}\Prx{\event_i}\le\frac{1}{2}$ when $\ttau \leq 0$; the last inequality holds as $\sum_{i}\mri\cdot\Quani(\mri)$ is the revenue of the ex-ante relaxation with unlimited supply, which is at least the revenue of any single-item auction such as $\Ex{\myerson(\vals)}$. 

Thus, Myerson's mechanism is pairwise-robust with a constant $2$. We consider next the non-degenerate case $\ttau >0$, when we actually need to use that $\dists\in\disttwi$ rather than $\dists\in\distpwi$.
 
\textbf{Case 2:} $\ttau > 0$. In this case, we consider the events $\event_i\wedge\event_j=\{\virti(\vali),\virti[j](\vali[j])\ge\ttau\}$ and $\bigwedge_{k\ne i,j}\overline{\event_k}=\{\forall k\ne i,j,\virti[k](\vali[k])<\ttau\}$ for each pair of bidders $i\ne j$, where $\event_i\eqdef\{\virti(\vali)\ge\ttau\}$. 
Then for any 3-wise independent prior $\dists\in\disttwi$ and any realization of $\vali,\vali[j]$,
all remaining bidders $k\ne i,j$ have virtual values strictly below $\ttau$ with constant probability.
\begin{claim}
    \label{cl:ex_ante_half}
    For any $\dists\in\disttwi$ and any fixed  $\vali$ and $\vali[j]$ of bidders $i\ne j$ we have
    \begin{equation}
    \label{eq:ex_ante_half}
    \Prx[\vals\sim\dists]{\bigwedge\nolimits_{k\ne i,j}\overline{\event_k} \left|\vphantom{\bigwedge\nolimits_{k}}\right. \vali,\vali[j]}\ge\frac{1}{2}~.
    \end{equation}
\end{claim}
\begin{proof} By the union bound for the events $\{\event_k\}_{k\ne i,j}$ and $\overline{\bigvee_{k\ne i,j}\event_k}$ conditioned on $\vali,\vali[j]$ we get
\begin{multline*}
    \Prx{\bigwedge\nolimits_{k\ne i,j}\overline{\event_k} \left|\vphantom{\bigwedge\nolimits_{k}}\right. \vali,\vali[j]}=
    \Prx{\overline{\bigvee\nolimits_{k\ne i,j}\event_k} \left|\vphantom{\bigwedge\nolimits_{k}}\right. \vali,\vali[j]}
    \ge
    1-\sum_{k\ne i,j}\Prx{\event_k~|~\vali,\vali[j]}\\
    =
    1-\sum_{k\ne i,j}\Prx{\event_k}
    =1-\sum_{k\ne i,j}\Prx{\virti[k](\vali[k])\ge\ttau}\ge\frac{1}{2},
\end{multline*}    
where in the second equality we used $3$-wise independence of $\dists$; 
the last inequality follows from the definition of $\ttau$. 
\end{proof}
We use disjoint events $\left\{\event_i\wedge\event_j\wedge\bigwedge_{k\ne i,j}\overline{\event_k}\right\}_{i\ne j}$ to get the lower bound~\eqref{eq: 3wi} on the expected revenue of Myerson's mechanism with any $3$-wise independent prior $\dists\sim\disttwi$.
To shorten notations, we denote 
for each bidder $i\in[n]$ her ex-ante value threshold as $\exvi\eqdef\virt^{-1}(\ttau)$, the probability $\Prx{\vali\ge\exvi}=\Prx{\event_i}$ as $\qi$, and her revenue in the ex-ante $1/2$-relaxation at price $\exvi$ as $\revi\eqdef\exvi\cdot\Prx{\vali\ge\exvi}=\exvi\cdot\qi$. We have
\begin{multline}    
\label{eq: 3wi}
\Exlong[\vals\sim\dists]{\myerson(\vals)} \ge 
\sum\limits_{i\neq j}\Exlong[\vals]{\myerson(\vals)\cdot 
\ind{\event_i\wedge\event_j\wedge\bigwedge_{k\ne i,j}\overline{\event_k}} }
=\sum\limits_{i\neq j}\Prx{\event_i\wedge\event_j}\\
\cdot\Exlong[\vals]{\myerson(\vals)\cdot
\ind{\bigwedge\nolimits_{k\ne i,j}\overline{\event_k}} \left|\vphantom{\bigwedge\nolimits_{k}}\right. \event_i\wedge\event_j
}
=\sum\limits_{i\neq j}\qi\cdot\qi[j]\cdot
\Exlong[\substack{\disti|\vali\ge\exvi\\
\disti[j]|\vali[j]\ge\exvi[j]}]{\myerson(\vali,\vali[j])\cdot\Prx{\bigwedge_{k\ne i,j}\overline{\event_k}\left|\vphantom{\bigwedge_{k}}\right.\vali,\vali[j]}}
\\
\ge\sum\limits_{i\neq j}\qi\cdot\qi[j]\cdot\frac{1}{2}\cdot
\Exlong[\substack{\disti|\vali\ge\exvi\\
\disti[j]|\vali[j]\ge\exvi[j]}]{\myerson(\vali,\vali[j])}
\ge
\sum\limits_{i\neq j}\qi\cdot\qi[j]\cdot\frac{1}{2}\cdot\frac{1}{2}\left(\exvi+\exvi[j]\right)
\\
=\frac{1}{4}\sum\limits_{i\neq j} \bigg(\revi\cdot\qi[j]+\revi[j]\cdot\qi\bigg)
=\frac{1}{4}\sum\limits_{i\in[n]}\revi\cdot\sum_{j\ne i}\qi[j],~~~~~~
\end{multline}
where 
the second equality holds as the
payment of Myerson's mechanism in the event $\event_i\wedge\event_j\wedge\bigwedge_{k\ne i,j}\overline{\event_k}$ is completely determined by values $\vali,\vali[j]$; the second inequality holds by Claim~\ref{cl:ex_ante_half}; the third inequality holds, since by running the optimal single-item auction twice we get at least as much revenue as selling $2$ copies of the item separately to bidders $i$ and $j$ at respective prices $\exvi$ and $\exvi[j]$; the third equality holds by the definition of $\revi$ and $\revi[j]$.

On the other hand, since ex-ante $1/2$-relaxation is at least a $1/2$-approximation to the standard ex-ante relaxation, which in turn is an upper bound on the Myerson's revenue with the mutually independent prior $\distind$, we have
\begin{equation}
\label{eq: 3witau}
2\sum\limits_{i\in[n]}\revi \ge \Ex[\vals\sim\distind]{\myerson(\vals)}.
\end{equation}
With the bounds \eqref{eq: 3wi} and \eqref{eq: 3witau}, the proof of Theorem~\ref{thm:3wise_robust} is almost complete. We only need to consider a few straightforward cases to relate the lower bound \eqref{eq: 3wi} on $\Ex[\vals\sim\dists]{\myerson(\vals)}$ with the upper bound \eqref{eq: 3witau} on $\Ex[\vals\sim\distind]{\myerson(\vals)}$.
Without loss of generality let us assume that $\qi[1]=\Prx{\virti[1](\vali[1])\ge\ttau}=\max_{i\in[n]}\qi$ in the reminder of the proof.
\textbf{Case 2.1:} 
$\qi[1]=\max_{i}\qi\le\frac{1}{4}$.
According to \eqref{eq: 3wi}, for any $\dist\in \disttwi$ we have
\begin{equation*}
\Ex[\vals\sim\dist]{\myerson(\vals)} \ge   
\frac{1}{4}\sum\limits_{i\in[n]}\revi\cdot\sum_{j\ne i}\qi[j]
\ge \frac{1}{16}\sum\limits_{i\in[n]}\revi 
\ge \frac{1}{32} 
\Ex[\vals\sim\distind]{\myerson(\vals)},
\end{equation*}
where the second inequality is due to $\sum_{j\neq i}\qi[j]\ge\left(\sum_{j}\qi[j]\right)-\qi[1]\ge\frac{1}{2}-\frac{1}{4}$ for any $i\in[n]$; the third inequality holds by \eqref{eq: 3witau}.

\textbf{Case 2.2:} 
$\qi[1]\ge \frac{1}{4}$. I.e., the first agent contributes the most to the ex-ante probability $\sum_{i\in[n]}\qi$. We consider two subcases depending on her contribution $\revi[1]$ to the total revenue of the ex-ante $1/2$-relaxation $\sum_{i\in[n]}\revi$.   

\textbf{Case a:} $\revi[1]\le \frac{1}{2} \sum_{i\in[n]}\revi.$
By \eqref{eq: 3wi} we have
\begin{equation*}
\Ex[\vals\sim\dists]{\myerson(\vals)}\ge   
\frac{1}{4}\sum\limits_{i\in[n]}\revi\cdot\sum_{j\ne i}\qi[j]
\ge \frac{1}{16}\sum\limits_{i\neq 1}\revi
\ge \frac{1}{32}\sum\limits_{i\in[n]}\revi 
\ge \frac{1}{64} \Ex[\vals\sim\distind]{\myerson(\vals)},
\end{equation*}
where the second inequality is due to 
$\sum_{j\neq i}\qi[j]\ge\qi[1]\ge\frac{1}{4}$ for $i\ne 1$;
the third inequality holds since $\revi[1]$ is less than a half of the total revenue $\sum_{i\in[n]}\revi$; \eqref{eq: 3witau} gives the last inequality.

\textbf{Case b:} $\revi[1]> \frac{1}{2} \sum_{i\in[n]}\revi.$
In this case, we consider situations in which the first agent wins and pays at least her monopoly reserve price $\mri[1]$. This happens when $\event_1\wedge\bigwedge_{i\ne 1}\overline{\event_i}=\{\virti[1](\vali[1])\ge\ttau,\forall i\ne 1, ~\virti(\vali)<\ttau\}$. Then, for any pairwise independent prior $\dists\in\distpwi$ we have 
\begin{multline*}
    \Exlong[\vals\sim\dists]{\myerson(\vals)}\ge
    \Exlong[\vals]{\myerson(\vals)\cdot 
    \ind{\event_1\wedge\bigwedge\nolimits_{i\ne 1}\overline{\event_i}} }
    =\Prx{\event_1}\cdot\Exlong[\vals]{\myerson(\vals)\cdot 
    \ind{\bigwedge\nolimits_{i\ne 1}\overline{\event_i}} \left|\vphantom{\bigwedge\nolimits_{i}}\right. \event_1}\\
    \ge
    \qi[1]\cdot\Exlong[\vals]{\mri[1]\cdot\ind{\bigwedge\nolimits_{i\ne 1}\overline{\event_i}} \left|\vphantom{\bigwedge\nolimits_{i}}\right. \event_1}
    =
    \qi[1]\cdot\mri[1]\cdot\Prx{\overline{\bigvee\nolimits_{i\ne 1}\event_i} \left|\vphantom{\bigvee\nolimits_{i}}\right. \event_1}
    \ge
    \qi[1]\cdot\mri[1]\cdot\left(1-\sum_{i\ne 1}\Prx{\event_i ~|~\event_1}\right)
    \\
    =\qi[1]\cdot\mri[1]\cdot
    \left(1+\qi[1]-\sum_{i\in[n]}\qi\right)\ge\frac{1}{4}\cdot\revi[1]\cdot
    \frac{3}{4}\ge\frac{3}{32}\sum_{i\in[n]}\revi
    \ge
    \frac{3}{64}\Exlong[\vals\sim\distind]{\myerson(\vals)},
\end{multline*}
where to get the second inequality we observe that 
the first agent wins in the events $\event_1$ and 
$\bigwedge_{i\ne 1}\overline{\event_i}$ and has to pay at least her reservation price 
$\mri[1]$; the third inequality is just a union bound for the events $(\event_i)_{i\ne 1}$ and 
$\overline{\bigvee_{i\ne 1}\event_i}$; we used pairwise independence and that $\qi=\Prx{\event_i}$ to get the third equality; in the forth inequality we used that monopoly price
$\mri[1]\ge\mri[1]\cdot\Prx{\virti[1](\vali[1])\ge 0}\ge\revi[1]$ and the conditions 
$\qi[1]\ge\frac{1}{4}$ and  $\sum_{i\in[n]}\qi=\frac{1}{2}$;
in the fifth and sixth inequalities, we used case (b) condition and \eqref{eq: 3witau}. Thus we conclude the proof of Theorem~\ref{thm:3wise_robust}.
%
\end{proof}

\section{Pairwise-Robustness of Myerson's Auction}
\label{sec: bounds}
Our next goal is to show that Myerson's auction, which is equivalent to the second price auction with an anonymous reserve, is not over-optimizing for the mutually independent prior, unlike Myerson's mechanism. I.e., the revenue $\AR\left(\reserve\right)$ on any given pairwise independent prior $\dists\in\distpwi$ is within a constant factor from its revenue $\AR^{ind}\left (\reserve\right)$ on the independent prior $\distind$. To this end, we express these revenues as integrals of respective probabilities that at least one or at least two bidders exceed a threshold value $\tau$.
\be
\AR\left (\reserve\right ) = \reserve\cdot \Qone\left (\reserve\right ) + \int_{\reserve}^{+\infty}\Qtwo\left (\tau\right )\d \tau,
\quad\quad
\AR^{ind}\left (\reserve\right ) = \reserve\cdot \Qoneind\left (\reserve\right ) + \int_{\reserve}^{+\infty}\Qtwoind\left (\tau\right )\d \tau,\label{eq: pwi_rev}
\ee
where the respective probabilities for the pairwise and mutually independent distributions $\dists$ and $\distind$ are
\begin{align*}
\Qone(\tau) \eqdef \Prx[\vals\sim\dists]{|\{i:\vali\ge \tau\}| \ge 1},&
\quad\quad 
\Qtwo(\tau) \eqdef \Prx[\vals\sim\dists]{|\{i:\vali\ge \tau\}| \ge 2};\\
\Qoneind(\tau) \eqdef \Prx[\vals\sim\distind]{|\{i:\vali\ge \tau\}| \ge 1},&
\quad\quad 
\Qtwoind(\tau) \eqdef \Prx[\vals\sim\distind]{|\{i:\vali\ge \tau\}| \ge 2}.
\end{align*}
The integral forms in~\eqref{eq: pwi_rev} allow us to compare respective terms $\Qone(\reserve)$ with $\Qoneind(\reserve)$ and 
$\Qtwo(\tau)$ with $\Qtwoind(\tau)$ for each $\tau\in[\reserve,\infty)$. For the former pair, the probability $\Qone(\reserve)$ is at least a constant fraction of $\Qoneind(\reserve)$. In fact, it was shown by Caragiannis et al.~\cite{caragiannis2022relaxing} to hold for any $\Qone(\tau),\Qoneind(\tau)$ and not only for $\tau=\reserve$. 
\begin{lemma}[\cite{caragiannis2022relaxing}]
\label{lem: wine}
For any $\dists\in\distpwi$  and any threshold $\tau\ge 0$,
$ 1.299\cdot \Qone\left (\tau\right ) \ge \Qoneind\left (\tau\right ).
$
\end{lemma}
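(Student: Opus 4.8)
The plan is to compare the two quantities via a \emph{weighted} second–moment (Paley–Zygmund) argument that uses pairwise independence. Fix $\tau$, write $X_i\eqdef \ind{\vali\ge\tau}$ and $p_i\eqdef\Quani(\tau)=\Prx{X_i=1}$ (the marginals are the same under $\dists$ and $\distind$), and set $N\eqdef|\{i:\vali\ge\tau\}|=\sum_i X_i$. If some $p_i=1$ then $\Qone(\tau)=\Qoneind(\tau)=1$ and there is nothing to prove, while if all $p_i=0$ both sides vanish; so assume $0<p_i<1$ for the relevant indices. Note that $\Qoneind(\tau)=1-\prod_i(1-p_i)=1-e^{-S}$, where $S\eqdef\sum_i\ln\frac{1}{1-p_i}>0$.

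First I would introduce the weights $w_i\eqdef\frac1{p_i}\ln\frac1{1-p_i}\ge 1$ and the statistic $M\eqdef\sum_i w_iX_i$. Since the $w_i$ are positive, $\{M>0\}=\{N\ge 1\}$, hence $\Qone(\tau)=\Prx{M>0}$, and by Cauchy–Schwarz $\Prx{M>0}\ge(\Ex{M})^2/\Ex{M^2}$, so it suffices to control the first two moments of $M$. By construction $\Ex{M}=\sum_i w_ip_i=\sum_i\ln\frac1{1-p_i}=S$. For the second moment, using $X_i^2=X_i$ and $\Ex{X_iX_j}=p_ip_j$ for $i\ne j$ — this is exactly where pairwise independence of $\dists$ enters —
\[
\Ex{M^2}=\sum_i w_i^2p_i+\sum_{i\ne j}w_iw_jp_ip_j=\sum_i w_i^2p_i(1-p_i)+\Bigl(\sum_i w_ip_i\Bigr)^2=\sum_i w_i^2p_i(1-p_i)+S^2 .
\]
The crux is the bound $\sum_i w_i^2p_i(1-p_i)\le\sum_i w_ip_i=S$, which holds term by term because $w_i(1-p_i)\le 1$; substituting $u=1-p_i\in(0,1]$ this is the elementary inequality $u(1-\ln u)\le 1$, valid since $u\mapsto u(1-\ln u)$ is increasing on $(0,1]$ with value $1$ at $u=1$. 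Therefore $\Ex{M^2}\le S+S^2$ and
\[
\Qone(\tau)=\Prx{M>0}\ge\frac{(\Ex{M})^2}{\Ex{M^2}}\ge\frac{S^2}{S+S^2}=\frac{S}{1+S}.
\]

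Finally I would combine $\Qoneind(\tau)=1-e^{-S}$ with $\Qone(\tau)\ge\frac{S}{1+S}$ to get $\Qoneind(\tau)/\Qone(\tau)\le g(S)$, where $g(s)\eqdef\frac{(1+s)(1-e^{-s})}{s}$, and reduce the lemma to the one–variable fact $\sup_{s>0}g(s)\le 1.299$. This is a short calculus check: $g(0^+)=g(\infty)=1$, $g'(s)=-e^{-s}\bigl(e^{s}-s^2-s-1\bigr)/s^2$, and $s\mapsto e^{s}-s^2-s-1$ is negative on $(0,s^\ast)$ and positive on $(s^\ast,\infty)$ for a unique $s^\ast$ (which follows by differentiating it twice), so $g$ attains its maximum at the root $s^\ast$ of $e^{s}=s^2+s+1$, namely $s^\ast\approx 1.793$, where $g(s^\ast)=(1+s^\ast)^2/(s^{\ast 2}+s^\ast+1)\approx 1.298<1.299$.

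I expect the main obstacle to be discovering the right weights: the naive unweighted choice $M=N$ yields only $\Qone(\tau)\ge\frac{S'}{1+S'}$ with $S'=\sum_i p_i\le S$, and since $\Qoneind(\tau)$ can be larger than $1-e^{-S'}$ when the marginals are concentrated, that bound does not suffice. The logarithmic reweighting is precisely what makes $\Ex{M}$ equal $-\ln\prod_i(1-p_i)$ (so the $\Qoneind$ side is matched exactly), while the inequality $w_i(1-p_i)\le 1$ is exactly what keeps the variance term small enough; the moment computation and the single–variable optimization are then routine.
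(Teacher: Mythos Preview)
Your proof is correct. The paper does not give its own proof of this lemma; it simply quotes the result from \cite{caragiannis2022relaxing}, so there is nothing in the present paper to compare your argument against. Your weighted second--moment (Paley--Zygmund) argument with the logarithmic weights $w_i=\frac{1}{p_i}\ln\frac{1}{1-p_i}$ is a clean and self-contained route: the choice of weights makes $\Ex{M}$ equal exactly $S=-\ln\prod_i(1-p_i)$, the term-by-term inequality $w_i(1-p_i)\le 1$ (equivalently $u(1-\ln u)\le 1$ on $(0,1]$) controls the variance, and the single-variable optimization of $g(s)=(1+s)(1-e^{-s})/s$ is verified correctly, with maximum $\approx 1.298$ at the unique root of $e^{s}=s^{2}+s+1$. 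Your closing remark is also accurate: the unweighted bound $\Qone(\tau)\ge \frac{S'}{1+S'}$ with $S'=\sum_i p_i$ (which is the paper's Lemma~\ref{lem: wine2}) is genuinely too weak for the present statement, since $S$ can be much larger than $S'$ when some marginal $p_i$ is close to $1$; the logarithmic reweighting is precisely what closes this gap.
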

This lemma is the main tool to prove the robustness of several mechanisms and algorithms such as sequential posted pricing and prophet inequality under pairwise independent distribution  in~\cite{caragiannis2022relaxing}. Now, if we could give a similar lower bound on $\Qtwo(\tau)$ via $\Qtwoind(\tau)$ for any $\tau\in[\reserve,\infty)$, then we would immediately get the desired approximation guarantee for $\AR(\reserve)$ via $\AR^{ind}(\reserve)$. 
As it turns out, the probabilities $\Qtwo(\tau)$ and $\Qtwoind(\tau)$ do not behave as nicely as $\Qone(\tau)$ and $\Qoneind(\tau)$. We give below a counter-example showing that the gap between $\Qtwo(\tau)$ and $\Qtwoind(\tau)$ for some $\tau$ can be arbitrarily large even for the case of identical marginals.

\paragraph{Example.} 
There are $n + 1$ bidders, each bidder's values distributed uniformly $\uniform[0, 1]$ on $[0,1]$ interval. We set $\tau = \frac{n - 1}{n}$. Then, 
$
    \Qtwoind\left(\frac{n - 1}{n}\right) = 1 - \left(1 - \frac{1}{n}\right)^{n + 1} - (n + 1)\cdot\frac{1}{n}\cdot\left(1 - \frac{1}{n}\right)^n\xrightarrow[n\to\infty]{ } 1 - \frac{2}{e}.
$
We construct a pairwise independent joint distribution $\dists\in\distpwi$ with $\Qtwo(\frac{n-1}{n})=o(1)$ as follows.
\begin{description}
\item[with probability $\frac{1}{n^2}$:] Draw independently $\vali\sim \uniform[\frac{n - 1}{n}, 1]$ for $i\in [n + 1]$.
\item[with probability $1 - \frac{1}{n^2}$:] Choose a bidder $i^*\in [n + 1]$ uniformly at random and let $\vali[i^*]\sim \uniform[\frac{n - 1}{n}, 1]$; draw independently $\vali\sim\uniform[0, \frac{n - 1}{n}]$ for $i\neq i^*$.
\end{description}
To see that this distribution $\dists$ is pairwise independent, we first note that the marginal distribution of each bidder is $\uniform[0,1]$. Indeed, the probability that value of bidder $i$ is larger than $\frac{n - 1}{n}$ is $\frac{1}{n^2} + \frac{1}{n + 1}\cdot (1 - \frac{1}{n^2}) = \frac{1}{n}$ and given that $\vali\ge\frac{n-1}{n}$ or that $\vali\le\frac{n-1}{n}$ the distribution is uniform.
Second, for any given pair of bidders, the probability that their values are larger than $\frac{n - 1}{n}$ is $\frac{1}{n^2}$. Similarly,
the probability that the value of one bidder is larger than $\frac{n - 1}{n}$ and the other value is smaller than $\frac{n - 1}{n}$ is $\frac{1}{n + 1}\cdot (1 - \frac{1}{n^2}) = \frac{n - 1}{n^2}$;
and the probability that both values are smaller than $\frac{n - 1}{n}$ is $\frac{n - 1}{n + 1}\cdot (1 - \frac{1}{n^2}) = \frac{(n - 1)^2}{n^2}$. Thus $\dists$ is a pairwise independent distribution.
Then $\Qtwo\left(\frac{n-1}{n}\right) = \frac{1}{n^2}$, which quickly converges to $0$ when $n$ goes to $+\infty$. In the meantime $\Qtwoind\left(\frac{n-1}{n}\right)$ is constant.

\quad

The example above demonstrates that unlike Lemma~\ref{lem: wine}, the multiplicative gap between $\Qtwo(\tau)$ and $\Qtwoind(\tau)$ may be unbounded. I.e., one cannot simply compare similar terms in the integral representations~\eqref{eq: pwi_rev} to establish the desired approximation result. On the positive side, the counter-example only works for some specific $\tau$: $\tau$'s such that the expected number of bidders $\sum_{i = 1}^n\Quani(\tau)$ above the threshold $\tau$ is not much larger than $1$. Indeed, for each fixed $\tau$ we have simple 
Bernoulli random variables $X_i$ corresponding to the event $\{\vali\ge \tau\}$, where each $X_i=1$ with probability $\Quani(\tau)$ and $X_i=0$ otherwise. Then $\Qone(\tau)=\Prx{\sum_{i\in[n]}X_i\ge 1}$ and $\Qtwo(\tau)=\Prx{\sum_{i\in[n]}X_i\ge 2}$ are respectively the probabilities that at least one/two of $X_i$ are ones. As it turns out, 
there are well-established bounds on the probability of at least one/two events occurring at the same time under pairwise independent distributions. The crucial parameter for such bounds is the expected number of occurring events $\momone\eqdef\sum_{i = 1}^n\Quani(\tau)$.

\begin{lemma}[\cite{boros1989closed}]
\label{lem: ziyong}
Let $\sum_{i = 1}^n\Quani(\tau) = \momone$.
For any pairwise independent distribution $\dists$
\begin{align}
\label{eq: ziyong1}
    \forall \momone\ge 0
    && \Qone(\tau)\geq \frac{2m_1\momone - \momone^2}{m_1\left (m_1 + 1\right )}\eqdef \Lowerboundone\left (\momone\right), 
    &\quad\quad\text{where }m_1 = \lfloor \momone + 1\rfloor.\\
\label{eq: ziyong2}
    \forall \momone>1
    && \Qtwo\left (\tau\right )\geq \frac{2m_2\left (\momone - 1\right ) - \momone^2}{m_2\left (m_2 - 1\right )}\eqdef \Lowerboundtwo\left (\momone\right),
    &\quad\quad\text{where }m_2 = \left\lfloor \frac{\momone^2}{\momone - 1}\right\rfloor.    
\end{align}
\end{lemma}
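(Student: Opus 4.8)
The plan is to set aside the auction entirely and reduce the statement to a one-dimensional moment inequality for $S\eqdef\sum_{i\in[n]}X_i$, where $X_i\eqdef\ind{\vali\ge\tau}$, so that $\Prx{X_i=1}=\Quani(\tau)$ while $\Qone(\tau)=\Prx{S\ge1}$ and $\Qtwo(\tau)=\Prx{S\ge2}$. The only feature of pairwise independence I would use is its effect on the first two binomial moments of $S$: linearity gives $\Ex{S}=\sum_i\Quani(\tau)=\momone$, and since $\binom{S}{2}=\sum_{i<j}X_iX_j$ for a $\{0,1\}$-valued sum, pairwise independence yields
\[
\Ex{\binom{S}{2}}=\sum_{i<j}\Quani(\tau)\Quanj(\tau)=\tfrac12\Bigl(\momone^2-\sum\nolimits_i\Quani(\tau)^2\Bigr)\le\tfrac12\momone^2 .
\]
Conceptually this is the whole point: pairwise independence enters \emph{only} through the inequality $\Ex{\binom{S}{2}}\le\momone^2/2$, whereas the claimed bounds plainly fail under arbitrary correlation (e.g.\ when all $X_i$ coincide, $\Prx{S\ge1}$ can be made arbitrarily small with $\momone$ fixed).

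Given this reduction, I would run the binomial-moment (Bonferroni-type) argument of Boros--Pr\'ekopa~\cite{boros1989closed}. For an integer $m$, introduce the two degree-two polynomials
\[
g_m(k)\;\eqdef\;1-\frac{(m-k)(m+1-k)}{m(m+1)}\quad(m\ge1),
\qquad\qquad
h_m(k)\;\eqdef\;1-\frac{(m-k)(m+1-k)}{m(m-1)}\quad(m\ge2),
\]
which differ only in the normalising denominator. The quadratic $(m-k)(m+1-k)$ vanishes at the consecutive integers $m,m+1$ and is positive at every other integer, so $g_m(k)\le\ind{k\ge1}$ for every nonnegative integer $k$ (also using $g_m(0)=0$), $h_m(k)\le\ind{k\ge2}$ for every nonnegative integer $k$ (also using $h_m(1)=0$ and $h_m(0)=-2/(m-1)\le0$), and both polynomials have a strictly negative leading coefficient. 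Since $g_m,h_m$ have degree $2$, the expectations $\Ex{g_m(S)}$ and $\Ex{h_m(S)}$ are affine functions of $\Ex{S}$ and $\Ex{\binom{S}{2}}$, and a short expansion gives
\[
\Prx{S\ge1}\ge\Ex{g_m(S)}=\frac{2m\Ex{S}-2\Ex{\binom{S}{2}}}{m(m+1)},
\qquad
\Prx{S\ge2}\ge\Ex{h_m(S)}=\frac{2m\bigl(\Ex{S}-1\bigr)-2\Ex{\binom{S}{2}}}{m(m-1)}.
\]
Because the coefficient of $\Ex{\binom{S}{2}}$ is negative in both, substituting $\Ex{S}=\momone$ and $\Ex{\binom{S}{2}}\le\momone^2/2$ from the first paragraph turns these into $\Prx{S\ge1}\ge\frac{2m\momone-\momone^2}{m(m+1)}$ for every $m\ge1$ and $\Prx{S\ge2}\ge\frac{2m(\momone-1)-\momone^2}{m(m-1)}$ for every $m\ge2$ (when $\momone>1$).

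It remains to pick $m$. Taking $m=m_1\eqdef\lfloor\momone+1\rfloor$ in the first bound produces exactly $\Lowerboundone(\momone)$, and taking $m=m_2\eqdef\bigl\lfloor\momone^2/(\momone-1)\bigr\rfloor$ in the second produces exactly $\Lowerboundtwo(\momone)$; for the latter one uses the elementary fact that $\momone^2/(\momone-1)\ge4$ for all $\momone>1$, so $m_2\ge4\ge2$ and $h_{m_2}$ is legitimate. These $m_1,m_2$ are in fact the integer maximizers of the two displayed rational functions of $m$ -- which is why they are the natural choices and the resulting bounds cannot be improved by this argument -- but for the asserted inequalities any fixed admissible $m$ already suffices. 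The only things to watch are the degenerate regimes, $\momone\le1$ and integer $\momone$ in~\eqref{eq: ziyong1} and $\momone\to1^{+}$ in~\eqref{eq: ziyong2}, together with the routine algebra in the polynomial expansions; neither is delicate. The one genuinely substantive step is the reduction to $\Ex{\binom{S}{2}}\le\momone^2/2$ in the first paragraph.
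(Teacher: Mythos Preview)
The paper does not prove this lemma; it simply cites \cite{boros1989closed} and remarks afterwards that the inequalities hold for any positive integers $m_1,m_2$, with the stated floor choices maximizing the right-hand sides. Your write-up is a correct self-contained proof and is essentially the standard Boros--Pr\'ekopa binomial-moment argument specialized to pairwise independence via the observation $\Ex{\binom{S}{2}}=\sum_{i<j}\Quani(\tau)\Quanj(\tau)\le \momone^2/2$; the polynomial minorants $g_m,h_m$ and their expansions are exactly right, and your check that $m_2\ge 4$ (so $h_{m_2}$ is admissible) is a nice touch.
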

In fact, the inequalities \eqref{eq: ziyong1} and \eqref{eq: ziyong2} are valid for any positive integers $m_1, m_2$, but the choices $m_1=\lfloor \momone + 1\rfloor$ and $m_2 = \left\lfloor \frac{\momone^2}{\momone - 1}\right\rfloor$ simply maximize the respective $\Lowerboundone(s)$ and $\Lowerboundtwo(s)$. It is also not hard to verify that $\Lowerboundone(\momone)$ and $\Lowerboundtwo(\momone)$ are monotonically increasing in $\momone$.

Note that the bound~\eqref{eq: ziyong2} is useful only for the thresholds $\tau$ with $\momone>1$. But, the main contribution to the integral terms of~\eqref{eq: pwi_rev} may be for $\tau$ with $\momone=\sum_{i = 1}^n\Quani(\tau)$ is smaller than $1$. To get the desired approximation, we will have to compare the respective integral parts of~\eqref{eq: pwi_rev} for $\momone\le 1$ with $\Qone(\reserve)$. 

In what follows we make the necessary comparisons: first, for an easier case of identical marginals $\dist=\disti$ for all $i\in[n]$ and specific reserve price $\reserve$ equal to the monopoly reserve $\reserve=\mr$ of $\dist$; next, for a more technically involved case of nonidentical marginals $(\disti)_{i\in[n]}$ and arbitrary reserve price $\reserve$. Before proceeding to the case of identical marginals, we present another simple bound on $\Qone(\tau)$ from \cite{caragiannis2022relaxing} that depends on $\momone$.
\begin{lemma}[\cite{caragiannis2022relaxing}]
\label{lem: wine2}
Let $\sum_{i = 1}^n\Quani(\tau) = \momone$ for $\tau\ge 0$. Then $\Qone(\tau)\geq \frac{\momone}{\momone + 1}$ for any $\dists\in\distpwi$.
\end{lemma}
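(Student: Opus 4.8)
The plan is to use the second-moment method on the indicator variables of the events $\{\vali \ge \tau\}$. First I would fix the threshold $\tau \ge 0$ and set $X_i \eqdef \ind{\vali \ge \tau}$ for each $i \in [n]$, a Bernoulli random variable with $\Pr[X_i = 1] = \Quani(\tau)$; writing $p_i \eqdef \Quani(\tau)$ and $S \eqdef \sum_{i\in[n]} X_i$, we have $\Qone(\tau) = \Pr[S \ge 1]$ and $\E[S] = \sum_{i\in[n]} p_i = \momone$. Since each $X_i$ is a deterministic function of $\vali$, the vector $(X_i)_{i\in[n]}$ inherits pairwise independence from $\dists \in \distpwi$, so the lemma reduces to a statement about a pairwise-independent $\{0,1\}$-vector. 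We may assume $\momone > 0$, the case $\momone = 0$ being trivial.

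The key step is to bound the second moment of $S$, which is exactly where pairwise independence enters: since $\E[X_i X_j] = p_i p_j$ for $i \ne j$ and $\E[X_i^2] = p_i$, we obtain
\[
\E[S^2] = \sum_{i\in[n]} p_i + \sum_{i \ne j} p_i p_j = \momone + \momone^2 - \sum_{i\in[n]} p_i^2 \le \momone(\momone + 1).
\]
Then I would apply the Cauchy--Schwarz inequality to the pair $S$ and $\ind{S \ge 1}$: since $S$ is a non-negative integer we have $S = S \cdot \ind{S \ge 1}$ pointwise, hence
\[
\momone^2 = \bigl(\E[S \cdot \ind{S \ge 1}]\bigr)^2 \le \E[S^2] \cdot \E\bigl[\ind{S \ge 1}\bigr] = \E[S^2] \cdot \Pr[S \ge 1].
\]
Rearranging and substituting the bound on $\E[S^2]$ gives $\Qone(\tau) = \Pr[S \ge 1] \ge \momone^2 / \E[S^2] \ge \momone^2/(\momone + \momone^2) = \momone/(\momone + 1)$, as claimed.

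I do not expect a genuine obstacle: the whole argument is a two-line second-moment computation. The only points worth flagging are that pairwise independence is precisely the hypothesis needed to evaluate the cross terms $\E[X_i X_j] = p_i p_j$ (nothing stronger is required), and that discarding the nonnegative quantity $\sum_i p_i^2$ is what yields the clean closed form in the statement; retaining it would give the marginally sharper bound $\Qone(\tau) \ge \momone^2/(\momone + \momone^2 - \sum_i p_i^2)$.
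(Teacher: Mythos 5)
Your proof is correct: the second-moment bound $\E[S^2]\le \momone(\momone+1)$ under pairwise independence combined with Cauchy--Schwarz applied to $S=S\cdot\ind{S\ge 1}$ gives exactly $\Qone(\tau)\ge\momone/(\momone+1)$, and all edge cases are handled. The paper itself states this lemma as a citation to Caragiannis et al.\ without reproducing a proof, and your argument is the standard one underlying that result, so there is nothing further to reconcile.
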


\subsection{Identical Marginal Distributions}
\label{sec: iden}
We consider in this section a special case of identical marginal distributions $\disti=\dist$ for all $i\in[n]$. 
The Myerson's auction in this case is the same as $\AR(\mr)$ the second price auction with monopoly reserve $\mr$ (the virtual value $\virt(\mr)=0$), when the prior $\dist$ is a regular distribution. 
\begin{theorem}
\label{thm: myersoniid}
SPA with monopoly reserve $\mr$ is pairwise-robust for any identical regular marginals $\dist$ with a constant $c< 2.63$.
\end{theorem}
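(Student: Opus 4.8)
The plan is to compare $\AR(\mr)$ and $\AR^{ind}(\mr)$ through the integral representations~\eqref{eq: pwi_rev}. Write $s(\tau)\defeq\sum_{i\in[n]}\Quani(\tau)=n\,\Quan(\tau)$ (the ``$\momone$'' of Lemma~\ref{lem: ziyong}), which is non-increasing in $\tau$, and set $s^*\defeq s(\mr)=n\,\Quan(\mr)$. Let $R(q)\defeq q\,\Quan^{-1}(q)$ be the revenue curve in quantile space; by regularity $R$ is concave, the monopoly reserve is $\mr=\Quan^{-1}(q^*)$ with $q^*=\argmax_q R(q)$, and $s^*=nq^*$. Since $\AR^{ind}(\mr)$ is at most the value of the ex-ante relaxation of Myerson's auction, which by concavity of $R$ and symmetry equals $n\,R\bigl(\min\{q^*,1/n\}\bigr)$, the benchmark is controlled by $\AR^{ind}(\mr)\le n\,R\bigl(\min\{q^*,1/n\}\bigr)$.

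If $s^*\le1$, then $\min\{q^*,1/n\}=q^*$ and $\AR^{ind}(\mr)\le nR(q^*)=\mr\,s^*$, while Lemma~\ref{lem: wine2} gives $\AR(\mr)\ge\mr\,\Qone(\mr)\ge\frac{s^*}{s^*+1}\mr$, so $\AR^{ind}(\mr)\le(s^*+1)\AR(\mr)\le2\AR(\mr)$. Assume henceforth $s^*>1$. Let $\tau_1$ be the threshold with $s(\tau_1)=1$, i.e.\ $\Quan(\tau_1)=1/n$; then $\mr\le\tau_1$, $s(\tau)\ge1$ on $[\mr,\tau_1]$, and since now $1/n\le q^*$ the ex-ante relaxation equals $nR(1/n)=\Quan^{-1}(1/n)=\tau_1$, so $\AR^{ind}(\mr)\le\tau_1$. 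For a matching lower bound on $\AR(\mr)$, discard the tail $\tau>\tau_1$ of the integral and invoke Lemma~\ref{lem: ziyong} (with $\Lowerboundtwo(1)=0$):
\[
\AR(\mr)\ \ge\ \mr\,\Qone(\mr)+\int_{\mr}^{\tau_1}\Qtwo(\tau)\,\d\tau\ \ge\ \mr\,\Lowerboundone(s^*)+\int_{\mr}^{\tau_1}\Lowerboundtwo\bigl(s(\tau)\bigr)\,\d\tau .
\]

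Reparametrize by $s$: let $\tau(\cdot)$ invert the non-increasing $s(\cdot)$ (so $\tau(1)=\tau_1$, $\tau(s^*)=\mr$) and put $\psi(s)\defeq\tau(s)/\tau_1$, a non-increasing function with $\psi(1)=1$ and $\psi(s^*)=\mr/\tau_1$. The structural point is that $s\,\psi(s)=nR(s/n)/\tau_1$ is concave, vanishes at $s=0$, is non-decreasing on $[0,s^*]$, and equals $1$ at $s=1$; hence $s\,\psi(s)\ge1$, i.e.\ $\psi(s)\ge1/s$, for all $s\in[1,s^*]$. Changing variables and integrating by parts twice (as Stieltjes integrals; $\Lowerboundtwo$ is continuous, non-decreasing, $\Lowerboundtwo(1)=0$), and using $\Lowerboundone(s^*)\ge\Lowerboundtwo(s^*)$ together with $\psi(s)\ge1/s$:
\begin{align*}
\frac{\AR(\mr)}{\tau_1}\ &\ge\ \psi(s^*)\Lowerboundone(s^*)+\tau_1^{-1}\!\int_{\mr}^{\tau_1}\!\Lowerboundtwo\bigl(s(\tau)\bigr)\,\d\tau\ =\ \psi(s^*)\Lowerboundone(s^*)+\int_{1}^{s^*}\!\Lowerboundtwo(s)\bigl(-\psi'(s)\bigr)\,\d s\\
&=\ \psi(s^*)\bigl[\Lowerboundone(s^*)-\Lowerboundtwo(s^*)\bigr]+\int_{1}^{s^*}\!\psi(s)\,\d\Lowerboundtwo(s)\ \ge\ \frac{\Lowerboundone(s^*)-\Lowerboundtwo(s^*)}{s^*}+\int_{1}^{s^*}\!\frac{\d\Lowerboundtwo(s)}{s}\\
&=\ \frac{\Lowerboundone(s^*)}{s^*}+\int_{1}^{s^*}\frac{\Lowerboundtwo(s)}{s^2}\,\d s\ =:\ D(s^*).
\end{align*}
Finally, since $\Lowerboundone$ and $\Lowerboundtwo$ are explicit (piecewise-rational in $s$), a direct calculation shows $\min_{s\ge1}D(s)>1/2.63$ — the minimum is $\approx0.380$, attained near $s=3$. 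Hence $\AR^{ind}(\mr)\le\tau_1\le\bigl(\min_{s\ge1}D(s)\bigr)^{-1}\AR(\mr)<2.63\,\AR(\mr)$, which together with the case $s^*\le1$ proves the theorem with $c=\bigl(\min_{s\ge1}D(s)\bigr)^{-1}<2.63$.

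The crux — and the only place where regularity is essential — is the inequality $\psi(s)\ge1/s$ on $[1,s^*]$: it rests on $s\,\psi(s)=nR(s/n)/\tau_1$ being non-decreasing up to $s^*$, which uses concavity of the revenue curve, whereas for irregular $\dist$ this product need not be monotone and both the argument and the statement fail. The other nontrivial point is the scalar optimization $\min_{s\ge1}D(s)>1/2.63$, which is elementary but must be carried out carefully, since $\Lowerboundone,\Lowerboundtwo$ are maxima over an integer parameter and the constant $2.63$ is essentially tight (the worst instance being a near–equal-revenue marginal with $n\,\Quan(\mr)\approx3$).
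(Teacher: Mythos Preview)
Your proof is correct and arrives at exactly the same one-variable optimization as the paper: after the substitution $\beta=1/s^*$, your $D(s^*)=\Lowerboundone(s^*)/s^*+\int_1^{s^*}\Lowerboundtwo(s)/s^2\,\d s$ coincides with the paper's $\beta\,\Lowerboundone(1/\beta)+\int_\beta^1\Lowerboundtwo(1/u)\,\d u$, whose minimum $\approx 1/2.63$ is indeed attained at $\beta=1/3$ (i.e., $s^*=3$). The case split $s^*\le 1$ versus $s^*>1$ is also the same as the paper's $\kk\ge 1$ versus $\kk<1$.

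The route to this expression differs in flavor. The paper treats the realized payment $\Payment'=\min\{\Payment,\ww\}$ as a random variable, applies the ex-ante bound (Lemma~\ref{lem: exante}) pointwise to $\Payment'$, and then writes $\Ex{\ind{\Payment'\ge\mr}/(n\Quan(\Payment'))}$ as an integral in $u=1/s$; this automatically places $\Lowerboundtwo(1/u)$ under the integral without any further manipulation. You instead work directly from the integral representation~\eqref{eq: pwi_rev}, change variable to $s$, integrate by parts twice, and invoke the regularity-based inequality $\psi(s)\ge 1/s$. At heart these are the same use of the concave revenue curve: the paper's pointwise inequality $\valo\cdot n\Quan(\valo)\ge\opt$ is precisely your $s\,\psi(s)\ge 1$ after normalizing by $\tau_1$. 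Your analytic route makes the role of regularity (the step $\psi(s)\ge 1/s$) more transparent, while the paper's probabilistic route avoids the integration-by-parts bookkeeping; neither buys anything the other does not.
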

\begin{proof}

We denote the revenue of the optimal Myerson auction under mutually independent distribution $\distind$ as $\opt$ and derive an upper bound on $\opt$ similar to a widely used Ex-ante relaxation. 

Let $\ww$ be the threshold such that $\Quan(\ww) = \frac{1}{n}$ (if $\ww$ does not exist, we define $\ww\eqdef \sup\{\val~|~\Quan(\val)\ge \frac{1}{n}\}$), then the Ex-ante relaxation gives an upper bound $\opt\leq \ww$. 
The following lemma is essentially the well-known Ex-ante relaxation upper bound on $\opt$.
\begin{lemma}[Ex-ante relaxation]
\label{lem: exante}
For any regular distribution $\dist$, let 
$\valo\in [\mr,\ww)$ when $\mr\le\ww$, or $\valo=\mr$ when $\ww<\mr$. Then, 
\begin{equation}
    \valo \geq \frac{\opt}{n\cdot\Quan(\valo)}.
\end{equation}
\end{lemma}
\begin{proof} 
We need to prove that $\valo\cdot n\cdot\Quan(\valo)\ge\opt$. First, we consider the case $\mr\le\ww$. 
Consider the optimal mechanism for selling ex-ante (in expectation) $n\cdot\Quan(\valo)$ items. Since 
$n\cdot\Quan(\valo)\ge 1$, this mechanism is a relaxation of the optimal auction and gets at least as 
much revenue. On the other hand, as $\dist$ is regular and $\valo\ge\mr$, the optimal ex-ante mechanism 
sells to each bidder at price $\valo$ and thus has revenue $n\cdot\valo\cdot\Quan(\valo)\ge\opt$.
In the case $\ww<\mr$, consider the optimal mechanism for selling items in unlimited supply. It offers the monopoly reserve $\mr=\valo$ to each agent and thus 
gets revenue 
$n\cdot\valo\cdot\Quan(\valo)$. Its revenue is an obvious upper bound on $\opt$. Thus $n\cdot\valo\cdot\Quan(\valo)\ge\opt$.
\end{proof}

We are now ready to finish the proof of Theorem~\ref{thm: myersoniid}.
The first useful observation is that for each $\valo < \ww$ (i.e., $\Quan(\valo)> \frac{1}{n}$) there is a positive probability that at least two bidders have values larger than $\valo$ by Lemma~\ref{lem: ziyong}.
We also need to consider the relation between the monopoly reserve $\mr$ and $\ww$. To this end, let us define $\kk\eqdef \frac{1}{n\cdot\Quan\left (\mr\right )}$. We consider two cases. 
First, if $\mr \geq \ww$, i.e. $\kk \ge 1$, then 
\begin{equation*}
\Rev\left (\dists, \gM\right ) \geq \frac{\reserve^*}{\kk + 1} \geq \frac{\kk\cdot \opt}{\kk + 1} \geq \frac{1}{2}\cdot \opt,
\end{equation*}
where the first inequality holds by Lemma~\ref{lem: wine2},
as we sell the item with probability $\Qone(\mr)\ge\frac{n\cdot\Quan(\mr)}{1+n\cdot\Quan(\mr)}=\frac{1}{1+\kk}$ with the payment of at least $\mr$; the second inequality holds by Lemma~\ref{lem: exante} for $\valo = \mr$; the last inequality holds, as $\kk \ge 1$.

Second, if $\mr < \ww$. Let us consider the payment $\Payment=\sum_{i\in[n]}\payi(\vals)$ in Myerson's auction as a random variable for pair-wise independent prior $\vals\sim\dists$. To compare the expected value of $\Payment$ to $\opt$, we would like to apply Lemma~\ref{lem: exante} for each realization of $\Payment$ that is at least $\mr$. As Lemma~\ref{lem: exante} only applies to $\Payment \in  [\reserve^*, \ww)$, we round down all realization of $\Payment\ge\ww$ to $\ww$ by defining another random variable $\Payment'\eqdef\min\{\Payment, \ww\}$. Specifically, we get  
\begin{equation*}
\label{eq: thm1}
\Ex[\vals\sim\dists]{\Rev(\vals)} = \Ex{\Payment} \ge \Ex{\Payment'}\geq \Ex{\frac{\ind{\Payment'\ge\mr}\cdot \opt}{n\cdot\Quan\left (\Payment'\right )}}.
\end{equation*}
Hence, we get the following lower bound on the ratio of $\Ex[\vals\sim\dists]{\Rev(\vals)}$ and $\opt$.

\begin{multline}
\frac{\Ex[\vals\sim\dists]{\Rev(\vals)}}{\opt} \ge \Ex{\frac{\ind{\Payment'\ge\mr}}{n\cdot\Quan\left (\Payment'\right )}}
 = \kk\cdot \prob{\frac{1}{n\cdot\Quan\left (\Payment'\right )}\geq \kk} + \int_\kk^1\prob{\frac{1}{n\cdot\Quan\left (\Payment'\right )}\geq u}\d u \\
 = \kk\cdot \prob{\Payment' \geq \mr}+ \int_\kk^1\prob{\Payment'\geq \Quan^{-1}\left(\frac{1}{n\cdot u}\right)}\d u,
 \label{eq: iidratio}
\end{multline}
where the first equality holds as (a) $\Ex{Y}=\int_{0}^{\infty}\Prx{Y\ge u}\d u$ for any random variable $Y\ge 0$ and (b) for   $Y=\frac{\ind{\Payment'\ge\mr}}{n\cdot\Quan\left (\Payment'\right )}$ we have $Y\in[\kk,1]\cup\{0\}$, as $\Payment'=0$ or  
$\mr\leq \Payment'\leq \ww$, and in the later case $\kk=\frac{1}{n\cdot\Quan\left (\mr\right )}\le \frac{1}{n\cdot\Quan\left (\Payment'\right )}\le\frac{1}{n\cdot\Quan\left (\ww\right )}\le 1$ (usually $n\cdot\Quan\left (\ww\right )=1$, but, if $\dist$ has an atom, we might have instead $\frac{1}{n\cdot\Quan\left (\ww\right )}\le\lim_{\eps\to +0}\frac{1}{n\cdot\Quan(\ww+\eps)}=1$).

The event $\{\Payment' \ge \mr\}$ in \eqref{eq: iidratio} is equivalent to at least one bidder having value larger than $\mr$. I.e., we may use $\Lowerboundone\left(\mr\right )$ from Lemma~\ref{lem: ziyong} as a lower bound for $\prob{\Payment' \geq \mr}$ with $\sum_{i = 1}^n\Quani(\mr) = n\cdot \Quan(\mr) = \frac{1}{\kk}$.
The event $\{\Payment' \ge \tau\eqdef\Quan^{-1}(\frac{1}{n\cdot u})\}$ for $u\in[\kk, 1]$ is equivalent to at least two bidders having values greater than or equal to $\tau$. We use $\Lowerboundtwo\left (\tau\right )$ from Lemma~\ref{lem: ziyong} as a lower bound for $\prob{\Payment'\geq \tau}$ with  $\sum_{i = 1}^n\Quani(\tau)=n\cdot\Quan(\tau) =\frac{1}{u}$. Hence, after applying Lemma~\ref{lem: ziyong} to \eqref{eq: iidratio} we get
\begin{multline}
\label{eq:minimization_beta}    
\frac{\Ex[\vals\sim\dists]{\Rev(\vals)}}{\opt} \ge \kk\cdot \Lowerboundone\left (\frac{1}{\kk}\right ) + \int_\kk^1\Lowerboundtwo\left (\frac{1}{u}\right )\d u
\\
\ge \min_{\beta\in[0, 1]}\left\{\beta\cdot \Lowerboundone\left (\frac{1}{\beta}\right ) + \int_\beta^1\Lowerboundtwo\left (\frac{1}{u}\right )\d u\right\}.
\end{multline}

We find quantity in the RHS of the lower bound~\eqref{eq:minimization_beta} by numerically optimizing for the worst parameter $\beta\in[0,1]$ (note that $\Lowerboundone(x)$ and $\Lowerboundtwo(x)$ are two simple explicit functions).
Figure~\ref{fig: proof_ratio} (a) shows the curve of $\Lowerboundone\left (\frac{1}{v}\right )$ and $\Lowerboundtwo\left (\frac{1}{v}\right )$. Figure~\ref{fig: proof_ratio} (b) shows the lower bound~\eqref{eq:minimization_beta} on the ratio $\frac{\Ex{\Rev(\vals)}}{\opt}$ as a function of $\beta$. 
The minimum of $\approx 1/2.63$ is achieved for $\beta = \frac{1}{3}$. Thus we complete the proof for the identical marginals.
\begin{figure}[htbp]
\centering  
\subfigure[Quantile bounds]{   
\begin{minipage}{7cm}
\centering    
\includegraphics[scale=0.4]{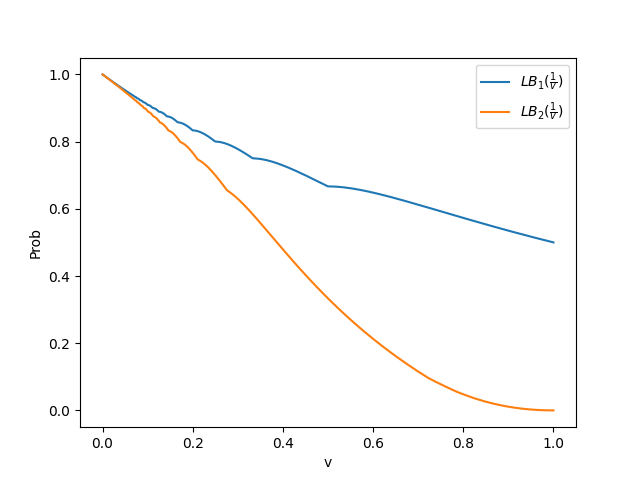}  
\end{minipage}
}
\subfigure[Lower bound on $\frac{\Ex{\Rev(\vals)}}{\opt}$]{ 
\begin{minipage}{7cm}
\centering    
\includegraphics[scale=0.4]{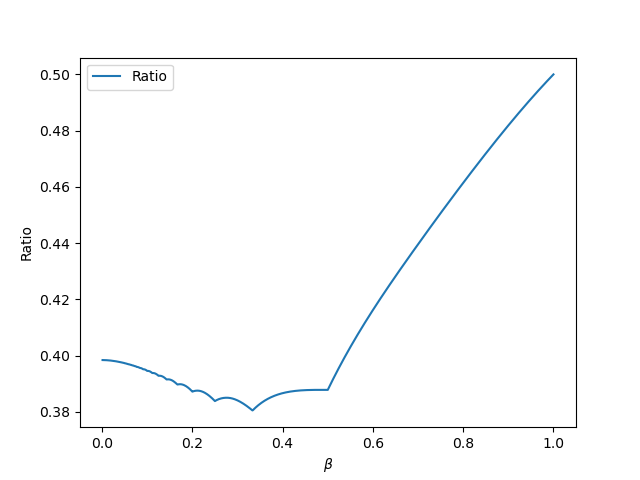}%
\end{minipage}
}
\caption{Curves in the proof of Theorem \ref{thm: myersoniid}}   
\label{fig: proof_ratio}
\end{figure}
\end{proof}


\subsection{Nonidentical Marginal Distributions}
\label{sec: ar}
We show in this section that the second price auction $\AR(\reserve)$ with any anonymous reserve $\reserve$ is pairwise-robust for any set of regular\footnote{The revenue gap between AR and Myerson's 
mechanism may be arbitrarily large for the mutually independent prior $\distind$ with irregular marginals} marginal distributions $(\disti)_{i\in[n]}$. 

\begin{theorem}
\label{thm: AR_main}
For any reserve price $\reserve\ge 0$ and regular marginals $(\disti)_{i\in[n]}$, 
$\AR(\reserve)$ is pairwise-robust with a constant $c\le 18.07$.
\end{theorem}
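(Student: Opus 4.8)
The plan is to compare the two integral representations in~\eqref{eq: pwi_rev} region by region. The reserve term is immediate: Lemma~\ref{lem: wine} gives $\reserve\cdot\Qoneind(\reserve)\le 1.299\cdot\reserve\cdot\Qone(\reserve)\le 1.299\cdot\AR(\reserve)$, so the whole difficulty is to bound $\int_\reserve^\infty\Qtwoind(\tau)\,\d\tau$ by a constant multiple of $\AR(\reserve)$.

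For this I would partition $[\reserve,\infty)$ according to the size of $\momone(\tau)\defeq\sum_{i\in[n]}\Quani(\tau)$, the expected number of bidders clearing threshold $\tau$, which is non-increasing in $\tau$. Fix a constant $a>1$. On the ``high'' part $\{\tau:\momone(\tau)\ge a\}$ the comparison is pointwise: $\Qtwoind(\tau)\le 1$, while by Lemma~\ref{lem: ziyong} and monotonicity of $\Lowerboundtwo$ we have $\Qtwo(\tau)\ge\Lowerboundtwo(\momone(\tau))\ge\Lowerboundtwo(a)$, so the contribution of this part to $\int\Qtwoind$ is at most $\Lowerboundtwo(a)^{-1}\int_\reserve^\infty\Qtwo\le\Lowerboundtwo(a)^{-1}\AR(\reserve)$. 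On the complementary ``low'' part $\{\tau:\momone(\tau)<a\}$ the pointwise comparison breaks down completely --- the uniform-marginals example preceding the theorem shows that $\Qtwo(\tau)$ can be $o(1)$ while $\Qtwoind(\tau)=\Omega(1)$ exactly when $\momone(\tau)\approx 1$ --- so here I would instead use the second-moment estimate $\Qtwoind(\tau)\le\momone(\tau)^2/2$ (valid since $\sum_{i<j}\Quani(\tau)\Quanj(\tau)\le\momone(\tau)^2/2$) and prove the key inequality $\int_{\{\momone(\tau)<a\}}\momone(\tau)^2\,\d\tau\le C\cdot\AR(\reserve)$ for a universal constant $C$.

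Establishing this last bound is the crux, and it is where regularity of the marginals is essential. Regularity says each revenue curve $\tau\mapsto\tau\Quani(\tau)$ is quasi-concave --- equivalently, $q\mapsto q\cdot\Quani^{-1}(q)$ is concave --- so above the monopoly reserves one has $\tau\Quani(\tau)$ non-increasing, which (i) forces a $1/\tau$-type decay of $\momone(\tau)$ that makes $\int\momone(\tau)^2\,\d\tau$ converge, and (ii) rules out the pathological scenario in which $\momone$ lingers just below $1$ over a long range of thresholds (that would require some bidder to have a very large monopoly reserve, whose revenue can then be accounted for separately). With the decay in hand I would relate $\int_{\{\momone<a\}}\momone(\tau)^2\,\d\tau$ to revenue-like quantities and charge them against $\AR(\reserve)$ using \emph{both} of its parts: its competition term $\int_\reserve^\infty\Qtwo$ on the sub-range where $\momone(\tau)>1$ (via $\Lowerboundtwo$ from Lemma~\ref{lem: ziyong}) and its reserve term $\reserve\cdot\Qone(\reserve)$ where $\momone(\tau)\le 1$ (via $\Lowerboundone$ and Lemma~\ref{lem: wine2}), using monotonicity of $\momone$ to line up the thresholds. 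Assembling the per-region bounds and numerically optimizing over the breakpoints (such as $a$) then yields the constant $c\le 18.07$.

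The main obstacle is precisely the regime $\momone(\tau)\lesssim 1$: there is no pointwise lower bound on $\Qtwo(\tau)$ available there, so the $\Qtwoind$-mass accumulated in this regime must be reimbursed non-locally, drawing on the reserve term and on the competition term from nearby thresholds where $\momone$ is bounded away from $1$. Making this bookkeeping go through with a single universal constant --- in particular handling, via regularity, bidders with atypically large monopoly reserves and the possibility that the reserve $\reserve$ itself lies below the monopoly reserves of many bidders --- is the technical heart of the argument, and the reason the constant one obtains here is substantially larger than in the identical-marginals case.
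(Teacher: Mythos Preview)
Your high-level plan---split by the value of $\momone(\tau)=\sum_i\Quani(\tau)$, use $\Lowerboundtwo$ pointwise where $\momone$ is bounded away from $1$, and handle the small-$\momone$ tail non-locally using regularity---matches the paper's strategy. The paper likewise defines the threshold $\ww$ with $\momone(\ww)=1$, normalizes $\max\{\reserve,\ww\}=1$, separates $\AR^{ind}$ into a $\corerev$ part on $[\reserve,\ww]$ and a $\tailrev$ part on $[\ww,\infty)$, and bounds $\tailrev$ by an explicit constant via regularity (your ``$1/\tau$-type decay'').

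Where your proposal has a real gap is the claimed key inequality $\int_{\{\momone<a\}}\momone(\tau)^2\,\d\tau\le C\cdot\AR(\reserve)$. This is \emph{false} in general. Take $n=2$ with bidder~$1$ having $\Quani[1](\tau)=1$ on a long interval $[\,\reserve,1\,]$ (support starting at $1$) and bidder~$2$ with support $[\epsilon,2\epsilon]$, and set $\reserve=\epsilon$. Then $\momone(\tau)\approx 1$ on $[2\epsilon,1]$, so $\int_{\{\momone<a\}}\momone^2\ge 1-2\epsilon$, while $\AR(\reserve)\le 2\epsilon$ (for any joint distribution, since $\Qtwo(\tau)=0$ once $\tau>2\epsilon$). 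The estimate $\Qtwoind\le\momone^2/2$ throws away the cross structure: when one bidder dominates, $\Qtwoind\approx\Quani[1]\cdot(\sum_{j\ne 1}\Quani[j])\ll\momone^2$, and it is precisely this asymmetric regime that your ``account for separately'' clause leaves unresolved.

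The paper's fix is a genuine additional idea you have not identified: a case split on whether some $\Quani(\ww)\ge\pp$ for a tuned constant $\pp$. If not (Case~2(a)), concavity of $p\mapsto p/((1-p)\tau+p)$ forces a \emph{lower} bound $\momone(\tau)\ge \frac{\pp}{(1-\pp)\tau+\pp}+\frac{1-\pp}{\pp\tau+(1-\pp)}>1$ on $[\reserve,\ww]$, so $\Lowerboundtwo$ applies pointwise there. If some bidder dominates (Case~2(b)), the paper proves a \emph{pointwise} comparison $\Qtwo(\tau)\ge\Lowerboundthree(\pp)\cdot\Qtwoind(\tau)$ on $[\reserve,\ww]$ (Lemma~\ref{lem: q2lb2}): the trick is that with one bidder almost surely above $\tau$, the event $\{|\{i:\vali\ge\tau\}|\ge 2\}$ is close to $\{|\{i>1:\vali\ge\tau\}|\ge 1\}$, which \emph{is} controlled by Lemma~\ref{lem: wine}. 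This pointwise bound in the dominant-bidder regime is exactly the missing ingredient your sketch needs; without it the $\int\momone^2$ route cannot close.
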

\begin{proof}
At a high level our proof proceeds as follows. 
\begin{enumerate}
\item We write the revenue of AR for $\dists$ and $\distind$ via quantile representations~\eqref{eq: pwi_rev}. 
Similar to the case of identical marginals,
we define the ex-ante relaxation threshold $\ww\eqdef \sup\{\val~|~\sum\limits_{i\in[n]}\Quani(\val)\geq 1\}$.
However, $\Qtwo(\tau)$ and $\Qtwoind(\tau)$ can be properly compared only when the expected number of bidders above the respective threshold $\tau$ is strictly larger than $1$ by the condition of Lemma~\ref{lem: ziyong}, i.e., for the thresholds $\tau<\ww$. 
\item We divide the revenue of the second price auction with anonymous reserve in~\eqref{eq: pwi_rev} for $\distind$ into two parts: the $\tailrev$ and the $\corerev$ 
\[
    \tailrev\eqdef \int_{\max\{\reserve, \ww\}}^{+\infty}\Qtwoind\left (\tau\right )\d \tau,\quad\quad\quad
    \corerev \eqdef \AR^{ind}\left (\reserve\right ) - \tailrev.
\]
We then argue that $\tailrev$ does not dominate the contribution to $\AR^{ind}(\reserve)$, i.e., the ratio $\tailrev/\corerev=O(1)$ is bounded by a constant.
After normalizing value distributions $\dists$ and $\distind$ so that $\max\{\ww, \reserve\} = 1$, we apply bounds from Lemma~\ref{lem: ziyong} for the 
expected number of bidders $\moww\eqdef\lim\limits_{\eps\to 0^+}\sum_{i\in[n]}\Quani(1+\eps)$ (usually, $\moww=\sum_{i\in[n]}\Quani(1)$ when none of $\disti$ have atoms).
\item We show that $\corerev$ can be covered by the  
revenue $\AR(\reserve)$ in the pairwise independent case.
\end{enumerate}

Throughout Section~\ref{sec: ar}, we assume without loss of generality that $\max\{\ww, \reserve\} = 1$, as one can simultaneously scale all values $\vali$ in $\vals\sim\dists$ and $\vals\sim\distind$ by the same multiplicative factor. We begin by establishing an upper bound on $\tailrev$ with respect to $\moww\eqdef\lim\limits_{\eps\to 0^+}\sum_{i\in[n]}\Quani(1+\eps)$.

\begin{claim}
\label{clm: T}
Assuming $\max\{\ww,\reserve\}=1$ and that all marginals are regular distributions
\begin{equation}
\label{eq: T}
\tailrev = \int_1^{ + \infty}\Qtwoind(\tau)\d \tau\leq  \frac{2\moww\left (1 - e^{-\moww}\cdot\left (1 + \moww\right )\right )}{\left (2 - \moww\right )^2} + \frac{\moww^2}{4}.
\end{equation}  

Furthermore, 
the maximum of the upper bound~\eqref{eq: T} is $\frac{9}{4} - \frac{4}{e}$ for $\moww=1$. 
\end{claim}
The complete proof of Claim~\ref{clm: T} is deferred to Appendix~\ref{app: corollary}.
\begin{proof}[Proof sketch.]
 We first prove that $ \Qtwoind\left (\tau\right )\leq\Qtwoind\left (1\right )\leq 1 - e^{-\moww}\left (1 + \moww\right )$ for any $\tau \ge 1$. Furthermore, in the range $\tau\ge 1 + \frac{2\moww}{\left (2 - \moww\right )^2}$ 
we have $\Qtwoind(\tau)\leq\left(
    \frac{2 - \moww}{\moww}\cdot \tau + 1\right)^{-2}.$
For the range $1\leq \tau\leq 1 + \frac{2\moww}{\left (2 - \moww\right )^2}$, we use the former bound; for the remaining $\tau \ge 1 + \frac{2\moww}{\left (2 - \moww\right )^2}$ we apply the latter. Hence,
\begin{multline*}
\tailrev=\int_{1}^{ 1 + \frac{2\moww}{\left (2 - \moww\right )^2}}\Qtwoind(\tau)\d \tau+
\int_{ 1 + \frac{2\moww}{\left (2 - \moww\right )^2}}^{+\infty}\Qtwoind(\tau)\d \tau \le
    \frac{2\moww}{\left (2 - \moww\right )^2}\cdot
    \left (1 - e^{-\moww}\cdot\left (1 + \moww\right )\right )+\\
    + \int_{1 + \frac{2\moww}{\left (2 - \moww\right )^2}}^{+\infty}\left(\frac{2 - \moww}{\moww}\cdot \tau + 1\right )^{-2}\d \tau= \frac{2\moww\left (1 - e^{-\moww}\cdot\left (1 + \moww\right )\right )}{\left (2 - \moww\right )^2} + \frac{\moww^2}{4},
\end{multline*}
which concludes the proof.
\end{proof}
Similar to the case of identical marginals, we consider two cases depending on the relation between $\reserve$ and $\ww$. 
We further differentiate the case when $\reserve< \ww$ into two cases:
(a) when all $\Quani(1)$ are small, and (b) when there is a bidder $i$ with large quantile $\Quani(1)$. 

\paragraph{Case 1: $\reserve > \ww$}
We first show that in this case, the $\corerev$ revenue 
is at least a constant of $\tailrev$ for the mutually independent prior $\distind$. 

\begin{claim}
\label{clm: r>w}
If $\reserve> \ww$, then
$
\frac{\tailrev}{\corerev}< \frac{9/4 - 4/e}{1 - 1/e} < 1.24 .$
\end{claim}
\begin{proof}
Since $\max\{\reserve,\ww\}=1$, we have $\reserve = 1$.
By definition of $\ww$ we have $\moww\le\sum_{i = 1}^n\Quani(1)<
\sum_{i = 1}^n\Quani(\ww)
\le 1$. 
We already have an upper bound~\eqref{eq: T} on $\tailrev$ that depends on $\moww$. There is also the following simple lower bound on $\corerev=\reserve\cdot \Qoneind\left (\reserve\right ) + \int_{\reserve}^{+\infty}\Qtwoind\left (\tau\right )\d \tau - \int_{\max\{\reserve,\ww\}}^{+\infty}\Qtwoind\left (\tau\right )\d \tau$.  
\begin{equation}
\label{eq: gap}
\corerev = 
\reserve\cdot \Qoneind\left (\reserve\right )=
1\cdot\left[1-\prod_{i\in[n]}(1-\Quani(1))\right]
\ge 1 - e^{-\moww},
\end{equation}
where the inequality holds, as after rearranging and taking the logarithm it is equivalent to $\sum_{i\in[n]}\ln(1-\Quani(1))\le -\moww=-\sum_{i\in[n]}\Quani(1)$ (note that $\ln(1-x)\le -x$ for $x\in[0,1]$).
These two bounds give an upper bound on $\tailrev/\corerev$:
\begin{equation*}
    \frac{\tailrev}{\corerev}\leq \frac{\frac{2\moww\left (1 - e^{-\moww}\cdot\left (1 + \moww\right )\right )}{\left (2 - \moww\right )^2} + \frac{\moww^2}{4}}{1 - e^{-\moww}}.
\end{equation*}
This upper bound achieves its maximum of $\frac{9/4 - 4/e}{1 - 1/e}$ for $\moww \rightarrow 1^{-}$.
\end{proof}
The pairwise independent revenue approximates $\corerev$, which directly follows from Lemma~\ref{lem: wine}
\begin{equation}
\label{eq: case1}
\AR(\reserve) = \reserve\cdot \Qone\left (\reserve\right ) + \int_{\reserve}^{+\infty}\Qtwo\left (\tau\right )\d \tau \geq \reserve\cdot \Qone\left (\reserve\right )\geq \frac{\reserve\cdot \Qoneind(\reserve)}{1.299} = \frac{\corerev}{1.299}.
\end{equation}
After combining this bound with the bound from Claim~\ref{clm: r>w} we get
\[
\frac{\AR^{ind}\left (\reserve\right )}{\AR\left (\reserve\right )} 
\underset{\text{\eqref{eq: case1}}}{\le} 
1.299\cdot \frac{\corerev+\tailrev}{\corerev} 
 \underset{\text{(Claim~\ref{clm: r>w})}}{\le}  1.299 \cdot \left (1 + 1.24\right )\le 2.91.
\]
This concludes the proof for the pairwise-robustness of AR in this case.

\paragraph{Case 2: $\reserve \leq \ww$}
Given our choice of scaling $\max\{\reserve,\ww\}=1$, we have $\ww = 1$ in this case. 
By Claim~\ref{clm: T} we have
\begin{equation*}
\AR^{ind}\left (\reserve\right ) = \reserve\cdot \Qoneind\left (\reserve\right ) + \int_\reserve^1\Qtwoind\left (\tau\right )\d \tau + \tailrev\leq 1 + \frac{9}{4} - \frac{4}{e} = \frac{13}{4} - \frac{4}{e},
\end{equation*}
where the inequality holds as $\reserve\cdot \Qoneind\left (\reserve\right ) + \int_\reserve^1\Qtwoind\left (\tau\right )\d \tau=
\int_0^\reserve \Qoneind\left (\reserve\right )\d \tau + \int_\reserve^1\Qtwoind\left (\tau\right )\d \tau\le\int_{0}^{1}1~\d \tau=1$.

Hence, it suffices to show that $\AR\left (\reserve\right )=\Omega(1)$. Since $\AR\left (\reserve\right ) = \reserve\cdot \Qone\left (\reserve\right ) + \int_{\reserve}^{+\infty}\Qtwo\left (\tau\right )\d \tau$, we can use Lemma~\ref{lem: ziyong} to lower bound $\Qtwo\left (\tau\right )$. However, the RHS of \eqref{eq: ziyong2} is related to $\momone(\tau)=\sum_{i\in[n]}\Quani(\tau)$ and requires $\momone(\tau)$ to be strictly larger than 1. Thus, we should first provide a lower bound for $\momone(\tau)$. We have the following lemma for regular marginals:
\begin{lemma}
\label{lem: reg_quantile}
For any regular distribution $\val\sim\dist$ with 
$p=\Quan(1)=\Prx[\val\sim\dist]{\val\ge 1}$ we have
\begin{equation}
\label{eq: quantile}
\forall \tau\ge 1\quad
\Quan(\tau) \le \frac{p}{\left (1 - p\right )\cdot \tau + p};
\quad\quad\quad
\forall \tau\le 1\quad
\Quan(\tau) \ge \frac{p}{\left (1 - p\right )\cdot \tau + p}
~
.
\end{equation}
\end{lemma}
\begin{proof} It is well known that the revenue as a function of the quantile $\Quan$ is a concave curve if and only if the respective distribution $\dist$ is regular. 
By the concavity (see Figure.~\ref{fig: proof} (a), the revenue-quantile curve must be above the line through points $(p, p)$ and $(1, 0)$ for quantiles $\Quan(\tau)$ with $\tau\ge 1$. Indeed, $\Rev(\Quan)\ge 0$ for the quantile $\Quan=1$ and $\Rev(p)=1\cdot\Quan(1)=p$ when the quantile is $\Quan(1)=p$.
For similar reasons, the revenue-quantile curve must be below the same line for quantiles $\Quan(\tau)$ with $\tau\le 1$.
I.e., 
\begin{equation*}
\forall \tau\ge 1\quad
    \tau\cdot \Quan(\tau)=\Rev(\Quan(\tau))\le -\frac{p}{1 - p}\cdot \Quan(\tau) + \frac{p}{1 - p},\quad\quad
\forall \tau\le 1\quad
    \tau\cdot \Quan(\tau)\ge
    -\frac{p}{1 - p}\cdot \Quan(\tau) + \frac{p}{1 - p},
\end{equation*}
which concludes the proof of Lemma~\ref{lem: reg_quantile}.

\begin{figure}[htbp]
\centering  
\subfigure[Regular revenue curve]{   
\begin{minipage}{7cm}
\centering    
\includegraphics[scale=0.5]{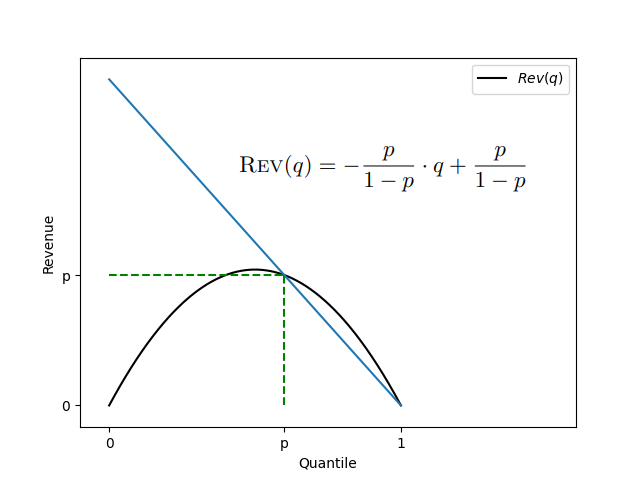}  
\end{minipage}
}
\subfigure[Convexity of $g(\cdot)$]{ 
\begin{minipage}{7cm}
\centering    
\includegraphics[scale=0.42]{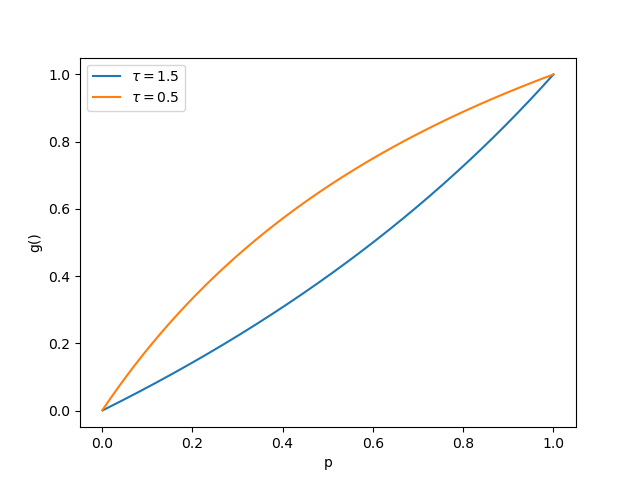}%
\end{minipage}
}
\caption{Curves in the proof of Lemma~\ref{lem: reg_quantile}}   
\label{fig: proof}
\end{figure}
\end{proof}
In order to provide a lower bound on $\momone(\tau)$, we check convexity of the function $g(p)\eqdef\frac{p}{(1-p)\tau+p}$.
\begin{claim}
\label{pro: convex}
Function $g(p)=\frac{p}{(1-p)\tau+p}$ is concave when $\tau\le 1$ and is convex when $\tau\ge 1$ on the interval $p\in[0,1]$.
\end{claim}
\begin{proof}
The second derivative $\frac{\d^2 }{\d p^2}g(p)=\frac{2\tau(\tau - 1)}{(\tau - (\tau - 1)p)^3}$ is non-negative for $\tau\ge 1$ and is non-positive for  $\tau\le 1$, when $p\in[0,1]$. 
Figure~\ref{fig: proof} (b) shows two examples of $g(p)$ with different convexity.
\end{proof}

Now we are ready to give a lower bound on $\momone(\tau)=\sum_{i\in[n]} \Quani(\tau)$ for $\tau\in[\reserve,\ww]$. Let $\probi \eqdef\lim\limits_{\eps\to 0^+} \Quani(1+\eps)$ to simplify the notations. Then $\sum_{i\in[n]}\probi\le 1$. By Lemma~\ref{lem: reg_quantile} for $\tau\le \ww=1$, we have
\begin{equation}
\label{eq: sum_q}
\sum_{i = 1}^n\Quani\left (\tau\right ) \geq \sum_{i =1}^n\frac{\probi}{\left (1 - \probi\right )\cdot \tau + \probi}.
\end{equation}
The function $\frac{\probi}{\left (1 - \probi\right )\tau + \probi}$ in \eqref{eq: sum_q} is concave in $\probi$ when $\tau \le 1$ by Claim~\ref{pro: convex}. 
Thus the RHS of~\eqref{eq: sum_q} is minimized under $\sum_{i\in[n]}\probi\le 1$ when $\probi[1] = 1$ and $\probi = 0$ for $i > 1$. I.e., in the worst case, \eqref{eq: sum_q} gives us $\sum_{i = 1}^n\Quani\left (\tau\right )= 1$, which 
does not allow us to use Lemma~\ref{lem: ziyong}.
In order to make the lower bound on $\momone(\tau)=\sum_{i\in[n]}\Quani(\tau)$ strictly larger than 1, we assume that $\max_i\{\probi\} < \pp < 1$ where $\pp$ is a parameter to be determined later. Next, we consider the case when $\max_i\{\probi\} < \pp$.

\paragraph{Case 2.(a): $\max_i\{\Quani(\ww)\} < \pp$.}
By Claim~\ref{pro: convex}, $\frac{p}{(1 - p)\tau + p}$ is concave in $p$ when $p, \tau\le 1$. Then
\[
\sum_{i = 1}^n\Quani\left (\tau\right ) 
\underset{\text{Lemma~\ref{lem: reg_quantile}}}{\ge} 
\sum_{i =1}^n\frac{\probi}{\left (1 - \probi\right )\cdot \tau + \probi}
\underset{\text{Claim~\ref{pro: convex}}}{\ge} 
\frac{\pp}{\left (1 - \pp\right )\cdot \tau + \pp}+\frac{1 - \pp}{\pp\cdot\tau + \left (1 - \pp\right )}.
\]

Now we can get a lower bound on $\AR\left (\reserve\right )$ as follows.
\begin{multline}
\label{eq: ar_case_a}
\AR\left (\reserve\right ) = \reserve\cdot \Qone\left (\reserve\right ) + \int_{\reserve}^{+\infty}\Qtwo\left (\tau\right )\d \tau
\geq \reserve\cdot \Qone\left (\reserve\right ) + \int_{\reserve}^{1}\Qtwo\left (\tau\right )\d \tau \\
\geq \reserve\cdot \Qone\left (\reserve\right ) + \int_{\reserve}^1\Lowerboundtwo\left (\frac{\pp}{\left (1 - \pp\right )\cdot \tau + \pp}+\frac{1 - \pp}{\pp\cdot \tau + \left (1 - \pp\right )}\right )\d \tau.
\end{multline}
We would like to show that $
\AR\left (\reserve\right )\geq \int_{0}^1\Lowerboundtwo\left (\frac{\pp}{\left (1 - \pp\right )\cdot \tau + \pp}+\frac{1 - \pp}{\pp\cdot \tau + \left (1 - \pp\right )}\right )\d \tau. 
$ Indeed, in the bound~\eqref{eq: ar_case_a} the first term
$\reserve\cdot \Qone\left (\reserve\right )=\int_0^\reserve \Qone\left (\reserve\right ) \d \tau$. Furthermore, $\frac{\pp}{\left (1 - \pp\right )\cdot \tau + \pp}+\frac{1 - \pp}{\pp\cdot \tau + \left (1 - \pp\right )}\le 2$ for any $\tau\ge 0$ and $\pp\in[0,1]$. Hence,
\begin{equation*}
\Qone\left (\reserve\right )\geq \Qone\left (1\right )\geq \Lowerboundone\left (1\right )  = \frac{1}{2} > \frac{1}{3} = \Lowerboundtwo\left (2\right ) \geq \Lowerboundtwo\left (\frac{\pp}{\left (1 - \pp\right )\cdot \tau + \pp}+\frac{1 - \pp}{\pp\cdot \tau + \left (1 - \pp\right )}\right ), 
\end{equation*}
where the first inequality holds as $\Qone(\tau)$ is monotonically decreasing in $\tau$ and $\reserve\le 1$; the second inequality is from Lemma~\ref{lem: ziyong};
the last inequality holds as $\Lowerboundtwo(\moww)$ is monotonically increasing in $\moww$. 

We numerically calculate the integral $\int_{0}^1\Lowerboundtwo\left (\frac{\pp}{\left (1 - \pp\right )\cdot \tau + \pp}+\frac{1 - \pp}{\pp\cdot \tau + \left (1 - \pp\right )}\right )\d \tau$  for a fixed $\pp = 0.674$, and get that it is larger than $0.0984$.
Together with an upper bound on $\AR^{ind}\left (\reserve\right )\le \frac{13}{4} - \frac{4}{e}$ we get the ratio 
$\frac{\AR^{ind}\left (\reserve\right )}{\AR\left (\reserve\right )}\le 18.07$.

\paragraph{Case 2.(b): $\max_i\{\Quani(\ww)\} \ge \pp$.}
Similar to case 1, we first show in Claim~\ref{clm: max>p} that $\frac{\tailrev}{\corerev}=\Omega(1)$ for the mutually independent prior $\distind$. 
The proof of this claim is deferred to Appendix~\ref{app: case3}.
\begin{claim}
\label{clm: max>p}
If $\max_i\{\Quani\left (\ww\right )\} \geq \pp$ and $\reserve\le\ww=1$,
for regular marginals
$\frac{\tailrev}{\corerev} \leq \frac{2\pp^2 - 2\pp + 1}{\pp^3}\cdot \frac{e}{e - 1}.
$
\end{claim}
Next we provide a lower bound on the ratio $\frac{\Qtwo\left (\tau\right )}{\Qtwoind(\tau)}$ for each fixed threshold $\tau\in[\reserve,\ww]$ similar to Lemma~\ref{lem: wine}, but with dependency on the constant parameter $\pp$. 
\begin{lemma}
\label{lem: q2lb2}
Suppose $\max_i\{\Quani\left (\tau\right )\} \geq \pp$. Then for any pairwise prior $\dists$ with regular marginals
\begin{equation*}
   \frac{\Qtwo\left (\tau\right )}{\Qtwoind\left (\tau\right )}\geq \min\left\{\frac{1}{1.299} - \frac{\left (1 - \pp\right )\cdot e}{e - 1}~~,~ \Lowerboundtwo\left (1 + \pp\right )\right\}\eqdef\Lowerboundthree\left (\pp\right ).
\end{equation*}
\end{lemma}
\begin{proof}
Without loss of generality, let us assume that the first bidder has the largest quantile $\Quani(\tau)$. We would like to give a lower bound on $\Qtwo\left (\tau\right )$. To this end, let us consider the following probabilities. 
$$
\hQone\left (\tau\right ) \eqdef \Prx[\vals\sim\dists]{|\{i>1: \vali\geq \tau\}|\geq 1},\quad\quad
\widehat{Q}_1^{ind}\left (\tau\right )\eqdef \Prx[\vals\sim\distind]{|\{i>1: \vali\geq \tau\}|\geq 1}.
$$
Clearly, $\Qtwoind\left (\tau\right ) \leq \widehat{Q}_1^{ind}\left (\tau\right )$ and $\Qtwo\left (\tau\right ) \leq \widehat{Q}_1\left (\tau\right )$, since if at least two bidders have values larger than $\tau$, then at least one of them is not the first bidder.
Similarly, we observe that
\begin{equation*}
 \Qtwo\left (\tau\right ) \ge \hQone\left (\tau\right ) - \Prx[\vals\sim\dists]{|\{i>1: \vali\geq \tau\}|\geq 1, \vali[1] < \tau}.
\end{equation*}
Indeed, the event that at least two bidders exceed the threshold $\tau$ includes the event that at least one bidder with $i>1$ exceeds $\tau$ and $\vali[1]\ge\tau$.
Let us denote as $\hmoww\eqdef \sum_{i = 2}^n\Quani\left (\tau\right )$. We first give a lower bound on $\Qtwo(\tau)$ for the case $\hmoww \leq 1$.
In this case,
\[
\prob{|\{i>1: \vali\geq \tau\}|\geq 1, \vali[1] < \tau}
\leq \sum_{i = 2}^n\prob{\val_i\geq\tau, \vali[1] < \tau}
 = \sum_{i = 2}^n \Quani\left (\tau\right )\cdot (1 - \pp)= (1 - \pp)\cdot\hmoww,
\]
where the first inequality is due to the union bound; the second equality holds because of the pairwise independence of $\dists$; the last equality is the definition of $\hmoww$. Thus,
\begin{equation*}
 \Qtwo\left (\tau\right ) \ge \hQone\left (\tau\right ) - \prob{|\{i>1: \vali\geq \tau\}|\geq 1, \vali[1] < \tau}\ge
 \hQone\left (\tau\right )-(1 - \pp)\cdot\hmoww.
\end{equation*}
Combining this lower bound on $\Qtwo(\tau)$ with the upper bound 
on $\Qtwoind\left (\tau\right )\le\hQone(\tau)$ we get the following
\[
\frac{\Qtwo\left (\tau\right )}{\Qtwoind\left (\tau\right )}\geq \frac{\hQone\left (\tau\right ) - \left (1 - \pp\right )\hmoww}{\widehat{Q}_1^{ind}\left (\tau\right )}
\ge
\frac{\frac{\widehat{Q}_1^{ind}\left (\tau\right )}{1.299} - \frac{\left (1 - \pp\right )\cdot e}{e - 1}\cdot \widehat{Q}_1^{ind}\left (\tau\right )}{\widehat{Q}_1^{ind}\left (\tau\right )}
=\frac{1}{1.299} - \frac{\left (1 - \pp\right )\cdot e}{e - 1},
\]
where the second inequality holds by Lemma~\ref{lem: wine} applied to  $\hQone\left (\tau\right )$ and a similar to~\eqref{eq: gap} argument that $\widehat{Q}_1^{ind}(\tau)=1-\prod_{i>1}(1-\Quani(\tau))\ge 1-e^{-\hmoww}\ge\left(1-\frac{1}{e}\right)\hmoww$\quad (note that function $\frac{1-e^{-x}}{x}$ attains its minimum of $1-1/e$ at $x=1$ on the interval $x\in[0,1]$).
If $\hmoww > 1$, according to Lemma~\ref{lem: ziyong} we directly have:
$
\frac{\Qtwo\left (\val\right )}{\Qtwoind\left (\val\right )}\geq \Lowerboundtwo\left (1 + \pp\right ),
$
which completes the proof. 
\end{proof}
Specially, if we pick $\pp = 0.674$ in Lemma~\ref{lem: q2lb2}, then $\Lowerboundthree(\pp)\ge 0.215$. Now we are ready to complete the proof of Theorem~\ref{thm: AR_main}. 
\begin{align*}
\frac{\AR^{ind}\left (\reserve\right )}{\AR\left (\reserve\right )} &= \frac{\corerev+\tailrev}{\reserve\cdot \Qone\left (\reserve\right ) + \int_{\reserve}^{+\infty}\Qtwo\left (\tau\right )\d \tau}  \tag{equation~\eqref{eq: pwi_rev}}\\
&\leq \frac{\corerev+\tailrev}{\frac{1}{1.299}\cdot r\cdot \Qoneind\left (\reserve\right )  + \Lowerboundthree\left (\pp\right )\cdot \int_\reserve^\ww\Qtwoind\left (\tau\right )\d \tau} \tag{lemma \ref{lem: wine} and lemma \ref{lem: q2lb2}}\\
&\leq \frac{1}{\Lowerboundthree\left (\pp\right )}\cdot \left (1 + \frac{\tailrev}{\corerev}\right )\tag{$\Lowerboundthree\left (\pp\right )\leq \frac{1}{1.299}$}\\
& \leq \frac{1}{\Lowerboundthree\left (\pp\right )} \cdot \left (1 + \frac{2\pp^2 - 2\pp + 1}{\pp^3}\cdot \frac{e}{e - 1}\right )\tag{Claim~\ref{clm: max>p}}
\end{align*}
The ratio is smaller than $18.07$ for $\pp = 0.674$ which completes the proof of Theorem~\ref{thm: AR_main}.
\end{proof}

\bibliographystyle{plain}
\bibliography{main}
\appendix

\section{Proof of Claim~\ref{clm: T}}
\label{app: corollary}
As we shown in Section~\ref{sec: ar}, Claim~\ref{clm: T} follows directly from the following 2 claims:

\begin{claim}
\label{clm: range1}
$\forall\tau\geq 1,$ we have $ \Qtwoind\left (\tau\right )\leq\Qtwoind\left (1\right )\leq 1 - e^{-\moww}\left (1 + \moww\right ).$
\end{claim}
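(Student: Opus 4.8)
The statement has two parts: the monotonicity bound $\Qtwoind(\tau)\le\Qtwoind(1)$ for $\tau\ge1$, which is essentially free, and the Poisson-type tail bound $\Qtwoind(1)\le 1-e^{-\moww}(1+\moww)$, which carries the content. For the monotonicity, observe that for every realization $\vals$ and every $\tau'\ge\tau$ one has $\{i:\vali\ge\tau'\}\subseteq\{i:\vali\ge\tau\}$, so the event $\{\,|\{i:\vali\ge\tau'\}|\ge2\,\}$ is contained pointwise in $\{\,|\{i:\vali\ge\tau\}|\ge2\,\}$; taking probabilities under $\distind$ gives $\Qtwoind(\tau')\le\Qtwoind(\tau)$, hence $\Qtwoind(\tau)\le\Qtwoind(1)$ for $\tau\ge1$. (Strictly, I would replace $\Qtwoind(1)$ by $\Qtwoind(1^{+}):=\lim_{\epsilon\to0^+}\Qtwoind(1+\epsilon)$, which equals $\Qtwoind(1)$ unless some $\disti$ has an atom at $1$; since the tail integral in Claim~\ref{clm: T} ignores the single point $\tau=1$ and $\moww$ is itself defined as the right limit $\lim_{\epsilon\to0^+}\sum_i\Quani(1+\epsilon)$, working with $\Qtwoind(1^{+})$ is harmless.)

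For the tail bound, note that under $\distind$ the events $\{\vali>1\}$ are independent with probabilities $p_i:=\Pr[\vali>1]$ and $\sum_i p_i=\moww=:\lambda$, where $\lambda\le1$ by the definition of $\ww$ and the normalization $\max\{\ww,\reserve\}=1$. Writing $S:=\sum_i\ind{\vali>1}$ we have $\Qtwoind(1^{+})=\Pr[S\ge2]$, so it suffices to prove the following purely probabilistic fact: if $S$ is a sum of independent Bernoulli random variables with $\sum_i p_i=\lambda\le1$, then $\Pr[S\ge2]\le 1-e^{-\lambda}(1+\lambda)=\Pr[\Pois(\lambda)\ge2]$. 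I would prove this in two steps. First, fix the number $n$ of summands and the mean $\lambda$; by a theorem of Hoeffding on the distribution of a sum of independent non-identical Bernoullis with prescribed mean, since $0\le\lambda\le1$ the probability $\Pr[S\le1]=\Pr[0\le S\le1]$ is minimized over all such configurations at the i.i.d.\ point $p_1=\dots=p_n=\lambda/n$ (equivalently $\Pr[S\ge2]$ is maximized there; one may alternatively verify this directly — the Lagrange condition for the constrained maximum of $\Pr[S\ge2]$ forces all $p_i$ equal, and the boundary configurations with some $p_i\in\{0,1\}$ give strictly smaller values). Second, at the i.i.d.\ point a one-line computation gives $\Pr[S\ge2]=1-(1-\lambda/n)^{n-1}\bigl(1+\lambda-\lambda/n\bigr)$, and a short calculus check shows that the sequence $(1-\lambda/n)^{n-1}(1+\lambda-\lambda/n)$ is non-increasing in $n$ with limit $e^{-\lambda}(1+\lambda)$; this is precisely where the hypothesis $\lambda\le1$ is used. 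Combining the two steps with the monotonicity bound finishes the proof.

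The main obstacle is this extremal inequality for $\Pr[S\ge2]$. It genuinely requires $\moww\le1$: for instance two Bernoullis with $p_1=p_2=0.9$ give $\Pr[S\ge2]=0.81$ while $1-e^{-1.8}(1+1.8)\approx0.537$, so no ``distribution-free'' argument can work, and $\moww\le1$ must enter essentially — here it enters both in the Hoeffding step (so that $[0,1]$ contains the mean) and in the monotonicity-in-$n$ step (for $\moww>1$ the i.i.d.\ sequence increases in $n$ and the inequality reverses). The remaining effort is to decide whether to cite Hoeffding's theorem or to give the self-contained smoothing argument that the i.i.d.\ configuration is extremal; once that step is in place the rest is routine.
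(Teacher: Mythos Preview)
Your proof is correct and follows the same two-step route as the paper: first argue that among independent Bernoullis with $\sum_i p_i=\lambda\le1$ the quantity $\Pr[S\ge2]$ is maximized at the i.i.d.\ configuration, then show the i.i.d.\ value $1-(1-\lambda/n)^{n-1}(1+\lambda-\lambda/n)$ increases in $n$ to the Poisson limit $1-e^{-\lambda}(1+\lambda)$ (this second step is exactly where $\lambda\le1$ is needed, and the paper carries it out by the same calculus check you sketch). The only real difference is in the first step: you invoke Hoeffding's 1956 theorem as a black box, whereas the paper gives a self-contained smoothing argument (their Claim~A.2), showing that for any pair of coordinates either averaging them or merging them weakly increases the objective, and thereby reducing to configurations with $m$ equal entries and the rest zero before optimizing over $m$. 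Your route is shorter if one is willing to cite Hoeffding; the paper's is elementary and self-contained.

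One small caveat: your parenthetical ``Lagrange plus boundary'' alternative is not complete as stated. The first-order condition $\partial_{p_i}\Pr[S\ge2]=\mu$ does not by itself force all $p_i$ equal --- configurations with some $p_i=0$ are also candidates, and ruling those out is precisely the content of your Step~2 (monotonicity in $n$). So the Hoeffding citation is what actually carries Step~1; the parenthetical would need the same case analysis the paper does.
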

The second inequality takes equality when $n\rightarrow+\infty$ and $q_i(1) = \frac{\moww}{n}$ for each $i\in[n]$. Then $\Qtwoind(1) = 1 - (1 - \frac{\moww}{n})^n - n\cdot\frac{\moww}{n}\cdot(1 - \frac{\moww}{n})^{n - 1}\rightarrow 1 - e^{-\moww}\left (1 + \moww\right )$. This bound is quite loose, as we apply the same bound for all $\tau\ge 1$. 
The following is a tighter bound for sufficiently large $\tau$:
\begin{claim}
\label{clm: range2}
In the range $\tau\ge 1 + \frac{2\moww}{\left (2 - \moww\right )^2}$ 
we have $\Qtwoind(\tau)\leq\left(
    \frac{2 - \moww}{\moww}\cdot \tau + 1\right)^{-2}.$
\end{claim}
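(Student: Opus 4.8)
The plan is to bound $\Qtwoind(\tau)$ in terms of the expected number of high bidders $m(\tau):=\sum_{i\in[n]}\Quani(\tau)$ and then to control $m(\tau)$ for $\tau\ge1$ through regularity of the marginals. Under $\distind$ the events $\{\vali\ge\tau\}$ are independent with probabilities $\Quani(\tau)$, which gives two estimates to work with: the pair union bound $\Qtwoind(\tau)\le\sum_{i<j}\Quani(\tau)\Quanj(\tau)\le\tfrac12 m(\tau)^2$, sharpenable to the Poisson form $\Qtwoind(\tau)\le 1-e^{-m(\tau)}(1+m(\tau))$ (the estimate already used at $\tau=1$ in Claim~\ref{clm: range1}); and the bound $\Qtwoind(\tau)\le\Qoneind(\tau)^2$, which drops out of $\Qoneind^2-\Qtwoind=\bigl(\prod_i(1-\Quani(\tau))\bigr)\cdot\bigl(\sum_i\tfrac{\Quani(\tau)}{1-\Quani(\tau)}-\Qoneind(\tau)\bigr)\ge0$, using $\sum_i\tfrac{\Quani(\tau)}{1-\Quani(\tau)}\ge\sum_i\Quani(\tau)\ge 1-\prod_i(1-\Quani(\tau))=\Qoneind(\tau)$. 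In all cases what remains is a good, $\tau$-decaying upper bound on $m(\tau)$ (note $\Qoneind(\tau)\le m(\tau)$ as well).

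For that, set $p_i:=\lim_{\eps\to0^+}\Quani(1+\eps)$, so $\sum_i p_i=\moww$ under the normalization $\ww=1$. Lemma~\ref{lem: reg_quantile} gives $\Quani(\tau)\le\frac{p_i}{(1-p_i)\tau+p_i}$ for all $\tau\ge1$, and by Claim~\ref{pro: convex} the map $p\mapsto\frac{p}{(1-p)\tau+p}$ is convex on $[0,1]$ for $\tau\ge1$. Therefore $\sum_i\frac{p_i}{(1-p_i)\tau+p_i}$, maximized subject to $\sum_i p_i=\moww$ and $p_i\in[0,1]$, is attained at a vertex of the feasible polytope — a single $p_i$ equal to $\moww$, the rest $0$ — giving $m(\tau)\le\frac{\moww}{(1-\moww)\tau+\moww}$. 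The same monotonicity shows $\max_i\Quani(\tau)$ is small once $\max_i p_i$ is, which is what keeps the estimate for $\Qtwoind(\tau)$ near tight.

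Feeding these into the first step reduces the claim to an elementary one-variable inequality in $\tau$ for each fixed $\moww\in[0,1]$; the threshold $\tau_0:=1+\frac{2\moww}{(2-\moww)^2}$ is exactly where it turns on, as one sees from the identity $\frac{2-\moww}{\moww}\tau_0+1=\frac{4}{\moww(2-\moww)}$ (the same identity later makes the tail integral in the proof of Claim~\ref{clm: T} telescope). The main obstacle is the tightness of this last step: estimating $\Qtwoind(\tau)$ by $\tfrac12 m(\tau)^2$ (or by $\Qoneind(\tau)^2$) and then $m(\tau)$ by its worst case separately is too lossy, since the configuration maximizing $m(\tau)$ concentrates all its mass on one bidder (so $\Qtwoind(\tau)=0$) while the one maximizing $\Qtwoind(\tau)$ spreads the mass out (so $m(\tau)$ is small) — the two extremes are incompatible. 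One therefore wants to retain the dependence on $\max_i p_i$ in both steps and split on whether $\max_i p_i$ is below a cutoff — where the pair/Poisson bound is near tight and $m(\tau)$ is correspondingly small — or above it — where a few large quantiles dominate $\Qtwoind(\tau)$ — and then tune the cutoff so that both sub-estimates come in under $\left(\frac{2-\moww}{\moww}\tau+1\right)^{-2}$ for all $\tau\ge\tau_0$; carrying out and balancing that split is the technical crux.
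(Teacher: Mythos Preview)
Your proposal is not a proof but a plan that stops at the hard step. You correctly set up the ingredients (the regularity bound $\Quani(\tau)\le \frac{p_i}{(1-p_i)\tau+p_i}$ and the convexity in $p$), and you correctly diagnose why the naive route fails: maximizing $m(\tau)$ under $\sum_i p_i=\moww$ concentrates all mass on one bidder, where $\Qtwoind(\tau)=0$, so any estimate of the form $\Qtwoind(\tau)\le h(m(\tau))$ followed by a worst-case bound on $m(\tau)$ can at best give $\bigl(\frac{1-\moww}{\moww}\tau+1\bigr)^{-2}$, which is strictly weaker than the claimed $\bigl(\frac{2-\moww}{\moww}\tau+1\bigr)^{-2}$. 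But your proposed fix --- ``split on whether $\max_i p_i$ is below a cutoff \ldots\ and then tune the cutoff'' --- is never carried out; you explicitly label it ``the technical crux'' and leave it. As written, nothing establishes the inequality.

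The paper takes a different and cleaner route that avoids the tension you identified altogether. It plugs the regularity upper bound $\ubqi(\tau)=\frac{p_i}{(1-p_i)\tau+p_i}$ into the closed form \eqref{eq: jin} for $\Qtwoind$, and then optimizes the resulting expression over all probability vectors $\vp$ with $\sum_i p_i=\moww$ directly, using a successive-replacement (merging) argument: one shows that for $\tau$ large enough, replacing $(p_1,p_2)$ by $(p_1+p_2,0)$ can only increase the objective, so the maximizer has at most two nonzero coordinates, and for two coordinates the objective is largest at $p_1=p_2=\moww/2$. This immediately yields
\[
\Qtwoind(\tau)\le \ubqi[1](\tau)\cdot\ubqi[2](\tau)=\left(\frac{\moww/2}{(1-\moww/2)\tau+\moww/2}\right)^2=\left(\frac{2-\moww}{\moww}\tau+1\right)^{-2}.
\]
The threshold $\tau\ge 1+\frac{2\moww}{(2-\moww)^2}$ is exactly what is needed to make the merging step go through (via Claims~\ref{clm: m_gen} and~\ref{clm: m_3}). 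The point is that the extremal configuration is two equal bidders, not one, and a direct optimization over $\vp$ finds this without any case split; your split-and-balance idea might be salvageable, but it would require substantially more work and it is unclear it would recover the sharp constant.
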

The bound in Claim~\ref{clm: range2}
achieves equality when $p_1 = p_2 = \frac{s}{2}$ and $\Quan_1 = \Quan_2 = \left(
    \frac{2 - \moww}{\moww}\cdot \val + 1\right)^{-1}$.

Note that both Claim~\ref{clm: range1} and Claim~\ref{clm: range2} are upper bound for $\Qtwoind(\tau)$.
The following lemma is a closed-form expression for $\Qtwoind(\tau)$ given in~\cite{jin2020tight}. 
\begin{lemma}[\cite{jin2020tight}]
\label{lem: ind_prob}
The probability that at least two bidders have values at least $\tau$  under $\distind$ is
\begin{equation}
\label{eq: jin}
\Qtwoind(\tau) = 1 - \left (\prod_{i = 1}^n\left (1 - \Quani \left (\tau\right )\right )\right )\cdot\left (1 + \sum_{i = 1}^n\left (\frac{\Quani \left (\tau\right )}{1 - \Quani \left (\tau\right )}\right )\right ).
\end{equation}
\end{lemma}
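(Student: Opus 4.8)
The plan is to compute $\Qtwoind(\tau)$ directly from its definition by passing to the complementary events. Fix $\tau$ and, following the notation in the surrounding text, let $X_i \eqdef \ind{\vali \ge \tau}$, so that under $\distind$ the $X_i$ are mutually independent Bernoulli variables with $\Pr[X_i = 1] = \Quani(\tau)$. Writing $S = \sum_{i=1}^n X_i$, the event $\{|\{i : \vali \ge \tau\}| \ge 2\} = \{S \ge 2\}$ is the complement of the two disjoint events $\{S = 0\}$ and $\{S = 1\}$, so the first step is the identity $\Qtwoind(\tau) = 1 - \Pr[S = 0] - \Pr[S = 1]$.

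Next I would evaluate the two probabilities using mutual independence. The first is immediate: $\Pr[S = 0] = \prod_{i=1}^n (1 - \Quani(\tau))$. For the second, I would decompose $\{S = 1\}$ as the disjoint union over $j \in [n]$ of the events ``$X_j = 1$ and $X_i = 0$ for all $i \ne j$'', which gives $\Pr[S = 1] = \sum_{j=1}^n \Quani[j](\tau)\prod_{i \ne j}(1 - \Quani(\tau))$. Assuming every $\Quani(\tau) < 1$, I would pull the common factor $\prod_{i=1}^n(1 - \Quani(\tau))$ out of each summand, turning this into $\left(\prod_{i=1}^n(1 - \Quani(\tau))\right)\sum_{j=1}^n \frac{\Quani[j](\tau)}{1 - \Quani[j](\tau)}$. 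Substituting both expressions into the identity from the first step yields exactly \eqref{eq: jin}. One could equivalently read off $\Pr[S \le 1]$ as the sum of the coefficients of $z^0$ and $z^1$ in the probability generating function $\prod_{i=1}^n(1 - \Quani(\tau) + \Quani(\tau) z)$; this is the same computation.

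There is no real obstacle here — the argument is elementary counting together with independence — but one bookkeeping point deserves a sentence. When some $\Quani(\tau) = 1$ the factored form is an indeterminate $0 \cdot \infty$; in that regime one should either keep the unfactored identity $\Qtwoind(\tau) = 1 - \prod_{i=1}^n(1 - \Quani(\tau)) - \sum_{j=1}^n \Quani[j](\tau)\prod_{i\ne j}(1 - \Quani(\tau))$, which still evaluates correctly, or simply note that \eqref{eq: jin} holds in the limit as the $\Quani(\tau)$ increase to $1$. Since the lemma is only ever invoked below for thresholds where the relevant quantiles are strictly below $1$, restricting to that case loses nothing.
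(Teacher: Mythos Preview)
Your argument is correct and is the standard elementary derivation. Note that the paper does not actually prove this lemma: it is quoted from \cite{jin2020tight} and used as a black box, so there is no ``paper's own proof'' to compare against. Your computation via $\Qtwoind(\tau)=1-\Pr[S=0]-\Pr[S=1]$ and factoring out $\prod_i(1-\Quani(\tau))$ is exactly the natural way to obtain \eqref{eq: jin}, and your remark about the degenerate case $\Quani(\tau)=1$ is a sensible caveat.
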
 
Based on this closed-form expression, we provide the proof of Claim~\ref{clm: range1} and Claim~\ref{clm: range2} respectively.

\subsection{Proof of Claim~\ref{clm: range1}}
\label{app: upper}

We give proof to a slightly modified Claim~\ref{clm: range1} in this appendix. 
The modification is that for $\tau\ge 1=\max\{\ww,\reserve\}$ we must have $\sumtau=\sum_{i = 1}^n \Quani\left (\tau\right)\le 1$.
The modification is a general version of Claim~\ref{clm: range1} which holds for all $\tau\ge 1=\max\{\ww,\reserve\}$ and $\sumtau=\sum_{i = 1}^n \Quani\left (\tau\right)\le 1$.
\begin{claim}
\label{lem: upper_bound_two_events}
Suppose $\sumtau = \sum_{i = 1}^n \Quani\left (\tau\right )\le 1$. 
Then $\Qtwoind(\tau)\le 1 - e^{-\sumtau}(1 + \sumtau)$.    
\end{claim}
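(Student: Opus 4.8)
The plan is to reduce the statement to a clean probabilistic inequality and then push a general Poisson‑binomial law towards its Poisson limit by a one‑step smoothing. Write $q_i\defeq\Quani(\tau)$ and $s\defeq\sumtau=\sum_i q_i\le 1$; we may assume every $q_i<1$, since if some $q_i=1$ then $s\le1$ forces all other $q_k=0$, and then $Q_2^{ind}(\tau)=0\le 1-e^{-1}\cdot 2$ trivially. By the closed form of Lemma~\ref{lem: ind_prob}, the claim $Q_2^{ind}(\tau)\le 1-e^{-s}(1+s)$ is equivalent to
\[
\prod_{i}(1-q_i)\Bigl(1+\sum_{i}\tfrac{q_i}{1-q_i}\Bigr)\ \ge\ e^{-s}(1+s).
\]
The left‑hand side is exactly $\Pr[S\le 1]$ for $S=\sum_i X_i$ a sum of independent $X_i\sim\Ber(q_i)$ (a Poisson‑binomial sum), and the right‑hand side is $\Pr[\Pois(s)\le 1]$. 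So the statement asserts that, among Poisson‑binomial sums with a fixed mean $s\le 1$, the Poisson law minimizes $\Pr[S\le 1]$.

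The engine is the move that replaces one coordinate $q$ by two independent coordinates $q/2,q/2$. Splitting $\Pr[S=0]$ and $\Pr[S=1]$ according to the two new Bernoullis and simplifying, this move changes $\Pr[S\le 1]$ by exactly $\Pi\cdot\tfrac{q^{2}}{4}\cdot(\Sigma-1)$, where $\Pi=\prod_{k}(1-q_k)$ and $\Sigma=\sum_{k}\tfrac{q_k}{1-q_k}$, both ranging over the coordinates \emph{other} than the one being split. The decisive choice is to split the \emph{largest} coordinate $q=q_{i^*}=\max_i q_i$: then $q_k\le q_{i^*}$ for every $k\neq i^*$, so $\tfrac{q_k}{1-q_k}\le\tfrac{q_k}{1-q_{i^*}}$ and hence $\Sigma=\sum_{k\neq i^*}\tfrac{q_k}{1-q_k}\le\tfrac{\sum_{k\neq i^*}q_k}{1-q_{i^*}}\le\tfrac{s-q_{i^*}}{1-q_{i^*}}\le 1$, using $s\le 1$ in the last step. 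Consequently, halving the largest coordinate never increases $\Pr[S\le 1]$.

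Now iterate: starting from the given configuration, repeatedly halve the current largest coordinate. Along this process $\Pr[S\le 1]$ is non‑increasing, while $\sum_i q_i^2$ strictly decreases by $q^2/2$ at each step, where $q$ is the halved (i.e.\ largest) coordinate. A short elementary argument then forces $\sum_i q_i^2\to 0$: it is non‑increasing and bounded below, so the per‑step decrements are summable, hence the halved coordinate tends to $0$, hence the maximum coordinate tends to $0$, and then $\sum_i q_i^2\le(\max_i q_i)\,s\to 0$. By Le Cam's inequality, $d_{\texttt{TV}}\bigl(\mathrm{law}(S),\Pois(s)\bigr)\le\sum_i q_i^2\to 0$, so $\Pr[S\le 1]$ converges along the process to $\Pr[\Pois(s)\le1]=e^{-s}(1+s)$. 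Combined with monotonicity, the original configuration satisfies $\Pr[S\le 1]\ge e^{-s}(1+s)$, which is the claim.

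The one genuinely non‑obvious point — and the main obstacle — is the choice to halve the \emph{largest} coordinate: that is precisely what makes the sign factor $\Sigma-1$ automatically $\le 0$ under the hypothesis $s\le 1$, whereas splitting an arbitrary coordinate can strictly \emph{increase} $\Pr[S\le 1]$ (e.g.\ splitting a small coordinate while a large one remains). Everything else is routine: the exact per‑step change formula is a two‑line computation, the convergence $\sum_i q_i^2\to 0$ is elementary, and Le Cam's bound is classical.
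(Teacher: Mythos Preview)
Your proof is correct and takes a genuinely different route from the paper's. Both arguments start from the closed form $\Pr[S\le 1]=\prod_i(1-q_i)\bigl(1+\sum_i\tfrac{q_i}{1-q_i}\bigr)$ and both are ``smoothing toward Poisson'' arguments, but the mechanics differ. The paper considers three candidate replacement vectors (average two coordinates; merge two into one; push one to $1$) and shows by case analysis that one of them weakly increases $Q_2^{ind}$, concluding that the extremum is attained when all nonzero $q_i$ are equal to $s/m$; it then does a calculus step on the resulting one‑parameter function of $m$, using the estimate $\ln(1+x)\le \tfrac{x(6+x)}{6+4x}$ to show the maximum is at $m\to\infty$. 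Your halving of the \emph{largest} coordinate is a single, always‑applicable move whose sign $\Sigma-1\le 0$ falls out immediately from $s\le 1$; you then bypass the calculus entirely via Le Cam's bound. Your approach is cleaner and more probabilistic, and the observation that choosing the largest coordinate is what forces $\Sigma\le 1$ is the right insight. The paper's approach, in exchange, pins down the finite‑$m$ extremal configuration explicitly (all nonzero $q_i$ equal), which your argument does not; but for the claim as stated that extra information is not needed.
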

\begin{proof}
Denote $\vq = \left (\Quani[1]\left (\tau\right ),\ldots,\Quani[n]\left (\tau\right)\right)$. Then by Lemma~\ref{lem: ind_prob} $\Qtwoind\left(\vq\right)=\Qtwoind\left(\tau\right)$ is as follows.  
\begin{equation*}
    \Qtwoind\left (\vq\right ) = 1 - \left (\prod_{i = 1}^n\left (1 - \Quani\left (\tau\right )\right )\right )\cdot\left (1 + \sum_{i = 1}^n\left (\frac{\Quani\left (\tau\right )}{1 - \Quani\left (\tau\right )}\right )\right ).
\end{equation*}
It turns out that $\Qtwoind(\vq)$ takes maximum when some elements in $\vq$ are zeros and the rest are all equal.
\begin{claim}
\label{clm: zero_or_equal}
$\Qtwo\left (\vq\right )$ is maximized when exactly $m$ of $\Quani=\sumtau / \mm $ and the rest are zeros. I.e.,
\begin{equation}
\label{eq: \Qtwo_m}
\Qtwoind\left (\tau \right ) \leq \max_{m\le n}\left\{ 1 - \left (1 - \frac{\sumtau}{\mm }\right )^\mm \left (1 + \frac{\mm\cdot \sumtau}{\mm  - \sumtau}\right )\right \}.
\end{equation}
\end{claim}
\begin{proof}

We apply the idea of successive replacement of elements. If the vector $\vq$ does not satisfy the condition in the claim, then we consider three other vectors $\vq$ and show that $\Qtwoind(\vq)$ increases for at least one of them.
Let $\vq_1 = \left (\frac{q_1\left (\tau\right ) + q_2\left (\tau\right )}{2}, \frac{q_1\left (\tau\right ) + q_2\left (\tau\right )}{2}, q_3\left (\tau\right ),\ldots\right )$. 
If $q_1\left (\tau\right ) + q_2\left (\tau\right ) < 1$, then let $\vq_2 = \left (q_1\left (\tau\right ) + q_2\left (\tau\right ), 0, q_3\left (\tau\right ),\ldots\right )$. Otherwise, let $\vq_3 = \left (q_1\left (\tau\right ) + q_2\left (\tau\right )  - 1, 1, q_3\left (\tau\right ),\ldots\right )$.
Then
\begin{align*}
\Qtwoind\left (\vq_1\right ) - \Qtwoind\left (\vq\right ) &= 
\left (\prod_{i \ge 3}\left (1 - \Quani\left (\tau\right )\right )\right)\cdot
\left (1 - \sum_{i = 3}^n\left (\frac{q_i}{1 - q_i}\right )\right )\cdot\left (\left (\frac{q_1 + q_2}{2}\right )^2 - q_1q_2\right ),\\
\Qtwoind\left (\vq_2\right ) - \Qtwoind\left (\vq\right ) &= 
\left (\prod_{i \ge 3}\left (1 - \Quani\left (\tau\right )\right )\right)\cdot
\left (\sum_{i = 3}^n\left (\frac{q_i}{1 - q_i}\right ) - 1\right )\cdot q_1q_2,\\
\Qtwoind\left (\vq_3\right ) - \Qtwoind\left (\vq\right ) &= 
\left (\prod_{i \ge 3}\left (1 - \Quani\left (\tau\right )\right )\right)\cdot
\left (\sum_{i = 3}^n\left (\frac{q_i}{1 - q_i}\right ) - 1\right )\cdot\left (1 - q_1\right )\left (1 - q_2\right ).
\end{align*}
Now observe that if $\Qtwoind\left (\vq_1\right ) - \Qtwoind\left (\vq\right ) \le 0$, then $\left (1 - \sum_{i = 3}^n\left (\frac{q_i}{1 - q_i}\right )\right ) \ge 0$. But then either $\Qtwoind\left (\vq_2\right ) - \Qtwoind\left (\vq\right ) \ge 0$, or $\Qtwoind\left (\vq_3\right ) - \Qtwoind\left (\vq\right ) \ge 0$. 
Hence, either $\Qtwoind(\vp_1)>\Qtwoind(\vp)$, or one of 
$\Qtwoind(\vp_2)$ and $\Qtwoind(\vp_3)$ is 
is larger than or equal to the original $\Qtwoind(\vp)$. I.e., if a vector $\vp\in[0,1]^n$ with $\sum_{i\in[n]}\Quani=\sumtau$ maximizes $\Qtwoind(\vp)$ and has at least two non-equal positive coordinates, then we should be able to find another such $\vp$ that still maximizes $\Qtwoind(\vp)$ and either has more $0$ coordinates than original $\vp$ or has the same number of $0$ coordinates and more $1$ coordinates. After a finite number of steps, it should not be possible to make any such modifications, which concludes the proof of Claim~\ref{clm: zero_or_equal}. 
\end{proof}

Now we return to the proof of Claim~\ref{lem: upper_bound_two_events}.
By Claim~\ref{clm: zero_or_equal} it suffices to optimize $\mm$. 
We regard $\mm\ge 1$ as a real number and denote the function from~\eqref{eq: \Qtwo_m} as
$f(\mm) \eqdef 1 - \left (1 - \frac{\sumtau}{\mm }\right )^\mm \left (1 + \frac{\mm \sumtau}{\mm  - \sumtau}\right ).$    
Maximization of $f(\mm)$ is equivalent to minimization of $g(\mm)=\ln(1-f(\mm))=\mm\cdot\ln\left(1-\frac{\sumtau}{m}\right)+\ln\left(1+\frac{m\cdot\sumtau}{m-\sumtau}\right)$. We can verify that derivative of $g'(m)\le 0$ for any $\mm\geq 1$. Indeed,
\begin{align*}
\frac{\d g\left (\mm\right )}{\d \mm} &= \ln\left(1-\frac{\sumtau}{\mm}\right) - \frac{\sumtau(\sumtau\cdot\mm - 2\sumtau+\mm)}{(\sumtau - \mm)(\sumtau\cdot\mm - \sumtau +\mm)}  \\
&\leq \frac{-\frac{\sumtau}{\mm}\left(6-\frac{\sumtau}{\mm}\right)}{6-\frac{4\sumtau}{\mm}}- \frac{\sumtau(\sumtau\cdot\mm - 2\sumtau+\mm)}{(\sumtau - \mm)(\sumtau\cdot\mm - \sumtau +\mm)}\\
&=-\frac{\sumtau^2(3\mm^2(1 - \sumtau) + (\mm - 1)\sumtau^2)}{2\mm(3\mm - 2\sumtau)(\mm - \sumtau)(\sumtau(\mm - 1) + \mm)}\\
&\leq 0,
\end{align*}
where the first inequality is known upper estimate of logarithm $\ln(1 + x)\leq \frac{x(6 + x)}{6 + 4x}$ for $x > -1$; the second inequality is easy to verify for $\sumtau\leq 1, \mm\geq 1$. As a result, $\Qtwoind(\tau)$ achieves its maximum for  $\sumtau \leq 1$, when $\mm\rightarrow +\infty$. I.e., $f(\mm)\le 1-e^{-\sumtau}(1+\sumtau)=\lim_{\mm\to\infty} f(\mm)$,  which completes the proof of Lemma~\ref{lem: upper_bound_two_events}.
\end{proof}

\subsection{Proof of Claim~\ref{clm: range2}}
\label{app: upper_q2ind}
We first use bounds~\eqref{eq: quantile} in the expression~\eqref{eq: jin} for $\Qtwoind(\tau)$ and regard each $\probi=\Quani(1)$ as optimization variable to maximize $\Qtwoind(\tau)$.
We apply the method of successive replacement of elements to prove Claim~\ref{clm: range2}. I.e., by maintaining the $\momone=\sum_{i\in[n]}\probi$ we modify a few of $(p_i)$ at a time so that the expression \eqref{eq: jin} only increases; we eventually narrow down the set of $(p_i)_{i\in[n]}$ that could maximize $\Qtwoind(\tau)$ and verify that the desired bound holds for any vector of probabilities $(p_i)_{i\in[n]}$ in this set.
Specifically, we prove that if $\tau\ge 1 + \frac{2\moww}{\left (2 - \moww\right )^2}$, then we can keep merging variables until at most two non-zero $\probi$ remain.
\begin{proof}
We start by applying upper bounds~\eqref{eq: quantile} on the quantiles $\Quani(\tau)\le\ubqi(\tau)\eqdef\frac{p_i}{\left (1 - p_i\right )\cdot \tau + p_i}$ for $\tau>1$, where $p_i=\Quani(1)$ (more accurately we have $\probi = \lim\limits_{x\rightarrow 1^+}\Quani(x)$ as we only care about $\tailrev$ in this proof). 
Then $\Qtwoind(\tau)=\Qtwoind(\vq)\le\Qtwoind(\vubq)$, where
$\vubq\eqdef(\ubqi(\tau))_{i\in[n]}$ coordinate-wise dominates vector of probabilities $\vq$, by definition of $\Qtwoind(\vq)$.

We also have a closed form~\eqref{eq: jin} for the $\Qtwoind(\vubq)$, which after simple algebraic transformation can be written as follows
\begin{equation*}
   \forall \tau \ge 1,\quad \Qtwoind(\tau)=\Qtwoind(\vq)\le
   \Qtwoind(\vubq)
   = 1 - \frac{1 + \sum\limits_{i = 1}^n\frac{\probi}{\tau\left (1 - \probi\right )}}{\prod\limits_{i = 1}^n\left (1 + \frac{\probi}{\tau\left (1 - \probi\right )}\right )}.
\end{equation*}

We consider this upper bound on $\Qtwoind(\tau)$ as the following optimization problem with respect to $\vp = (p_1, \ldots ,p_n)$, where $\moww = \sum_{i = 1}^n\probi\le 1$. We shall prove that its value is not more than our desired upper bound.
\begin{align}
\label{eq: optimization_C}
&\max_{\vp}\quad 1 - \frac{1 + \sum\limits_{i = 1}^n\frac{\probi}{\tau\left (1 - \probi\right )}}{\prod\limits_{i = 1}^n\left (1 + \frac{\probi}{\tau\left (1 - \probi\right )}\right )}\\
& \begin{array}{r@{\quad}r@{}l@{\quad}l}
s.t.&\sum\limits_{i = 1}^n \probi&= \moww,&\notag\\
 &\probi&\geq 0.&i = 1, \ldots, n\notag
\end{array}
\end{align}

To show this, we apply the method of successive replacement of elements. First, we try to merge two $\probi$, i.e., changing $\vp = \left (\probone, \probtwo,\ldots p_n\right )$ to $\left (\probone + \probtwo, 0,\ldots p_n\right )$ as much as we can while (weakly) increasing the objective. 
Specifically, the increase of this objective after such merging transformation is equivalent to the following inequality
\begin{equation}
\label{eq: sroe}
   \frac{1 + \alfa  + \frac{\probone}{\tau\left (1 - \probone\right )} + \frac{\probtwo}{\tau\left (1 - \probtwo\right )}}{\left (1 + \frac{\probone}{\tau\left (1 - \probone\right )}\right )\left (1 + \frac{\probtwo}{\tau\left (1 - \probtwo\right )}\right )}
 \ge 
\frac{1 + \alfa  + \frac{\probone + \probtwo}{\tau\left (1 - \probone - \probtwo\right )}}{1 + \frac{\probone + \probtwo}{\tau\left (1 - \probone - \probtwo\right )}},
\end{equation}
where $\alfa \eqdef \sum_{i = 3}^n\frac{\probi}{\tau\left (1 - \probi\right )}$ is the respective term for vector $\vp$ not affected by the transformation. Our goal will be to minimize the number $\mm$ of strictly positive elements in the vector $\vp$.
\begin{claim}
Let $\ttt \eqdef \probone + \probtwo\leq \moww$, then
the sufficient and necessary condition for inequality ($\ref{eq: sroe}$) is that:
\begin{equation}
\label{eq: sn_condition}
    \tau\cdot\Big(\left (\alfa\cdot \tau - 1\right )\left (1 - \ttt\right ) + \alfa\cdot \tau\Big)\geq \ttt + \alfa\cdot \tau\left (1 - \ttt\right ).
\end{equation}
\end{claim}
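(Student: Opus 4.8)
The plan is to treat~\eqref{eq: sroe} as a purely algebraic inequality in $p_1,p_2$, $\ttt=p_1+p_2$, $\tau$ and $\alfa$, and to reduce both~\eqref{eq: sroe} and~\eqref{eq: sn_condition} to one and the same inequality that is linear in $\alfa$. First I would set $a_i\eqdef\frac{p_i}{\tau(1-p_i)}$ for $i=1,2$ and $b\eqdef\frac{\ttt}{\tau(1-\ttt)}$, so that~\eqref{eq: sroe} reads $\frac{1+\alfa+a_1+a_2}{(1+a_1)(1+a_2)}\ge\frac{1+\alfa+b}{1+b}$. Since $\tau>1$ and $p_1,p_2\in(0,1)$ with $\ttt<1$, every denominator appearing here is strictly positive, so I may clear denominators without flipping the inequality. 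Writing $1+a_i=\frac{u_i}{\tau(1-p_i)}$ with $u_i\eqdef\tau(1-p_i)+p_i$ and $1+b=\frac{w}{\tau(1-\ttt)}$ with $w\eqdef\tau(1-\ttt)+\ttt$, this turns~\eqref{eq: sroe} into the polynomial inequality
\[
\tau w\big[\tau(1-p_1)(1-p_2)(1+\alfa)+p_1(1-p_2)+p_2(1-p_1)\big]\ \ge\ u_1u_2\big[\tau(1-\ttt)(1+\alfa)+\ttt\big].
\]

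The computational core is the identity $u_1u_2=\tau w+p_1p_2(\tau-1)^2$, obtained by expanding $u_1u_2$ and collecting terms via $p_1+p_2=\ttt$. Feeding this in, together with $(1-p_1)(1-p_2)=1-\ttt+p_1p_2$ and $p_1(1-p_2)+p_2(1-p_1)=\ttt-2p_1p_2$, makes a whole block equal to $\tau w[\tau(1-\ttt)(1+\alfa)+\ttt]$ cancel from both sides; what remains factors as $p_1p_2$ times a bracket, and dividing by $p_1p_2>0$ (legitimate, since we only ever merge two strictly positive coordinates) shows that~\eqref{eq: sroe} is equivalent to $\tau w\big(\tau(1+\alfa)-2\big)\ge(\tau-1)^2\big(\tau(1-\ttt)(1+\alfa)+\ttt\big)$.

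To finish, I would use $\tau(1-\ttt)(1+\alfa)+\ttt=w+\alfa\tau(1-\ttt)$ and the numerical identity $\tau(\tau-2)-(\tau-1)^2=-1$; the difference of the two sides of the last inequality then collapses to $\alfa\tau\big(\tau w-(\tau-1)^2(1-\ttt)\big)-w$, and a one-line expansion gives $\tau w-(\tau-1)^2(1-\ttt)=(1-\ttt)(2\tau-1)+\tau\ttt=2\tau-1-\ttt(\tau-1)$. Hence~\eqref{eq: sroe} is equivalent to
\[
\alfa\tau\big(2\tau-1-\ttt(\tau-1)\big)\ \ge\ w\ =\ \tau(1-\ttt)+\ttt.
\]
On the other hand, expanding~\eqref{eq: sn_condition} directly yields $\alfa\tau^2(2-\ttt)-\tau(1-\ttt)\ge\ttt+\alfa\tau(1-\ttt)$, i.e.\ $\alfa\tau\big(\tau(2-\ttt)-(1-\ttt)\big)\ge\tau(1-\ttt)+\ttt$; since $\tau(2-\ttt)-(1-\ttt)=2\tau-1-\ttt(\tau-1)$, this is verbatim the same inequality. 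So both~\eqref{eq: sroe} and~\eqref{eq: sn_condition} are equivalent to this single linear-in-$\alfa$ inequality, which proves the claim.

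I expect the only real difficulty to be bookkeeping: pinning down the identity $u_1u_2=\tau w+p_1p_2(\tau-1)^2$ and confirming that after the big cancellation exactly one factor of $p_1p_2$ remains in front of the claimed bracket. Everything after that is mechanical. I would also remark that when $p_1p_2=0$ the ``merge'' transformation is trivial and~\eqref{eq: sroe} holds with equality, so the stated equivalence is only needed --- and only claimed --- in the generic case $p_1,p_2>0$.
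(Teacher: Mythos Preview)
Your proof is correct and follows essentially the same route as the paper: clear the (positive) denominators in~\eqref{eq: sroe}, expand, observe that everything not divisible by $p_1p_2$ cancels, divide by $p_1p_2>0$, and rearrange the remaining expression into a condition linear in $\alfa$ that coincides with~\eqref{eq: sn_condition}. Your substitutions $u_i,w$ and the identity $u_1u_2=\tau w+p_1p_2(\tau-1)^2$ make the cancellation especially transparent, but the underlying argument is the same as the paper's.
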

\begin{proof}
After multiplying both sides in \eqref{eq: sroe} by their denominators we get
\begin{align*}
    \eqref{eq: sroe}&\Leftrightarrow \left (1 + \alfa  + \frac{\probone + \probtwo}{\tau\left (1 - \probone - \probtwo\right )}\right )\left (1 + \frac{\probone}{\tau\left (1 - \probone\right )}\right )\left (1 + \frac{\probtwo}{\tau\left (1 - \probtwo\right )}\right )\\
    &\quad \leq \left ( 1 + \alfa  + \frac{\probone}{\tau\left (1 - \probone\right )} + \frac{\probtwo}{\tau\left (1 - \probtwo\right )}\right )\left (1 + \frac{\probone + \probtwo}{\tau\left (1 - \probone - \probtwo\right )} \right ) \\
    &\Leftrightarrow \frac{\probone\probtwo\left(\alfa\tau^2\left (\ttt - 2 \right ) - \left (\alfa + 1 \right )\left ( \ttt - 2\right )\tau + \ttt\right )}{\left (1 - \probone \right )\left (1-\probtwo \right )\left ( 1-\ttt\right )\tau}\leq 0\\
    &\Leftrightarrow \alfa\tau^2\left (\ttt - 2 \right ) - \left (\alfa + 1 \right )\left ( \ttt - 2\right )\tau + \ttt\leq 0\\
    &\Leftrightarrow \tau\left (\left (\alfa \tau - 1\right )\left (1 - \ttt\right ) + \alfa \tau\right )\geq \ttt + \alfa \tau\left (1 - \ttt\right ).
\end{align*}
\end{proof}
We have the following upper and lower bounds on the product $\alfa\cdot\tau$ from the inequality~\eqref{eq: sn_condition}.
\begin{claim} [Bounds on $\alfa\cdot \tau$]
\label{clm: sv_bound}
Assume that $\vp$ has exactly $\mm$ strictly positive elements. Then
\begin{equation*}
1 - \ttt\leq \frac{\left (\mm - 2\right )\left (1 - \ttt\right )}{\left (\mm - 2\right ) - \left (1 - \ttt\right )}\leq \alfa\cdot \tau = \sum_{i = 3}^\mm\frac{\probi}{1 - \probi}\leq \frac{1 - \ttt}{\ttt}.
\end{equation*}
\end{claim}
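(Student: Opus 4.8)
The plan is to regard $\alpha\cdot\tau=\sum_{i=3}^{m}\frac{p_i}{1-p_i}$ as a function of the $m-2$ strictly positive masses (after relabelling so that exactly $p_1,\dots,p_m>0$), with total $\sigma:=\sum_{i=3}^{m}p_i=1-t$ where $t=p_1+p_2$ — here using that $\Qtwoind(\vubq)$ is coordinatewise nondecreasing in the $p_i$, so in the optimization one may take the total mass to be $1$. All three inequalities then drop out of convexity of $h(x)=\tfrac{x}{1-x}$ on $[0,1)$ together with a one-line per-term estimate; I would carry them out in the order: upper bound, convexity lower bound, then its clean weakening $\alpha\tau\ge 1-t$.

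Upper bound: each $i\ge 3$ has $p_i\le\sigma=1-t$, so $1-p_i\ge t$ and $\frac{p_i}{1-p_i}\le\frac{p_i}{t}$; summing, $\alpha\tau\le\frac1t\sum_{i=3}^{m}p_i=\frac{1-t}{t}$. Middle inequality: $h$ is convex, so by Jensen the quantity $\sum h(p_i)$ over nonnegative reals of fixed total $\sigma$ is smallest when all $m-2$ of them are equal, giving $\alpha\tau=\sum_{i=3}^{m}h(p_i)\ge(m-2)\,h\!\bigl(\tfrac{\sigma}{m-2}\bigr)=\frac{(m-2)\sigma}{(m-2)-\sigma}=\frac{(m-2)(1-t)}{(m-2)-(1-t)}$; this is well defined since $\sigma=1-t<1\le m-2$ (using $m\ge 3$ and $t=p_1+p_2>0$). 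Leftmost inequality: $\frac{(m-2)(1-t)}{(m-2)-(1-t)}\ge 1-t$ because the denominator is positive and $\le m-2$.

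Everything here is elementary, so the step I would guard against is the degenerate regime rather than any real obstacle: the claim is meaningful only with $m\ge 3$ (for $m=2$ the index set $\{3,\dots,m\}$ is empty, $\alpha\tau=0$, and the stated lower bound fails), and the mass normalization must actually be in force for $1-t$ and $\tfrac{1-t}{t}$ to be the exact endpoints. Its purpose afterwards is that solving the merge criterion \eqref{eq: sn_condition} for $\tau$ yields a threshold that is a decreasing function of $A:=\alpha\tau$ whose denominator is $A(2-t)-(1-t)$; so the lower bound $A\ge 1-t$ — indeed the sharper $A\ge\frac{(m-2)(1-t)}{(m-2)-(1-t)}$, which for $m=3$ already coincides with the upper bound $\frac{1-t}{t}$ — keeps that denominator positive and supplies the quantitative input needed to certify \eqref{eq: sn_condition} throughout the stated range of $\tau$, while the upper bound $A\le\frac{1-t}{t}$ feeds the complementary estimates at the same place.
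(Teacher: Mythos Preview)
Your proof is correct and matches the paper's approach: both exploit the convexity of $h(x)=\tfrac{x}{1-x}$ on $[0,1)$, with Jensen giving the lower bound and mass concentration (or your equivalent per-term estimate $1-p_i\ge t$) giving the upper bound. Your explicit normalization $\sum_i p_i=1$ simply makes visible the assumption the paper invokes implicitly when it writes $p_3=1-t$ and $p_i=\tfrac{1-t}{m-2}$.
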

\begin{proof}
The function $f\left (x\right ) = \frac{x}{1 - x}$ is a convex on $[0, 1)$. Thus the sum of $\frac{p_i}{1-p_i}$ achieves maximum when all $p_i=0$ for $i>3$ and $p_3 = 1 - \ttt$. This gives us the claimed upper bound on $\alfa\cdot \tau$. 
The sum achieves its minimum, when all $\probi = \frac{1 - \ttt}{\mm - 2}$, letting $\mm$ converge to $+\infty$ leads to the desired lower bound.
\end{proof}

The following claim gives a sufficient condition for the transformation to increase the objective.
\begin{claim}
\label{clm: m_gen}
For 
any
$\tau\geq 1 + \frac{\ttt}{\left (1 - \ttt\right )^2}$ the inequality \eqref{eq: sn_condition} holds true.
\end{claim}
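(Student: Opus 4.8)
The plan is to collapse \eqref{eq: sn_condition} to a single-variable inequality and close it by a monotonicity argument. Write $x \eqdef \alfa\cdot\tau = \sum_{i=3}^{\mm}\frac{\probi}{1-\probi}$, so that Claim~\ref{clm: sv_bound} gives $x \geq 1-\ttt$. Expanding the bracket on the left of \eqref{eq: sn_condition}, $(x-1)(1-\ttt)+x = (2-\ttt)x-(1-\ttt)$, so the inequality reads $\tau\big((2-\ttt)x-(1-\ttt)\big)\geq \ttt+(1-\ttt)x$. The coefficient of $\tau$ on the left satisfies $(2-\ttt)x-(1-\ttt)\geq(2-\ttt)(1-\ttt)-(1-\ttt)=(1-\ttt)^2\geq 0$ by the bound $x\geq 1-\ttt$, and it is strictly positive unless $\ttt=1$ (in which case the hypothesis $\tau\geq 1+\ttt/(1-\ttt)^2$ is vacuous). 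Hence it suffices to prove $\tau\geq R(x)$, where $R(x)\eqdef\dfrac{\ttt+(1-\ttt)x}{(2-\ttt)x-(1-\ttt)}$.

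Next I would show $R$ is strictly decreasing in $x$ wherever its denominator is positive — in particular for all $x\geq 1-\ttt$ — so that the worst case is the smallest admissible $x$. By the quotient rule the numerator of $R'(x)$ is $(1-\ttt)\big((2-\ttt)x-(1-\ttt)\big)-(2-\ttt)\big(\ttt+(1-\ttt)x\big)$; the terms in $x$ cancel and this collapses to the constant $-(1-\ttt)^2-\ttt(2-\ttt)=-1$. Therefore $R'(x)<0$ and $R(x)\leq R(1-\ttt)$ for every admissible $x$.

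Finally I would evaluate the worst case $x=1-\ttt$: the numerator becomes $\ttt+(1-\ttt)^2=1-\ttt+\ttt^2$ and the denominator becomes $(2-\ttt)(1-\ttt)-(1-\ttt)=(1-\ttt)^2$, so $R(1-\ttt)=\frac{1-\ttt+\ttt^2}{(1-\ttt)^2}=1+\frac{\ttt}{(1-\ttt)^2}$. Consequently any $\tau\geq 1+\frac{\ttt}{(1-\ttt)^2}$ satisfies $\tau\geq R(x)$, which is exactly \eqref{eq: sn_condition}. I do not expect a real obstacle here: the only points needing care are verifying that the denominator $(2-\ttt)x-(1-\ttt)$ stays positive on the relevant interval (so that $R$ is well defined and the cross-multiplication is legitimate) — which is precisely what the lower bound $x\geq 1-\ttt$ from Claim~\ref{clm: sv_bound} supplies — and disposing of the degenerate endpoint $\ttt=1$ separately.
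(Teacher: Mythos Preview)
Your proof is correct and uses the same key input as the paper (the lower bound $\alfa\tau\ge 1-\ttt$ from Claim~\ref{clm: sv_bound}), but the route is organized differently. The paper argues in two quick steps: first it drops the nonnegative term $\alfa\tau-(1-\ttt)$ to get $\text{LHS}\ge \tau^2\alfa(1-\ttt)$, and then rearranges the hypothesis $(1-\ttt)^2(\tau-1)\ge\ttt$ together with $\alfa\tau\ge 1-\ttt$ to conclude $\tau^2\alfa(1-\ttt)\ge \ttt+\alfa\tau(1-\ttt)$. You instead solve for the threshold directly: writing $x=\alfa\tau$, the condition becomes $\tau\ge R(x)=\frac{\ttt+(1-\ttt)x}{(2-\ttt)x-(1-\ttt)}$, you show $R$ is strictly decreasing (the numerator of $R'$ collapses to the constant $-1$), and evaluate the worst case $R(1-\ttt)=1+\frac{\ttt}{(1-\ttt)^2}$. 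Your argument is slightly longer but has the bonus of showing that the hypothesis on $\tau$ is exactly tight at $x=1-\ttt$; the paper's chain of bounds is shorter but does not make this sharpness visible.
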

\begin{proof}
According to Claim~\ref{clm: sv_bound}, we have $\alfa\cdot \tau\geq 1 - \ttt$. Thus
\begin{equation*}
 \text{LHS of \eqref{eq: sn_condition}} =
 \tau\Big(\alfa\cdot \tau\left (1 - \ttt\right ) - \left (1 - \ttt\right ) + \alfa\cdot \tau\Big) \geq \tau^2\alfa \left (1 - \ttt\right ).   
\end{equation*}
By the assumption of the Claim~\ref{clm: m_gen}, we have $\left(1 - \ttt\right )^2\left (\tau - 1\right )\geq \ttt$. Since $\alfa\cdot\tau\ge 1-t$, we have $\alfa\cdot\tau\cdot(1-\ttt)(\tau-1)\ge(1-t)^2(\tau-1)\ge t$. By rearranging the terms in inequality $\alfa\cdot\tau\cdot(1-\ttt)(\tau-1)\ge t$, we get $\tau^2\alfa (1 - \ttt)\geq \ttt + \alfa \tau(1 - \ttt)$. Hence, we get~\eqref{eq: sn_condition} as its LHS is at least $\tau^2\alfa (1 - \ttt)\geq \ttt + \alfa \tau(1 - \ttt)=\text{RHS of \eqref{eq: sn_condition}}$.
\end{proof}

If the number of non-zero elements $\mm > 3$, then for at least one pair of them (say $p_1$ and $p_2$), their sum $p_1+p_2=\ttt\leq \frac{\momone}{2}$. Then according to Claim~\ref{clm: m_gen}, we can merge them and (weakly) increase our objective. 
Hence, we can assume that $\mm\le 3$.
 When $\mm=3$ we can get an improvement to the sufficient condition of Claim~\ref{clm: m_gen}. Namely, we have   
\begin{claim}
\label{clm: m_3}
If there are exactly three non-zero elements in $\vp$ (i.e., $\mm = 3$), then $\tau\geq \frac{\left (1 - \ttt\right )^2 + \ttt^2}{2\left (1 - \ttt\right )^2}$ is a sufficient and necessary condition for inequality (\ref{eq: sn_condition}).
\end{claim}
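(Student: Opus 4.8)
The plan is to collapse the general merging inequality \eqref{eq: sn_condition} into an explicit inequality in $\tau$ and $\ttt$ alone, using the fact that when $\vp$ has exactly three positive coordinates the quantity $\alfa\cdot\tau$ is no longer free. First I would invoke Claim~\ref{clm: sv_bound}: for $\mm=3$ the lower bound $\frac{(\mm-2)(1-\ttt)}{(\mm-2)-(1-\ttt)}$ and the upper bound $\frac{1-\ttt}{\ttt}$ on $\alfa\cdot\tau$ coincide, both equalling $\frac{1-\ttt}{\ttt}$ (there is a single leftover coordinate, forced to be $1-\ttt$). Hence $\alfa\cdot\tau=\frac{1-\ttt}{\ttt}$ exactly, and \eqref{eq: sn_condition} becomes a pure relation between $\tau$ and $\ttt$.

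Next I would substitute $\alfa\cdot\tau=\frac{1-\ttt}{\ttt}$ into \eqref{eq: sn_condition}. Writing $a=\frac{1-\ttt}{\ttt}$, we have $a-1=\frac{1-2\ttt}{\ttt}$, so the left-hand side $\tau\big((a-1)(1-\ttt)+a\big)$ simplifies to $\tau\cdot\frac{(1-\ttt)\big((1-2\ttt)+1\big)}{\ttt}=\tau\cdot\frac{2(1-\ttt)^2}{\ttt}$, while the right-hand side $\ttt+a(1-\ttt)$ equals $\frac{\ttt^2+(1-\ttt)^2}{\ttt}$. Cancelling the common positive factor $\frac{1}{\ttt}$ turns \eqref{eq: sn_condition} into $2\tau(1-\ttt)^2\ge \ttt^2+(1-\ttt)^2$, that is, $\tau\ge \frac{(1-\ttt)^2+\ttt^2}{2(1-\ttt)^2}$. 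Every step only multiplies or divides by strictly positive quantities ($\ttt>0$ and $(1-\ttt)^2>0$; note $\ttt=\probone+\probtwo<1$ since $\mm=3$ forces a third positive coordinate), so the chain is an equivalence and the stated bound on $\tau$ is both sufficient and necessary.

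I do not expect a genuine obstacle: the argument is a short algebraic specialization once Claim~\ref{clm: sv_bound} pins $\alfa\cdot\tau$. The two things worth checking are (i) that the degenerate endpoint $\ttt\to 1$ is really excluded, so dividing by $\ttt$ and $(1-\ttt)$ is legitimate, and (ii) that this threshold genuinely improves on the generic condition $\tau\ge 1+\frac{\ttt}{(1-\ttt)^2}$ of Claim~\ref{clm: m_gen}; the latter is the one-line computation $\frac{(1-\ttt)^2+\ttt^2}{2(1-\ttt)^2}=\frac12+\frac{\ttt^2}{2(1-\ttt)^2}\le 1+\frac{\ttt}{(1-\ttt)^2}$, which clears denominators to $0\le 1$. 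This sharper threshold for $\mm=3$ is precisely what is needed so that, after picking the pair with smallest sum, the final merge still takes place inside the admissible range $\tau\ge 1+\frac{2\moww}{(2-\moww)^2}$ used in Claim~\ref{clm: range2}.
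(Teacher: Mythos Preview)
Your proposal is correct and follows exactly the same approach as the paper: when $\mm=3$, Claim~\ref{clm: sv_bound} pins $\alfa\cdot\tau=\frac{1-\ttt}{\ttt}$, and substituting this into \eqref{eq: sn_condition} reduces the inequality to the stated threshold on $\tau$. The paper's proof is the one-line version of your algebra; your additional sanity checks (i) and (ii) are correct but go beyond what the claim itself requires.
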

\begin{proof}
If $\mm = 3$, then $\alfa\cdot \tau = \frac{1 - \ttt}{\ttt}$ and inequality \eqref{eq: sn_condition} is equivalent to our assumption $\tau\geq \frac{\left (1 - \ttt\right )^2 + \ttt^2}{2\left (1 - \ttt\right )^2}$.
\end{proof}

Now, if $\mm = 3$, there must exist two non-zero elements (say $p_1$ and $p_2$) whose sum $\ttt\leq \frac{2\moww}{3}$. It is easy to verify that they will satisfy conditions of Claim~\ref{clm: m_3}, since function $\frac{\left (1 - \ttt\right )^2 + \ttt^2}{2\left (1 - \ttt\right )^2}$ is increasing in $\ttt$ and
\begin{equation*}
\forall~ 0 \leq \moww\leq 1,\quad \frac{\left(1 - \frac{2\moww}{3}\right)^2 + \left(\frac{2\moww}{3}\right)^2}{2\left(1 - \frac{2\moww}{3}\right)^2}\leq 1 + \frac{2\moww}{\left(2 - \moww\right)^2}\le \tau.
\end{equation*}

Therefore, $\mm\le 2$. If there are exactly two non-zero elements in $\vp$ (i.e., $\mm = 2$), then these elements must be equal.
\begin{claim}
\label{clm: m_2}
If $\mm=2$ and $p_1,p_2\ne 0$, then objective function is maximized when 
$\probone = \probtwo = \frac{\moww}{2}$.
\end{claim}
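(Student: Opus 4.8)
The plan is to reduce the objective of \eqref{eq: optimization_C} restricted to the case $\mm=2$ to a single scalar quantity and then show that quantity is optimized at the symmetric point. I would write $a_i \eqdef \frac{p_i}{\tau(1-p_i)}$ for the two non-zero coordinates $p_1,p_2$; the remaining coordinates equal $0$ and contribute $0$ to the sum and $1$ to the product in \eqref{eq: optimization_C}, so the objective equals $1-\frac{1+a_1+a_2}{(1+a_1)(1+a_2)}$. Expanding $(1+a_1)(1+a_2)=1+a_1+a_2+a_1a_2$ rewrites this as $\frac{a_1a_2}{1+a_1+a_2+a_1a_2}=1-\frac{1}{1+h}$, where $h\eqdef\frac{a_1a_2}{1+a_1+a_2}$. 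Since $h\mapsto 1-\frac{1}{1+h}$ is increasing in $h\ge 0$, maximizing the objective over $p_1+p_2=\moww$ is equivalent to maximizing $h$.

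Next I would establish two monotonicity facts under the constraint $p_1+p_2=\moww\le 1$. First, $p\mapsto\frac{p}{1-p}$ is convex on $[0,1)$, so $a_1+a_2=\frac1\tau\bigl(\frac{p_1}{1-p_1}+\frac{p_2}{1-p_2}\bigr)$ attains its minimum over the constraint at $p_1=p_2=\frac{\moww}{2}$. Second, writing $P\eqdef p_1p_2$, one has $(1-p_1)(1-p_2)=1-\moww+P$, hence $a_1a_2=\frac{P}{\tau^2(1-\moww+P)}$, which is increasing in $P$ precisely because $1-\moww\ge0$; and $P=p_1p_2$ is maximized at $p_1=p_2=\frac{\moww}{2}$ by the AM--GM inequality, so $a_1a_2$ is maximized there as well.

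Combining the two facts, at $p_1=p_2=\frac{\moww}{2}$ the numerator $a_1a_2$ of $h$ attains its maximum while the denominator $1+a_1+a_2$ attains its minimum, so $h$ is maximized there, and therefore so is the objective of \eqref{eq: optimization_C}; this is exactly Claim~\ref{clm: m_2}. The argument is essentially routine algebra plus one convexity observation and one use of AM--GM, so there is no serious obstacle; the only point requiring care is the direction of monotonicity in the second fact, which hinges on $\moww\le 1$ — and this holds throughout this part of the proof since $\moww=\sum_i p_i\le 1$.
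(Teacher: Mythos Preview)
Your proof is correct and takes essentially the same approach as the paper. The paper rewrites the objective as $\frac{1}{\tau\bigl((\frac{1}{p_1}-1)(\frac{1}{p_2}-1)\tau+\frac{1}{p_1}+\frac{1}{p_2}-2\bigr)+1}$ and observes that both symmetric expressions in the denominator are minimized at $p_1=p_2=\frac{\moww}{2}$; your substitution $a_i=\frac{p_i}{\tau(1-p_i)}$ and the reduction to maximizing $h=\frac{a_1a_2}{1+a_1+a_2}$ amount to the same two observations (maximize the product, minimize the sum), and you are slightly more explicit than the paper in noting that the product monotonicity step uses $\moww\le 1$.
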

\begin{proof}
The objective function is 
\begin{equation*}
1 - \frac{1 + \frac{\probone}{\tau\left (1 - \probone\right )} + \frac{\probtwo}{\tau\left (1 - \probtwo\right )}}{\left (1 + \frac{\probone}{\tau\left (1 - \probone\right )}\right )\left (1 + \frac{\probtwo}{\tau\left (1 - \probtwo\right )}\right )} = \frac{1}{\tau\left (\left (\frac{1}{\probone} - 1\right )\left (\frac{1}{\probtwo} - 1\right )\tau + \frac{1}{\probone} + \frac{1}{\probtwo} - 2\right ) + 1}.
\end{equation*}
As both $\left (\frac{1}{\probone} - 1\right )\left (\frac{1}{\probtwo} - 1\right )$ and $\frac{1}{\probone} + \frac{1}{\probtwo}$ are minimized when $\probone = \probtwo = \frac{\momone}{2}$ the Claim follows. 
\end{proof}

Finally, when $\mm=1$, then our objective is simply equal to $0$. Therefore, the objective is maximized for $\mm=2$. By Claim~\ref{clm: m_2} it is when $\probone = \probtwo = \frac{\moww}{2}$, which gives exactly the desired upper bound of $\left(\frac{2 - \moww}{\moww}\cdot \tau + 1\right)^{-2}$ on $\Qtwoind(\tau)$. This concludes the proof of the main Claim~\ref{clm: range2}.
\end{proof}

\section{Proof of Claim~\ref{clm: max>p}}
\label{app: case3}
\begin{claim}
\label{clm: appD}
If $\max_i\{\Quani\left (\ww\right )\} \ge \pp$ and $\reserve\le\ww=1$,
then
$\frac{\tailrev}{\corerev} \leq \frac{2\pp^2 - 2\pp + 1}{\pp^3}\cdot \frac{e}{e - 1}.
$
\end{claim}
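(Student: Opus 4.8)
The plan is to bound $\tailrev$ from above and $\corerev$ from below, both as explicit one–variable functions of $p:=\max_i\Quani(\ww)$, exploiting that after the scaling $\ww=1$ the large bidder carries essentially all of the expected number of bidders that can sit above threshold $1$. Assume without loss of generality that bidder $1$ attains the maximum, so $\Quani[1](1)=p\ge\pp>\tfrac12$; write (with the $\lim_{\eps\to0^+}$ convention for $\Quani(1+\eps)$ used in Claim~\ref{clm: T}, which also disposes of atoms at $1$) $\hmoww:=\sum_{i\ge 2}\Quani(1)$, so that $\hmoww=1-p$ (since $\sum_{i\in[n]}\Quani(1)=1$ at $\ww=1$), and note $\hmoww\le p$. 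As in the proof of Lemma~\ref{lem: q2lb2} put $\hQoneind(\tau):=\Prx[\distind]{|\{i\ge2:\vali\ge\tau\}|\ge1}$ and $\hQtwoind(\tau):=\Prx[\distind]{|\{i\ge2:\vali\ge\tau\}|\ge2}$.

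First I would lower bound $\corerev$. Unwinding the definitions, $\corerev$ is exactly the expected revenue of $\AR(\reserve)$ with every value capped at $1$, i.e. $\corerev=\int_0^1\Prx[\distind]{\text{revenue}\ge\tau}\,\d\tau$. Conditioning on $\{\vali[1]\ge1\}$, an event of probability $p$: the item is sold (because $\reserve\le\ww=1\le\vali[1]$) and the second–highest value is at least $\min\{1,\max_{i\ge2}\vali\}$, so the capped revenue is at least $\min\{1,\max_{i\ge2}\vali\}$; by independence, $\corerev\ge p\int_0^1\hQoneind(\tau)\,\d\tau$. Applying the $\tau\le1$ half of Lemma~\ref{lem: reg_quantile} to each $\disti$, $i\ge2$ (with $p_i=\Quani(1)$), together with the concavity of $p_i\mapsto\frac{p_i}{(1-p_i)\tau+p_i}$ for $\tau\le1$ from Claim~\ref{pro: convex}, the quantity $\sum_{i\ge2}\Quani(\tau)$ is minimized, over regular marginals of bidders $2,\dots,n$ with $\sum_{i\ge2}\Quani(1)=1-p$, by a single other bidder of quantile $1-p$; hence $\sum_{i\ge2}\Quani(\tau)\ge\frac{1-p}{p\tau+1-p}$ for $\tau\le1$, so that $\hQoneind(\tau)\ge1-e^{-\sum_{i\ge2}\Quani(\tau)}\ge1-e^{-\frac{1-p}{p\tau+1-p}}\ge(1-\tfrac1e)\frac{1-p}{p\tau+1-p}$ (the last step using $1-e^{-x}\ge(1-\tfrac1e)x$ on $[0,1]$, as in Lemma~\ref{lem: q2lb2}, and $\frac{1-p}{p\tau+1-p}\le1$). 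Integrating over $\tau\in[0,1]$,
\[
\corerev\;\ge\;\Bigl(1-\tfrac1e\Bigr)(1-p)\ln\frac1{1-p}.
\]

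Next I would upper bound $\tailrev=\int_1^\infty\Qtwoind(\tau)\,\d\tau$. Splitting the event $\{|\{i:\vali\ge\tau\}|\ge2\}$ on whether $\vali[1]\ge\tau$, a union bound and independence give $\Qtwoind(\tau)\le\Quani[1](\tau)\,\hQoneind(\tau)+\hQtwoind(\tau)$. For the first summand, the $\tau\ge1$ half of Lemma~\ref{lem: reg_quantile} gives $\Quani[1](\tau)\le\frac p{(1-p)\tau+p}$, and the $\tau\ge1$ convexity of $p_i\mapsto\frac{p_i}{(1-p_i)\tau+p_i}$ from Claim~\ref{pro: convex} (maximizing this time) gives $\hQoneind(\tau)\le\sum_{i\ge2}\Quani(\tau)\le\frac{1-p}{p\tau+1-p}$, so
\[
\int_1^\infty\Quani[1](\tau)\,\hQoneind(\tau)\,\d\tau\;\le\;\int_1^\infty\frac{p(1-p)}{\bigl((1-p)\tau+p\bigr)\bigl(p\tau+1-p\bigr)}\,\d\tau\;=\;\frac{p(1-p)}{2p-1}\,\ln\frac p{1-p},
\]
the integral converging because both linear factors equal $1$ at $\tau=1$ and the leading logarithmic terms of its antiderivative cancel. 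For the second summand I would apply Claim~\ref{clm: T} (equivalently Claims~\ref{clm: range1} and~\ref{clm: range2}) to the subinstance of bidders $2,\dots,n$, whose expected count above $1$ is $\hmoww=1-p$, obtaining $\int_1^\infty\hQtwoind(\tau)\,\d\tau\le g(1-p)$ where $g(s):=\frac{2s\bigl(1-e^{-s}(1+s)\bigr)}{(2-s)^2}+\frac{s^2}4$. Altogether $\tailrev\le\frac{p(1-p)}{2p-1}\ln\frac p{1-p}+g(1-p)$.

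Combining the two bounds yields
\[
\frac\tailrev\corerev\;\le\;\frac{\frac{p(1-p)}{2p-1}\ln\frac p{1-p}+g(1-p)}{\bigl(1-\tfrac1e\bigr)(1-p)\ln\frac1{1-p}}\;=:\;F(p),
\]
and the remaining task is the one–variable verification that $F(p)\le\frac{2\pp^2-2\pp+1}{\pp^3}\cdot\frac e{e-1}$ for all $p\in[\pp,1)$ (with $\pp=0.674$). This should be a routine calculus/numerical check: $F$ is decreasing in $p$, $F(p)\to\frac e{e-1}$ as $p\to1^-$ — which matches the limit of the claimed bound and is exactly what pins down the constant there — and $F(\pp)\approx2.14$ lies safely below $\frac{2\pp^2-2\pp+1}{\pp^3}\cdot\frac e{e-1}\approx2.90$. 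The hard part of the whole argument is the control of $\int_1^\infty\Quani[1](\tau)\hQoneind(\tau)\,\d\tau$: this is precisely where one needs regularity rather than merely pairwise independence — without the quadratic tail decay of quantiles guaranteed by Lemma~\ref{lem: reg_quantile} this integral would diverge as $p\to1$ — and the logarithm it produces must be played off against the logarithm in the lower bound on $\corerev$, so that $F(p)$ stays bounded (and in fact tends to $\tfrac e{e-1}$) in the delicate regime $p\to1$. Atoms exactly at threshold $1$ are handled by the same limiting convention used in Claim~\ref{clm: T}.
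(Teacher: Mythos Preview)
Your approach is correct in its essentials but takes a genuinely different route from the paper, and it leaves the last step only partially justified.

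\textbf{How the paper does it.} The paper's key device is the symmetry identity
\[
\int_0^1\frac{\pp(1-\pp)}{((1-\pp)\tau+\pp)(\pp\tau+1-\pp)}\,\d\tau
=\int_1^{+\infty}\frac{\pp(1-\pp)}{((1-\pp)\tau+\pp)(\pp\tau+1-\pp)}\,\d\tau.
\]
They then prove two pointwise bounds on $\Qtwoind(\tau)$: for $\tau>1$ (via the merging argument of Appendix~\ref{app: upper_q2ind} in the range $\tau\ge 1+\frac{1-\pp}{\pp^2}$ and a direct $\hQoneind$ bound for smaller $\tau$) they show $\Qtwoind(\tau)\le\frac{2\pp^2-2\pp+1}{\pp^3}\cdot\frac{\pp(1-\pp)}{((1-\pp)\tau+\pp)(\pp\tau+1-\pp)}$; for $\tau\le 1$ they show $\Qtwoind(\tau)\ge(1-\tfrac1e)\cdot\frac{\pp(1-\pp)}{((1-\pp)\tau+\pp)(\pp\tau+1-\pp)}$. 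Integrating and invoking the identity gives the stated ratio \emph{exactly}, for arbitrary $\pp$, with no residual numerical step.

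\textbf{How yours differs.} You never invoke the integral symmetry. Instead you (i) split $\Qtwoind(\tau)\le\Quani[1](\tau)\hQoneind(\tau)+\hQtwoind(\tau)$ and integrate each piece in closed form, and (ii) lower bound $\corerev$ by conditioning on $\{\vali[1]\ge 1\}$. Both decompositions are sound (your $\corerev$ step is in fact the same inequality the paper uses, just integrated against a slightly different lower envelope). What this buys you is a sharper, $p$-dependent bound; what it costs you is that the constant $\frac{2\pp^2-2\pp+1}{\pp^3}$ no longer falls out of the argument, so you are left with a one-variable inequality $F(p)\le\frac{2\pp^2-2\pp+1}{\pp^3}\cdot\frac{e}{e-1}$ to verify over $[\pp,1)$.

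\textbf{The residual gap.} You assert that $F$ is decreasing and check only $F(\pp)$ at $\pp=0.674$; neither monotonicity nor the inequality for general $\pp$ is actually established. Since Claim~\ref{clm: appD} is stated for arbitrary $\pp$, this is incomplete as written. A clean fix is to prove the pointwise inequality $F(p)\le\frac{2p^2-2p+1}{p^3}\cdot\frac{e}{e-1}$ (note $p$, not $\pp$, on the right), which combined with the monotonicity of $p\mapsto\frac{2p^2-2p+1}{p^3}$ yields the claim for all $\pp\le p$; your numerical checks suggest this holds with room to spare, but it still needs an argument. Alternatively, adopting the paper's integral-symmetry trick eliminates the issue entirely.
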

\begin{proof}
We will make use of the following simple calculation of integrals for any $\pp\in(0,1)$.
\begin{multline}
\label{eq: fact}
\int_0^1\frac{\pp \left (1 - \pp\cdot \right )}{\left (\left (1 - \pp \right )x + \pp \right )\left (\pp\cdot x + \left (1 - \pp \right )\right )}\d x = \frac{\pp(\pp - 1)\log(\frac{1}{\pp} - 1)}{2\pp - 1}\\
=\int_1^{+\infty}\frac{\pp \left (1 - \pp \right )}{\left (\left (1 - \pp \right )\cdot x + \pp \right )\left (\pp\cdot x + \left (1 - \pp \right )\right )}\d x .
\end{multline}
Hence, to prove our Claim, it suffices to show the following bounds.
\begin{align}
&\tailrev = \int_1^{+\infty}\Qtwoind\left (\tau\right )\d \tau\leq \frac{2\pp ^2 - 2\pp  + 1}{\pp ^3}\cdot \int_1^{+\infty}\frac{\pp \left (1 - \pp \right )}{\left (\left (1 - \pp \right )\cdot \tau + \pp \right )\left (\pp\cdot \tau + \left (1 - \pp \right )\right )}\d \tau,\label{eq: claim1}\\
&\corerev = \reserve \cdot \Qoneind\left (\reserve \right ) + \int_\reserve ^\ww\Qtwoind\left (\tau\right )\d \tau\geq \left (1 - \frac{1}{e}\right )\cdot \int_0^1\frac{\pp \left (1 - \pp \right )}{\left (\left (1 - \pp \right )\cdot \tau + \pp \right )\left (\pp\cdot \tau + \left (1 - \pp \right )\right )}\d \tau. 
\label{eq: claim2}
\end{align}

We first prove~\eqref{eq: claim1}. To this end, we derive upper bounds on $\Qtwoind(\tau)$ inside the integral for two separate ranges of $\tau$:
$\tau\geq 1 + \frac{1 - \pp }{\pp ^2}$ and $\tau\in(1,1 + \frac{1 - \pp }{\pp ^2})$. For $\tau\geq 1 + \frac{1 - \pp }{\pp ^2}$, we write
the same optimization program~\eqref{eq: optimization_C} as in  
Claim~\ref{clm: range2} in Appendix~\ref{app: upper_q2ind} and try the 
the same method of successive replacement of elements by
merging elements inside probability vector $\vp$. 
Now we have a useful extra condition from Claim~\ref{clm: appD} that $\max_i\{\Quani\left (\ww\right )\} \ge \pp$. Specifically, if  
$\vp$ has at least $\mm\ge 3$ non-zero entries, then the sum of two smaller non-zero elements $t=p_1+p_2\le \moww - \pp\le 1-\pp$. Hence, 
conditions of Claim~\ref{clm: m_gen} are satisfied when $\tau\ge 1 + \frac{1 - \pp }{\pp ^2}\ge 1+\frac{t}{(1-t)^2}$, which means that we can keep merging elements in $\vp$ until only two positive $\probi$ remain: $p_1\geq p$ and $p_2=\moww-p_1\le 1-p_1$. 
Then $\Qtwoind (\tau)=\Quani[1](\tau)\cdot\Quani[2](\tau)$,
where $\Quani[1](\tau)\le\frac{p_1}{(1-p_1)\cdot\tau+p_1}$ and $\Quani[2](\tau)\le\frac{1-p_1}{p_1\cdot\tau+1-p_1}$ by Lemma~\ref{lem: reg_quantile} for $\tau>1$. I.e.,
\begin{equation}
\label{eq: app_Tail_bound_2}
    \Qtwoind (\tau)\leq \frac{\pp_1 \left (1 - \pp_1 \right )}{\left (\left (1 - \pp_1 \right )\cdot \tau + \pp_1 \right )\left (\pp_1\cdot \tau + \left (1 - \pp_1 \right )\right )}
    \leq \frac{\pp \left (1 - \pp \right )}{\left (\left (1 - \pp \right )\cdot \tau + \pp \right )\left (\pp\cdot \tau + \left (1 - \pp \right )\right )},
\end{equation}
where the second inequality holds as it is monotone decreasing on $[\frac{1}{2}, 1]$.

For the interval $1< \tau< 1 + \frac{1 - \pp }{\pp ^2}$, 
similar to Lemma~\ref{lem: q2lb2}, we may assume that the first bidder $i=1$ has the largest quantile $\Quani(\tau)$. Let us also 
define
$
\widehat{Q}_1^{ind}\left (\tau\right )\eqdef \Prx[\vals\sim\distind]{|\{i>1: \vali\geq \tau\}|\geq 1}.
$
Then 
\begin{align*}
\Qtwoind\left (\tau\right )&\leq \widehat{Q}_1^{ind}\left (\tau\right ) \\
&\leq \sum_{i = 2}^n\Quani\left (\tau\right )\tag{Union bound}\\
&\leq \sum_{i = 2}^n\frac{\probi}{\left (1 - \probi\right )\cdot \tau + \probi}\tag{Lemma~\ref{lem: reg_quantile} for $\tau\ge 1$}\\
&\leq \frac{1 - \pp}{\pp\cdot\tau + \left (1 - \pp\right )}.\tag{Claim~\ref{pro: convex} for $\tau\ge 1$}
\end{align*}

Furthermore, when $\tau\leq 1 + \frac{1 - \pp }{\pp ^2}$ we have
\begin{multline}
\label{eq: app_Tail_bound_1}
 \Qtwoind(\tau )   \leq \frac{1 - \pp}{\pp\cdot\tau + \left (1 - \pp\right )} =\frac{\left (\left (1 - \pp \right )\cdot \tau + \pp \right )}{\pp}\cdot \frac{\pp (1 - \pp)}{\left((1 - \pp)\cdot \tau + \pp \right )(\pp\cdot \tau + (1 - \pp))}\\
 \leq \frac{2\pp^2 - 2\pp + 1}{\pp^3} \cdot \frac{\pp \left (1 - \pp \right )}{((1 - \pp )\cdot \tau + \pp ) (\pp\cdot \tau + \left (1 - \pp \right ))},
\end{multline}
where the last inequality holds as $\frac{\left (1 - \pp\right )\cdot \left (1 + (1 - \pp)/\pp^2\right ) + \pp}{\pp} = \frac{2\pp^2 - 2\pp + 1}{\pp^3}$ and $\tau\le 1 + \frac{1 - \pp }{\pp ^2}$.

Now we can conclude the proof of~\eqref{eq: claim1} by combining \eqref{eq: app_Tail_bound_1} and~\eqref{eq: app_Tail_bound_2} upper bounds on $\Qtwoind(\tau)$.
\begin{multline*}
\tailrev = 
\int_1^{1 + \frac{1 - \pp }{\pp ^2}}\Qtwoind\left (\tau\right )\d \tau
+\int_{1 + \frac{1 - \pp }{\pp ^2}}^{+\infty}\Qtwoind\left (\tau\right )\d \tau
\le 
\frac{2\pp ^2 - 2\pp  + 1}{\pp ^3}\cdot
\int_1^{1 + \frac{1 - \pp }{\pp ^2}}\frac{\pp \left (1 - \pp \right )}{\left (\left (1 - \pp \right )\cdot \tau + \pp \right )\left (\pp\cdot \tau + \left (1 - \pp \right )\right )}\d \tau+\\
1\cdot\int_{1 + \frac{1 - \pp }{\pp ^2}}^{+\infty}\frac{\pp \left (1 - \pp \right )}{\left (\left (1 - \pp \right )\cdot \tau + \pp \right )\left (\pp\cdot \tau + \left (1 - \pp \right )\right )}\d \tau
\le
\frac{2\pp ^2 - 2\pp  + 1}{\pp ^3}\cdot \int_1^{+\infty}\frac{\pp \left (1 - \pp \right )}{\left (\left (1 - \pp \right )\cdot \tau + \pp \right )\left (\pp\cdot \tau + \left (1 - \pp \right )\right )}\d \tau,
\end{multline*}
where the last inequality holds true, since $\frac{2\pp^2 - 2\pp + 1}{\pp^3}\geq 1$.
We finish the proof of Claim~\ref{clm: appD} by proving~\eqref{eq: claim2}. 
\begin{align*}
\Qtwoind(\tau)&\ge q_1(\tau)\cdot \widehat{Q}_1^{ind}\left (\tau\right )\\
&\geq q_1\left (\tau\right ) \cdot \left (1 - \frac{1}{e}\right )\cdot \sum_{i = 2}^n\Quani\left (\tau\right )\tag{Correlation gap}\\
&\geq\left (1 - \frac{1}{e}\right )\cdot \frac{\pp }{\left (1- \pp \right )\cdot \tau + \pp }\cdot \sum_{i = 2}^n\frac{\probi}{\left (1- \probi\right )\cdot \tau + \probi}\tag{Lemma~\ref{lem: reg_quantile} for $\tau\le 1$}\\
&\geq \left (1 - \frac{1}{e}\right )\cdot\frac{\pp \left (1 - \pp \right )}{\left (\left (1 - \pp \right )\cdot \tau + \pp \right )\left (\pp\cdot \tau + \left (1 - \pp \right )\right )}.\tag{Claim~\ref{pro: convex} for $\tau\le 1$}
\end{align*}
Taking the integral over $\tau\in(0,1)$ on both sides leads to ~\eqref{eq: claim2}. We complete the proof of Claim~\ref{clm: appD} 
by combining equations (\ref{eq: fact}-\ref{eq: claim2}).
\end{proof}

\end{document}